 \newcommand*{\Lettrine}[2]{%
   \lettrine[loversize=0.2,lines=2,nindent=0em]%
   {\textsf{#1}}{\textsf{#2}}}
\providecommand{\mathbold}[1]{\bm{#1}}
\DeclareMathAlphabet\mathbb{U}{msb}{m}{n}
\renewcommand{\sectionmark}[1]%
{\markright{\MakeUppercase{\thesection.\ #1}}}
\setlist[1]{labelindent=\parindent,font=\sffamily} 
\definecolor{dark-gray}{gray}{0.3}
\definecolor{dkgray}{rgb}{.4,.4,.4}
\definecolor{dkblue}{rgb}{0,0,.5}
\definecolor{medblue}{rgb}{0,0,.75}
\definecolor{rust}{rgb}{0.5,0.1,0.1}
\newtheoremstyle{myThm} 
    {\topsep}                    
    {\topsep}                    
    {\itshape}                   
    {}                           
    {\sffamily\bfseries}                   
    {.}                          
    {.5em}                       
    {}  
\newtheoremstyle{myRem} 
    {\topsep}                    
    {\topsep}                    
    {}                   
    {}                           
    {\sffamily}                   
    {.}                          
    {.5em}                       
    {}  
\newtheoremstyle{myDef} 
    {\topsep}                    
    {\topsep}                    
    {}                   
    {}                           
    {\sffamily\bfseries}                   
    {.}                          
    {.5em}                       
    {}  
\theoremstyle{myThm}
\newtheorem{theorem}{Theorem}[section]
\newtheorem{lemma}[theorem]{Lemma}
\newtheorem{proposition}[theorem]{Proposition}
\newtheorem{fact}[theorem]{Fact}
\newtheorem{bigthm}{Theorem}
\theoremstyle{myRem}
\theoremstyle{myDef}
\newtheorem{definition}[theorem]{Definition}
\let\originalleft\left
\let\originalright\right
\renewcommand{\left}{\mathopen{}\mathclose\bgroup\originalleft}
\renewcommand{\right}{\aftergroup\egroup\originalright}
\numberwithin{equation}{section} 
\renewcommand{\phi}{\varphi}
\newcommand{\eps}{\varepsilon}
\renewcommand{\mid}{\mathrel{\mathop{:}}}
\newcommand{\nat}{{\smash{\natural}}}
\newcommand{\sumnl}{\sum\nolimits}
\newcommand{\defeq}{:=}
\newcommand{\qtq}[1]{\quad\text{#1}\quad} 
\DeclareMathOperator*{\minimizeOp}{minimize}
\newcommand{\subjectto}{\quad\text{\textnormal{subject to}}\quad}
\newcommand{\term}[1]{\emph{#1}}
\newcommand{\econst}{\mathrm{e}}
\newcommand{\zerovct}{\vct{0}} 
\newcommand{\Id}{\mathbf{I}}
\newcommand{\ball}[1]{\mathsf{B}_{#1}}
\newcommand{\orth}[1]{\mathsf{O}_{#1}}
\providecommand{\mathbbm}{\mathbb} 
\newcommand{\R}{\mathbbm{R}}
\newcommand{\polar}{\circ}
\newcommand{\abs}[1]{\left\vert {#1} \right\vert}
\newcommand{\argmin}{\operatorname*{arg\; min}}
\newcommand{\Prob}{\mathbbm{P}}
\newcommand{\Expect}{\operatorname{\mathbb{E}}}
\newcommand{\normal}{\mathrm{N}}
\newcommand{\vct}[1]{\mathbold{#1}}
\newcommand{\mtx}[1]{\mathbold{#1}}
\newcommand{\hvct}[1]{\hat{\vct{#1}}}
\newcommand{\tvct}[1]{\tilde{\vct{#1}}}
\newcommand{\transp}{t}
\newcommand{\psinv}{\dagger}
\newcommand{\range}{\mathcal{R}}
\newcommand{\nullity}{\mathcal{N}}
\newcommand{\Proj}{\ensuremath{\bm{\Pi}}} 
\newcommand{\norm}[1]{\left\Vert {#1} \right\Vert}
\newcommand{\lone}[1]{\norm{#1}_{{\ell_1}\!}}
\newcommand{\linf}[1]{\norm{#1}_{\ell_\infty}}
\newcommand{\sone}[1]{\norm{#1}_{S_1}}
\newcommand{\enorm}[1]{\norm{#1}}
\newcommand{\enormsq}[1]{\enorm{#1}^2}
\newcommand{\cC}{\mathcal{C}} 
\newcommand{\Desc}{\mathcal{D}}
\newcommand{\sdim}{\delta}
\DeclareMathOperator{\relint}{relint}
\newcommand{\pInd}{\mathbf{1}} 
\author{Michael B.\ McCoy and Joel A.\ Tropp}
\date{September 28, 2013}
\title{The achievable performance of convex demixing}
\begin{document} \maketitle 
\begin{abstract} 
 Demixing is the problem of identifying multiple structured signals from a superimposed, undersampled, and noisy observation. This work analyzes a general framework, based on convex optimization, for solving demixing problems.  When the constituent signals follow a generic incoherence model, this analysis leads to precise recovery guarantees.  These results admit an attractive interpretation: each signal possesses an intrinsic degrees-of-freedom parameter, and demixing can succeed if and only if the dimension of the observation exceeds the total degrees of freedom present in the observation. 
\end{abstract}

\section{Introduction}
\label{sec:introduction}

\Lettrine{D}{emixing} refers to the problem of extracting multiple informative signals from a single, possibly noisy and undersampled, observation.  One rather general model for a mixed observation \(\vct z_0 \in \R^m\) takes the form
\begin{equation}\label{eq:obs-model}
  \vct z_0 = \mtx A \left (\sumnl_{i=1}^n \mtx U_i\vct x_i^\nat + \vct w \right),
\end{equation}
where the constituents \((\vct x_i^\nat)_{i=1}^n\) are the unknown informative signals that we wish to find;  the  matrices \((\mtx U_i)_{i=1}^n\)  model the relative orientation of the constituent vectors; the  operator \(\mtx A \in \R^{m\times d}\) compresses the observation from \(d\) dimensions to \(m\le d\) dimensions;  and
 \(\vct w \in \R^d\) is unstructured noise.  We  assume that all elements appearing in~\eqref{eq:obs-model} are known except for  the constituents \((\vct x_i^\nat)_{i=1}^n\) and the noise \(\vct w\).

 Numerous applications of the model~\eqref{eq:obs-model} appear in modern data-intensive science. In imaging, for example, the informative signals can model features like stars and galaxies~\cite{StaDonCan:03}, while an undersampling operator accounts for known occlusions or missing data~\cite{ElaStaQue:05a,StuKupPop:12}. In graphical model selection, the data may consist of the sum of a sparse component that encodes causality structure and a confounding low-rank component that arises from unobserved latent variables~\cite{ChaParWil:10}. Similar mixed-signal models appear in robust statistics~\cite{CanLiMa:11,CheJalSan:13} and image processing~\cite{PenGanWri:12,WriGanMin:13}.   In every case, the  question of interest is
 \begin{equation*}
   \text{\emph{When is it possible to recover the constituents from the observation?}}
 \end{equation*}

This work answers  this question for a popular class of demixing procedures under a random model.  The analysis reveals that each constituent possesses a degrees-of-freedom parameter, and that these demixing procedures can succeed with high probability if and only if the total  number of measurements exceeds the total degrees of freedom.

In the next two subsections, we describe a well-known recipe that  converts \emph{a priori} structural information on the constituents \(\vct x_i^\nat\) into an convex optimization program suited for demixing~\eqref{eq:obs-model}.  Section~\ref{sec:rand-orient-model} motivates a random model that we use to study demixing, and  Section~\ref{sec:stat-dimens}  defines the degrees-of-freedom parameter \(\delta_i\).  The main result  appears in Section~\ref{sec:main-results-1}.

\subsection{Structured signals and  convex penalties}
\label{sec:struct-incoh}

In the absence of assumptions, it is impossible to reliably recover unknown vectors from a superposition of the form~\eqref{eq:obs-model}.  In order to have any hope of success, we must make use of domain-specific knowledge about the types of constituents making up our observation.  This knowledge often implies that our constituents belong to some set of highly-structured elements. Typical examples of these structured families include \emph{sparse vectors} and \emph{low-rank matrices}.
\begin{description}\setlength{\itemsep}{-2pt}
\item[\sffamily Sparse vectors]  A sparse vector has many entries equal to zero.  Sparse vectors regularly appear  in modern signal and data processing applications for a variety of reasons. Bandlimited communications signals, for example, are engineered to be sparse in the frequency domain.   The adjacency matrix of a sparse graph is sparse by definition. Piecewise smooth functions are nearly sparse in wavelet bases, so that many natural images exhibit sparsity in the wavelet domain~\cite[Sec.~9]{Mal:09}.  
\item[\sffamily Low-rank matrices] A matrix has low rank if many of its singular values are equal to zero.  Low-rank structure appears whenever the rows or columns of a matrix satisfy many nontrivial linear relationships. For example, strong correlations between predictors cause many statistical datasets to exhibit low-rank structure. Rank deficient matrices appear in a number of other areas,  including control theory~\cite[Sec.~6]{Faz:02},  video processing~\cite{CanLiMa:11},  and structured images~\cite{PenGanWri:12}.
\end{description}
Other types of structured families that appear in the literature include the family of sign vectors \(\{\pm 1\}^d \subset \R^d\)~\cite{ManRec:11},  nonnegative sparse vectors~\cite{DonTan:10}, block- and group-sparse vectors and matrices~\cite{RaoKre:98,MalCetWil:03}, and orthogonal matrices~\cite{ChaRecPar:12}.

In each of these cases, the structured family possesses an associated convex function that, roughly speaking, measures the amount of complexity of a signal with respect to the family~\cite{DeVTem:96,Tem:03,ChaRecPar:12}.  For sparse vectors and low-rank matrices, the natural penalty functions are the \(\ell_1\) norm and the Schatten 1-norm:
\begin{equation*}
  \lone{\vct x} := \sumnl_{i=1}^d\abs{x_i} \qtq{and} \sone{\vct X} := \sumnl_{i=1}^{p\wedge q} \sigma_i(\vct X),
\end{equation*}
where \(\sigma_i(\vct X)\) is the \(i\)th singular value of \(\vct X\) and the wedge \(\wedge\) denotes the minimum of two numbers. See~\cite[Sec.~2.2]{ChaRecPar:12} for additional examples as well as a principled approach to constructing  convex penalty functions.  These convex complexity measures form the building blocks of the demixing procedures that we study in this work. 

\subsection{A generic demixing framework}
\label{sec:gener-demix-fram}

Given an observation of the form~\eqref{eq:obs-model}, we desire a computational method for recovering  the constituents \(\vct x_i^\nat\).   We now describe a well-known framework that combines convex complexity measures into a convex optimization program that demixes a signal.  Specific instances of this recipe appear in numerous works~\cite{DonHuo:01,ChaSanPar:09,CheJalSan:13,PenGanWri:12}, and the general format described below is closely related to the work~\cite{McCTro:12,WriGanMin:13}. 

Assume that, for each constituent \(\vct x_i^\nat\), we have determined an appropriate convex complexity function \(f_i\).  For example, if  we suspect that the \(i\)th constituent \(\vct x_i^\nat\) is sparse, we may choose the \(f_i = \lone{\cdot}\), the \(\ell_1\) norm.   In the \emph{Lagrange formulation} of the demixing procedure, we combine the regularizers into a single master penalty function \(F_{\vct \lambda} \colon (\R^d)^n\to \R\) given by
\begin{equation*}
  F_{\vct \lambda}(\vct x_1,\dotsc,\vct x_{n-1},\vct x_n):= \sumnl_{i=1}^n \lambda_i f_i(\vct x_i),
\end{equation*}
where the weights  \(\lambda_i > 0\).  In this formulation, we minimize the master penalty \(F_{\vct \lambda}\) plus a Euclidean-norm penalty constraint that ensures consistency with our observation:
\begin{equation}\label{eq:noisy-demix-lag}
  \minimizeOp_{\vct x_i \in \R^d}\; F_{\vct \lambda}(\vct x_1,\dotsc, \vct x_{n-1},\vct x_n) +
  \enormsq{\mtx A^\psinv \left[ \mtx A \left(\sumnl_{i=1}^n\mtx U_i \vct x_i\right) - \vct z_0\right]}
\end{equation}
where \(\enormsq{\vct x}:= \langle \vct x,\vct x\rangle\) is the squared Euclidean norm.  We include the Moore--Penrose pseudoinverse \(\mtx A^\dagger\) in the consistency term to ensure that our recovery procedure is independent of the conditioning of \(\mtx A\).   This demixing procedure succeeds when an optimal point \((\tvct x_i)_{i=1}^n\) of~\eqref{eq:noisy-demix-lag}  provides a good approximation for the true constituents \((\vct x_i^\nat)_{i=1}^n\).  

Rather than restrict ourselves to specific choices of Lagrange parameters \(\vct \lambda\), we study whether it is possible to demix the constituents of \(\vct z_0\) using a method of the form~\eqref{eq:noisy-demix-lag} for the best choice of weights \(\vct\lambda\). To study this setting, we focus our analysis on the more powerful \emph{constrained formulation} of demixing:
\begin{equation}\label{eq:const-dmx}
  \begin{aligned}
    \minimizeOp_{\vct x_i \in \R^d}\quad &\enormsq{\mtx A^\psinv \left(\mtx A \sumnl_{i=1}^n \mtx U_i\vct x_i - \vct z_0 \right)}  \\ \subjectto& f_i(\vct x_i) \le f_i(\vct x_i^\nat) \qtq{for each} i =
    1,\dotsc,n-1,n.
 \end{aligned}
\end{equation}
The theory of Lagrange multipliers indicates that solving the constrained demixing program~\eqref{eq:const-dmx} is essentially equivalent to solving the Lagrange problem~\eqref{eq:noisy-demix-lag} with the best choice of weights \(\vct \lambda\).  There are some subtle issues in this equivalence, notably the fact that~\eqref{eq:noisy-demix-lag} can have strictly more optimal points than the corresponding constrained problem~\eqref{eq:const-dmx}.  We refer to~\cite[Sec.~28]{Roc:70} for further details.

We wish to interrogate whether an optimal point \((\hvct x_i)_{i=1}^n\) of~\eqref{eq:const-dmx} forms a good approximation for the true constituents \((\vct x_i^\nat)_{i=1}^n\).  For this study, we distinguish two situations.
\begin{description}\setlength{\itemsep}{-1pt}
\item[Exact recovery] In the noiseless setting where \(\vct w = \zerovct\), can we guarantee that the constrained demixing program~\eqref{eq:const-dmx} recovers the constituents exactly?
\item[Stable recovery] For nonzero noise \(\vct w \ne \zerovct\), can we guarantee that any solution to the constrained demixing problem~\eqref{eq:const-dmx} provides a good approximation to the  constituents \(\vct x_i^\nat\)?
\end{description}
The following definition makes these notions precise.  
\begin{definition}[Exact and stable recovery]  \label{def:exact-stable}
  We say that \emph{exact recovery is achievable} in~\eqref{eq:const-dmx} if the tuple \((\vct x_i^\nat)_{i=1}^n\) is the unique optimal point of~\eqref{eq:const-dmx} when \(\vct w = \zerovct\).  We say that \emph{stable recovery is achievable} if there exists number \(C>0\), such that for any optimal point \((\hvct x_i)_{i=1}^n\) of~\eqref{eq:const-dmx}, we have
  \begin{equation}\label{eq:stab-recov-defn}
    \enorm{\hvct x_i -\vct x_i^\nat} \le C \enorm{\vct w} \qtq{for all} i = 1,\dotsc,n-1,n.
  \end{equation}
  The value of \(C\)  may depend on all  problem parameters except \(\vct w\).
\end{definition}
The goal of this work is to describe when exact and stable recovery are achievable for the constrained demixing program~\eqref{eq:const-dmx}.

\subsection{A generic model for incoherence}
\label{sec:rand-orient-model}

A necessary requirement to identify signals from a superimposed observation is that the constituent signals must look different.  The superposition of two sparse vectors, for example, is still sparse; \emph{a priori}  knowledge that both vectors are sparse provides little guidance in determining how to allocate the nonzero elements between the two constituents.  On the other hand, a sparse vector looks very different from a superposition of a small number of sinusoids.  This structural diversity makes distinguishing spikes from sines tractable~\cite{Tro:08}.   We extend this idea to more general families by saying that structured vectors that look very different from one another are \emph{incoherent}.  

 In this work, we follow~\cite{DonHuo:01,McCTro:12} and model incoherence by assuming that the families are \emph{randomly oriented} relative to one another.   The set of all possible orientations on \(\R^d\) is the \emph{orthogonal group} \(\orth{d}\) consisting of all \(d\times d\) orthogonal matrices:
\begin{equation*}
  \orth{d} :=   \bigl\{\mtx U \in \R^{d\times d} \mid \mtx U^\transp \mtx U = \Id\bigr\}.
\end{equation*}
The orthogonal group is a compact group, and so it possesses a unique invariant probability measure called the \emph{Haar measure}~\cite[Ch.~44]{Fre:03}.  We model incoherence among the constituents \(\vct x_i^\nat\) by drawing the orientations \(\mtx U_i\) from the Haar measure. 
\begin{definition}[Random orientation model]  
  We say that the matrices \((\mtx U_i)_{i=1}^d\) satisfy the \emph{random orientation model} if the matrices \(\mtx U_1, \dotsc, \mtx U_{n-1},\mtx U_n\)  are drawn independently from the Haar measure on~\(\orth{d}\).  
\end{definition}

The random orientation model is analogous to  random measurements models that appear in the compressed sensing literature~\cite{CanTao:05,Don:06}.  In this work, however, we find that orienting the structures randomly through the rotations \(\mtx U_i\) provides sufficient randomness for the analysis. We have no need to assume that the measurement matrix \(\mtx A\) is random.

\subsection{Descent cones and the statistical dimension}
\label{sec:stat-dimens}

Our study of the exact and stable recovery capabilities of the constrained demixing program~\eqref{eq:const-dmx} relies on a geometric analysis of the optimality conditions of the convex program~\eqref{eq:const-dmx}.  The key player in this analysis is the following  cone that captures the local behavior of a convex function at a point (Figure~\ref{fig:descent}).
\begin{figure}[t!]
  \centering
  \includegraphics[width=0.4\columnwidth]{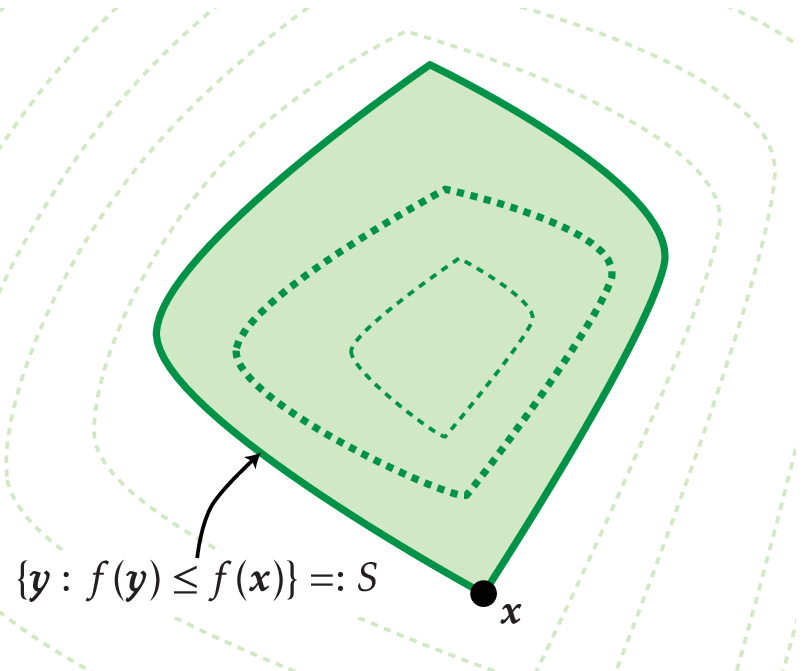}
  \hspace{1cm}
  \includegraphics[width=0.4\columnwidth]{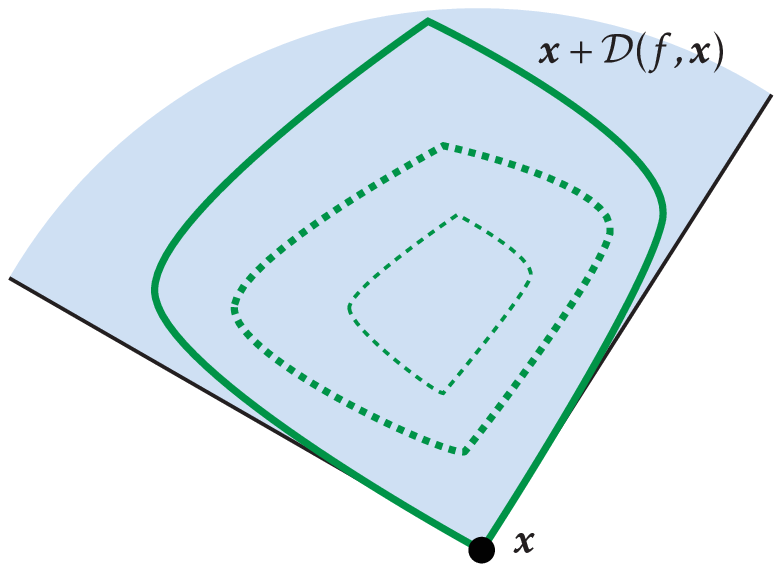}
  \caption[Descent cone.]{{\sffamily Descent cone.} { \sf [Left]} The sublevel set \(S\) {\it(shaded)} of a convex function \(f\) {\it(level lines)} at a point \(\vct x\). { \sf [Right]} The descent cone \(\Desc(f,\vct x)\) {\it(shaded)} is the cone generated by  \(S\) at \(\vct x\). }
  \label{fig:descent}
\end{figure}
\begin{definition}[Descent cone]
  The descent cone \(\Desc(f,\vct x)\) of a convex function \(f\) at a point \(\vct x\) is the cone generated by the perturbations about \(\vct x\) that do not increase \(f\):
  \begin{equation}\label{eq:desc-cone-def}
    \Desc(f,\vct x) :=  \{\vct y \mid f(\vct x + \tau \vct y)\le f(\vct x) \text{ for some }\tau >0\}.
  \end{equation}
\end{definition}
\noindent Intuitively, a convex penalty function \(f\) will be more effective at finding a structured vector \(\vct x^\nat\) if most perturbations around \(\vct x^\nat\) increase the value of \(f\), i.e., if the descent cone \(\Desc(f,\vct x^\nat)\) is small.  Our next definition provides a summary parameter that lets us quantify the size of a convex cone.
\begin{definition}[Statistical dimension]
  Let \( C \subset \R^d\) be a closed convex cone, and define the Euclidean projection \(\Proj_C\colon \R^d \to C\)  onto \(C\) by
  \begin{equation*}
    \Proj_{C}(\vct x) := \argmin_{\vct y \in C} \enormsq{\vct y-\vct x}.
  \end{equation*}
  The \emph{statistical dimension} \(\sdim(C)\) of \(C\) is given by the average value
  \begin{equation}\label{eq:sdim-def}
    \sdim(C) := \Expect_{\vct g}\left[ \enormsq{\Proj_C(\vct g)}\right],
  \end{equation}
  where \(\vct g\sim \normal(\zerovct,\Id)\) is a standard Gaussian vector. %
\end{definition}
The statistical dimension satisfies a number of properties that make it an appropriate measure of the ``size'' or ``dimension'' of a convex cone.  It also  extends a number of useful properties for the usual dimension of  a linear subspace to convex cones~\cite[Sec.~4]{AmeLotMcC:13}.  Moreover, a number of  calculations for the statistical dimension are available in the literature~\cite{StoParHas:09,ChaRecPar:12,AmeLotMcC:13,FoyMac:13}, which makes the statistical dimension an appealing parameter in practice.  

The statistical dimension turns out to be the key parameter which determines the success and failure of  demixing under the random orientation model.  To shorten notation, we abbreviate the statistical dimensions of the descent cones \(\Desc(f_i,\vct x_i^\nat)\):
\begin{equation}\label{eq:sdim_i}
  \delta_i := \sdim_i\Bigl(\overline{\Desc(f_i,\vct x_i^\nat)}\Bigr) \qtq{for} i = 1,\dotsc,n-1,n,
\end{equation}
where the overline denotes the closure.  

\subsection{Main result}
\label{sec:main-results-1}

We are now in a position to state our main result. %
\begin{bigthm}\label{thm:mult-bd}
  With \(\sdim_i\) as in~\eqref{eq:sdim_i}, define the total  dimension \(\Delta\) and the scale  \(\sigma\) by
  \begin{equation}\label{eq:Delta-sigma}
    \Delta := \sumnl_{i=1}^n \delta_i
    \qtq{and}    
    \sigma := 
    \sqrt{\sumnl_{i=1}^n \delta_i\wedge (d-\delta_i)}.
  \end{equation}
  Choose a  probability tolerance \(\eta \in (0,1)\), and define the transition width
  \begin{equation}\label{eq:lambda-star}
    \lambda_* := \tfrac{4}{3} \log\left(\tfrac{1}{\eta}\right)+ 2\sigma\sqrt{ \log\left(\tfrac{1}{\eta} \right)} .
  \end{equation}
Suppose that the matrices \((\mtx U_i)_{i=1}^n\)  are drawn from the random orientation model and that the measurement operator \(\mtx A \in \R^{m\times n}\) has full row rank.  Then
  \begin{align}
    m \ge \Delta + \lambda_* &\implies \text{stability is achievable with probability} \ge 1-\eta;\label{eq:success-mult-demix}
    \\
    m \le \Delta - \lambda_* &\implies  \text{exact recovery is achievable with probability} \le \eta, \label{eq:fail-mult-demix}
  \end{align}
  where we define exact and stable recovery  in Definition~\ref{def:exact-stable}.
\end{bigthm}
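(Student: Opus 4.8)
The plan is to turn both implications into statements in conic integral geometry — about whether a product of independently, randomly rotated descent cones \emph{avoids}, or is forced to \emph{meet}, a fixed subspace determined by $\mtx A$ — and then to extract the sharp transition $\lambda_*$ from a tensorized concentration inequality for projections of Gaussians onto cones. Concretely, since $\mtx A$ has full row rank, $\mtx A^\psinv\mtx A = \Proj_L$ is the orthogonal projector onto the fixed $m$-dimensional subspace $L := \range(\mtx A^\transp)$, so writing $\vct z_0 = \mtx A\vct y_0$ with $\vct y_0 = \sumnl_{i=1}^n\mtx U_i\vct x_i^\nat + \vct w$, the objective of~\eqref{eq:const-dmx} at $(\vct x_i)$ equals $\enormsq{\Proj_L(\sumnl_i\mtx U_i\vct x_i - \vct y_0)}$. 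Setting $\vct h_i := \hvct{x}_i - \vct x_i^\nat$, feasibility of an optimal point forces $\vct h_i\in\Desc(f_i,\vct x_i^\nat)\subseteq D_i$ with $D_i := \overline{\Desc(f_i,\vct x_i^\nat)}$, and comparing the objective at $(\hvct{x}_i)$ with its value $\enormsq{\Proj_L\vct w}$ at $(\vct x_i^\nat)$ yields the master inequality $\enorm{\Proj_L(\sumnl_i\mtx U_i\vct h_i - \vct w)}\le\enorm{\Proj_L\vct w}\le\enorm{\vct w}$. Everything is then governed by the random linear map $T(\vct h_1,\dotsc,\vct h_n):=\Proj_L(\sumnl_i\mtx U_i\vct h_i)$ on the product cone $\mathcal{K}:=D_1\times\dotsb\times D_n\subseteq(\R^d)^n$; note $T(\vct h)=\zerovct$ exactly when $(\mtx U_1\vct h_1,\dotsc,\mtx U_n\vct h_n)$ lands in the fixed subspace $W:=\{(\vct k_i):\sumnl_i\vct k_i\in L^\perp\}$, with $\dim W = nd-m$.

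For the stability implication~\eqref{eq:success-mult-demix} I would prove a \emph{restricted-injectivity} bound: if $m\ge\Delta+\lambda_*$, then with probability at least $1-\eta$ there is $\tau>0$, depending only on the problem data, with $\enorm{T(\vct h)}\ge\tau\sqrt{\sumnl_i\enormsq{\vct h_i}}$ for all $\vct h\in\mathcal{K}$. Granting this, the master inequality and the triangle inequality give $\enorm{T(\vct h)}\le 2\enorm{\vct w}$, hence $\sqrt{\sumnl_i\enormsq{\vct h_i}}\le(2/\tau)\enorm{\vct w}$, which is~\eqref{eq:stab-recov-defn}; when $\vct w=\zerovct$ it forces every $\vct h_i=\zerovct$, i.e.\ exact recovery. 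This bound is precisely the statement that the randomly rotated cone $(\mtx U_1\oplus\dotsb\oplus\mtx U_n)\mathcal{K}$ \emph{robustly avoids} $W$. Since Euclidean projection onto a product cone acts coordinatewise and the statistical dimension is rotation invariant, this rotated cone has statistical dimension exactly $\sumnl_i\sdim(D_i)=\Delta$, while $\dim W = nd-m$; by the heuristic of the kinematic formula the avoidance should hold once the gap $nd-(\Delta+\dim W)=m-\Delta$ exceeds the fluctuation scale.

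The hard part is showing that this avoidance — and, for the converse, the complementary forced intersection when $\Delta-m\ge\lambda_*$ — actually holds with probability at least $1-\eta$, with the stated $\lambda_*$. Two features make this the crux. First, $\mathcal{K}$ is rotated only by a \emph{block-diagonal} orthogonal matrix $\mtx U_1\oplus\dotsb\oplus\mtx U_n$, not by a Haar rotation of $\R^{nd}$, so the approximate kinematic formula of~\cite{AmeLotMcC:13} does not apply verbatim; one must establish an $n$-fold conic kinematic estimate for $\mtx U_1D_1+\dotsb+\mtx U_nD_n+L^\perp$, either by iterating and extending that formula or by re-running the underlying Gaussian-process (Gordon-type minimax) argument for $T$ and checking that the independent rotations decorrelate $\mathcal{K}$ from $W$. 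Second, the transition width must emerge from a Bernstein/sub-gamma concentration inequality for $\sumnl_{i=1}^n\enormsq{\Proj_{D_i}(\vct g_i)}$ with $\vct g_i\sim\normal(\zerovct,\Id)$ independent: by the Moreau identity $\enormsq{\Proj_{D_i}\vct g_i}=\enormsq{\vct g_i}-\enormsq{\Proj_{D_i^\polar}\vct g_i}$ each summand is sub-gamma about its mean $\delta_i$ with variance proxy $\delta_i\wedge(d-\delta_i)=\sdim(D_i)\wedge\sdim(D_i^\polar)$ — the deterministic subspace $L$ contributes no fluctuation — so tensorizing produces the scale $\sigma^2=\sumnl_i\delta_i\wedge(d-\delta_i)$ and exactly $\Prob\{|\sumnl_i\enormsq{\Proj_{D_i}\vct g_i}-\Delta|\ge\lambda_*\}\le\eta$, with $\tfrac43$ and $2$ the relevant sub-gamma constants in~\eqref{eq:lambda-star}. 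Feeding this tail into the Gaussian-width estimate for $\mathcal{K}\cap\sphere{nd-1}$ (whose squared width is close to $\Delta$) delivers the avoidance in~\eqref{eq:success-mult-demix}, and the lower tail delivers the intersection needed for the converse.

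For the failure implication~\eqref{eq:fail-mult-demix} I would argue contrapositively. On the event that $(\mtx U_1\oplus\dotsb\oplus\mtx U_n)\mathcal{K}$ meets $W$ at a nonzero point $(\vct k_i)$, set $\vct h_i:=\mtx U_i^\transp\vct k_i\in D_i$; then for small $\eps>0$ the tuple $(\vct x_i^\nat+\eps\vct h_i)$ is feasible by convexity of each $f_i$, has objective $\enormsq{\Proj_L(\eps\sumnl_i\mtx U_i\vct h_i)}=0$ when $\vct w=\zerovct$ since $\sumnl_i\mtx U_i\vct h_i\in L^\perp$, and differs from $(\vct x_i^\nat)$ because some $\vct h_i\ne\zerovct$ — so $(\vct x_i^\nat)$ is not the unique minimizer and exact recovery fails. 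The difference between $\Desc(f_i,\vct x_i^\nat)$ and its closure $D_i$ is a minor technicality: a robust intersection already meets $\prod_i\relint(D_i)\subseteq\prod_i\Desc(f_i,\vct x_i^\nat)$, so $\vct h_i$ may be taken a genuine descent direction. Combining with the intersection estimate, $m\le\Delta-\lambda_*$ forces failure with probability at least $1-\eta$. I expect the $n$-fold kinematic estimate with block-diagonal rather than full randomness, together with extracting the sharp constants in $\lambda_*$, to be the main obstacle; the rest is bookkeeping around convexity and projections.
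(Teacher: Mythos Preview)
Your geometric picture is correct and your failure argument is essentially the paper's, but the route you chart for the probability estimates is harder than the one the paper actually takes, and you are missing the key simplification that makes the analysis go through cleanly.

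The paper does \emph{not} work in $(\R^d)^n$ with a block-diagonal rotation $\mtx U_1\oplus\dotsb\oplus\mtx U_n$ of the product cone $\mathcal{K}$, nor does it invoke any Gordon-type Gaussian-process comparison. Instead, it \emph{polarizes} the exact-recovery condition: writing $D_{n+1}:=\nullity(\mtx A)=L^\perp$, it shows (Lemma~\ref{lem:polar-exact}) that, almost surely under the random orientation model,
\[
\Bigl\{-\mtx U_i D_i\cap\sumnl_{j\ne i}\mtx U_jD_j=\{\zerovct\}\ \forall i\Bigr\}
\quad\Longleftrightarrow\quad
\Bigl\{\mtx Q_1D_1^\polar\cap\dotsb\cap\mtx Q_{n+1}D_{n+1}^\polar\ne\{\zerovct\}\Bigr\},
\]
where the $\mtx Q_i$ are i.i.d.\ Haar on $\orth d$. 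The awkward Minkowski sum becomes a clean intersection of $n{+}1$ randomly rotated cones, \emph{all living in $\R^d$}, each with its own full Haar rotation. This dissolves precisely the obstacle you flagged: there is no block-diagonal randomness to worry about. The probability of that intersection is then read off from the \emph{iterated} exact kinematic formula $\Expect[v_k(\bigcap_i\mtx Q_iC_i)]=v_{(n-1)d+k}(C_1\times\dotsb\times C_n)$ combined with the Crofton/Gauss--Bonnet identity, and the phase transition comes from concentration of the \emph{intrinsic volume random variable} $V_{C_1\times\dotsb\times C_n}$ about its mean $\Omega$ (Proposition~\ref{prop:intv-conc}), not from concentration of $\sumnl_i\enormsq{\Proj_{D_i}\vct g_i}$. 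The variance proxy $\theta^2=\sumnl_i\sdim(C_i)\wedge\sdim(C_i^\polar)$ and the constants $\tfrac43,\,2$ in $\lambda_*$ come from the Bernstein-type bound on $V_C$ in~\cite{McCTro:13}, tensorized over the product.

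Two further simplifications you do not use: first, after a rotation-invariance argument the paper replaces the fixed $L^\perp=\nullity(\mtx A)$ by a randomly rotated subspace (Lemma~\ref{lem:rand-nullspace}), so $\mtx A$ enters only through $m$; second, rather than proving a quantitative restricted-injectivity bound for stability, the paper shows that under Haar randomness the events~\eqref{eq:erc} and~\eqref{eq:src} coincide almost surely, because randomly oriented closed cones touch with probability zero (Lemma~\ref{lem:stab-exact-equiv}). So one only ever needs to bound the probability of~\eqref{eq:erc}. Your Gordon route may be workable, but the polarization $+$ iterated kinematics $+$ intrinsic-volume concentration argument is both shorter and yields the exact $\lambda_*$ with no extra effort.
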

\noindent
Theorem~\ref{thm:mult-bd} provides detailed information about the capability of constrained demixing~\eqref{eq:const-dmx} under the random orientation model. 
\begin{description}\setlength{\itemsep}{-1pt}
\item[Phase transition]  The capability of~\eqref{eq:const-dmx} changes rapidly when the number of measurements \(m\) passes through the total statistical dimension \(\Delta\).  For  \(m\) somewhat less than \(\Delta\), exact recovery is highly unlikely.  On the other hand, when  \(m\) is a bit larger than \(\Delta\), we have stable recovery with high probability.  This justifies our heuristic that the number of measurements required for demixing is equal to the total statistical dimension.  %
\item[Transition width]  Theorem~\ref{thm:mult-bd} tightly controls the width of the transition region between success  and failure.  When the  probability tolerance \(\eta\) is independent of \(d\) and \(n\), the transition width satisfies
  \begin{equation}\label{eq:lam-order}
\lambda_* =  O(\sigma) = O\bigl(\sqrt{nd}\bigr) \qtq{as} d\to \infty.
\end{equation}
The second equality follows from the observation that \(\sigma^2 \le nd\) because the statistical dimension is never larger than the ambient dimension (cf.~\cite[Sec.~4]{AmeLotMcC:13}).  In many applications, the number \(n\) of constituents is independent of the ambient dimension, so the transition between success and failure occurs over no more than \(O(\sqrt{d})\) measurements as \(d\to\infty\).  
\item[Strong probability bounds] Probability tolerances \(\eta\) that decay rapidly with the ambient dimension \(d\) can provide strong guarantees for demixing~\cite[Sec.~4.3]{McCTro:12}.  For example, when the number of measurements \(m \ge \Delta + c d\) for some \(c>0\), Theorem~\ref{thm:mult-bd} guarantees that
  \begin{equation*}
     \text{\emph{Stability is achievable with probability}} \ge 1-\econst^{-c'd}.
  \end{equation*}
Due to the estimate  \(\sigma^2 \le nd\) from above, the constant \(c' > 0\) need depend only on \(c\) and \(n\).  Such exponentially small failure probabilities lead to strong demixing bounds using union-bound arguments as in~\cite[Secs.~6.1.1 \& 6.2.2]{McCTro:12}.  We omit the details for brevity.
\item[Extreme demixing]  How many constituents can we reliably demix?  The  answer is simple:
  \begin{equation*}
    \text{\emph{Theorem~\ref{thm:mult-bd} allows \(n\) proportional to \(d\).}}
  \end{equation*}
  Consider, for example,  the fully observed case \(m = d\), and fix a probability of success \(\eta\) independent of \(d\).  Suppose that \(\Delta \le (1-\eps)d \) for some \(\eps >0\).  For demixing to succeed, by Theorem~\ref{thm:mult-bd}, we only need
  \begin{equation*}
     d - \Delta   \ge \eps  d \ge \lambda_* =  O(\sqrt{nd}) \qtq{as} d\to \infty
  \end{equation*}
 where the equality is~\eqref{eq:lam-order}. Thus,  the implication~\eqref{eq:success-mult-demix} remains nontrivial  as~\(d\to \infty\) so long as  \(n \le cd\) for some sufficiently small \(c>0\).  %

 This growth regime is essentially optimal.  It can be shown\footnote{The fact that \(\delta_i>1/2\) except in trivial cases follows because (1) the statistical dimension of a ray is \(0.5\), (2) every nontrivial cone contains a ray,  and (3) the statistical dimension is increasing under set inclusion.} that \(\delta_i \ge 1/2\) whenever \(\vct x_i^\nat\) is not the unique global minimum of \(f_i\). Thus, excepting trivial situations, we have \(\Delta \ge n/2\), so that when \(n \gtrsim 2 d\), demixing must fail with high probability by~\eqref{eq:fail-mult-demix}.
\end{description}

The proof of Theorem~\ref{thm:mult-bd} is based on a geometric optimality condition for the constrained demixing program~\eqref{eq:const-dmx} that characterizes exact and stable recovery in terms of  a configuration of randomly oriented convex cones. A new extension of the approximate kinematic formula from~\cite{AmeLotMcC:13} lets us provide precise bounds on the probability that this geometric optimality condition holds under the random orientation model.

\subsection{Outline}
\label{sec:outline}

Section~\ref{sec:related-work} describes the related work on demixing.   The proof of Theorem~\ref{thm:mult-bd} appears in Section~\ref{sec:constrained-demixing}.  Section~\ref{sec:applications} provides two simple numerical experiments that illustrate the accuracy of Theorem~\ref{thm:mult-bd}, and we conclude in Section~\ref{sec:open-problems} with some open problems.  The technical details in our development appear in the appendices.

\subsection{Notation and basic facts}
\label{sec:notation}
Vectors appear in bold  lowercase, while matrices are bold and capitalized.      The range and nullspace of a matrix \(\mtx X\) are \(\range(\mtx X)\) and \(\nullity(\mtx X)\).  The  Minkowski sum of sets \(S_1,S_2 \subset \R^d\)  is \(S_1+S_2\).  When more than two sets are involved, we define the Minkowski sum \(\sum_{i} S_i\) inductively. We write \(-S\) for the reflection of \(S\) about \(\zerovct\) and  \(\overline{S}\) for the closure of  \(S\).

A convex cone \(C\subset \R^d\) is a convex set that is positive homogeneous: \(\vct x, \vct y \in C\implies \lambda(\vct x+\vct y)\in C\) for all \(\lambda >0\).  All cones  in this work contain the origin \(\zerovct\).  We write \(\cC_d\) for the set of all closed, convex cones in \(\R^d\).   For any cone \(C \subset \R^d\), we define the polar cone \(C^\polar \in  \cC_d\) by
\begin{equation}      \label{eq:polar-cone}
  C^\polar := 
  \{\vct y \in \R^d \mid \langle \vct x,\vct y\rangle \le 0 \qtq{for all} \vct x \in C\}.  
\end{equation}
The bipolar formula states \(C^{\polar \polar} = \overline C\).   We measure the distance between two cones \(C,D \subset \R^d\) by computing the maximal inner product
\begin{equation}\label{eq:conic-dist}
  \llangle C,D\rrangle := 
  \sup_{\substack{\vct x \in C\cap \ball{d}\\ \vct y \in D\cap \ball{d}}} \langle \vct x,\vct y\rangle.
\end{equation}
It follows from the  Cauchy--Schwarz inequality that \(\llangle C, D\rrangle \le 1\) for every pair of cones, while the equality conditions for  Cauchy--Schwarz show that \(\llangle C,D \rrangle = 1\)    if and only if the intersection \(\overline C \cap \overline D\) contains a ray. 

We will refer to the following  elementary properties of the statistical dimension.  For any closed convex cones \(C \in \cC_d\) and \(D \in \cC_{d'}\), the statistical dimension reverses under polarity
\begin{equation}
  \label{eq:sdim-polar}
  \sdim(C^\polar) = d-\sdim(C)
\end{equation}
and splits under the Cartesian product
\begin{equation}
  \label{eq:sdim-prod}
  \sdim(C\times D) = \sdim(C) + \sdim(D).
\end{equation}
Simple proofs of relations~\eqref{eq:sdim-polar} and~\eqref{eq:sdim-prod} appear in~\cite[Sec.~4]{AmeLotMcC:13}.

\section{Context and related work}
\label{sec:related-work}

This work is a successor to  the author's earlier work~\cite{McCTro:12} on demixing with \(n=2\) components  in the fully observed \(m=d\) setting.   The techniques used in this paper hail from~\cite{AmeLotMcC:13}, which studied phase transitions in randomized optimization programs.  While those two works are the closest in spirit to our development below, numerous works on demixing appear in the literature.  This section provides an overview of the literature on demixing, from its origins in sparse approximation to recent developments towards a general theory.

\paragraph*{Demixing and sparse approximation.}
\label{sec:demix-sparse-appr}

Early work on demixing methods used the  \(\ell_1\) norm to encourage sparsity. Taylor, Banks, \& McCoy~\cite{TayBanMcC:79} used~\eqref{eq:noisy-demix-lag} with \(f_1=f_2=\lone{\cdot}\)  to demix a sparse signal from  sparse noise, with  applications to geophysics. About ten years later, Donoho \& Stark~\cite{DonSta:89} explained how uncertainty principles can guarantee the success of demixing signals that are sparse in frequency from those that are sparse in time using the \(\ell_1\) norm.  

The analysis of Donoho \& Huo~\cite{DonHuo:01} provided incoherence-based guarantees which demonstrate that exact recovery is possible under fairly generic conditions.   This work motivated interest in \emph{morphological component analysis} (MCA) for image processing~\cite{StaDonCan:03,ElaStaQue:05,BobMouSta:06,BobStaFad:07a,BobStaFad:07}.  MCA posits that images are the superposition of a small number of signals from a known dictionary---such as pointillistic stars and wispy galaxies. Demixing with the \(\ell_1\) norm provides a computational framework for decomposing these images into their constituent signals. 

 A number of recent papers provide theoretical guarantees for demixing with the \(\ell_1\) norm. Wright \& Ma showed that \(\ell_1\)-norm demixing can recovery a nearly dense vector from a sufficiently sparse corruption~\cite{WriMa:09}. Additional work along these lines appears in~\cite{StuKupPop:12,PopBraStu:13,NguTra:13,Li:13}. The phase transition for demixing two signals using the \(\ell_1\)-norm was first identified by the present authors in~\cite{McCTro:12}.   The very recent work~\cite{FoyMac:13} recovers similar guarantees under a slightly different model, and it also provides  stability guarantees.

\paragraph*{Demixing beyond sparsity.}
\label{sec:moving-beyond-spars}

Applications for mixed signal model~\eqref{eq:obs-model}  when the constituents satisfy more general structural assumptions appear in a number of areas.  The  work of Chandrasekaran et al.~\cite{ChaSanPar:09,ChaParWil:10,ChaSanPar:11}  demonstrated that a demixing program of the form~\eqref{eq:noisy-demix-lag} can recover the superposition of a sparse and low-rank matrix.  The independent work of Cand\`es et al.~\cite{CanLiMa:11} uses this model for robust principal component analysis and image processing applications. 

Modifications to the rank-sparsity model find applications in robust statistics~\cite{XuCarSan:10a,McCTro:11,XuCarSan:12} and its compressed variants~~\cite{WriGanMin:13,WatSanBar:11},  image processing~\cite{WriGanMin:13,PenGanWri:12}, and network analysis~\cite{JalRavSan:10,JalRavSan:11,CheJalSan:11,CheJalSan:11b,CheJalSan:13}.  

\paragraph*{A general theory takes shape.}
\label{sec:general-theory-takes}

Recent work has started to unify the piecemeal results discussed above.  Chandrasekaran et.\ al~\cite{ChaRecPar:12} gave a  general treatment of  the  \(n=1\) case using Gaussian width analysis. For the \(n=2\) and \(m=d\) case, the present authors used tools from integral geometry to demonstrate numerically matching upper and lower exact recovery guarantees for demixing~\cite{McCTro:12}.  The first fully rigorous account of phase transitions in demixing problems, for the \(n=2\) and \(m=d\) case, appeared in work of Amelunxen et al.~\cite{AmeLotMcC:13}.  

In very recent work, Foygel \& Mackey~\cite{FoyMac:13} studied the \(n=2\) case with a linear undersampling model that differs slightly from the one we consider in this work.  These empirically sharp results recover and extend some of the bounds in~\cite{McCTro:12}, but they do not prove that a phase transition occurs.  Notably, the work of Foygel \& Mackey offers  guidance on the choice of Lagrange parameters. 

The only previous result for the demixing setup where the number of constituents \(n\) is arbitrary appears in Wright et al.~\cite{WriGanMin:13}. Their results provide recovery guarantees for the Lagrange formulation of the undersampled demixing program~\eqref{eq:noisy-demix-lag}  when sufficiently strong guarantees are available for the fully observed \(m=d\) case.  Their guarantees, however, do not identify the phase transition between success and failure.

\section{Proof of the main result}
\label{sec:constrained-demixing}

This section presents the arc of the argument leading to Theorem~\ref{thm:mult-bd}, but it postpones the proof of  intermediate results to the appendices.  Section~\ref{sec:nonr-recov-cond} describes deterministic conditions for exact and stable recovery. In  Section~\ref{sec:two-simplifications-}, we provide simplifications for these deterministic conditions that hold almost surely under the random orientation model.  These simplifications reduce the recovery conditions to a single geometric condition involving the intersection of (polars of) randomly oriented descent cones.  

Our key tool, the \emph{approximate kinematic formula}, appears in Section~\ref{sec:appr-kinem-form}.  This formula bounds the probability that an arbitrary number of randomly oriented cones intersect in terms of the statistical dimension. It 
extends and refines a result of Amelunxen et al.~\cite[Thm.~7.1]{AmeLotMcC:13}. 
We complete the argument in Section~\ref{sec:completing-proof} by applying the kinematic formula to our simplified geometric recovery condition.

\subsection{Deterministic recovery conditions}
\label{sec:nonr-recov-cond}

We begin the proof of Theorem~\ref{thm:mult-bd} with deterministic conditions for exact recovery and stability for the constrained demixing problem~\eqref{eq:const-dmx}.  These conditions rephrase exact recovery and stability in terms of configurations of descent cones. In order to highlight the symmetries in these conditions, we first introduce some notation that we use throughout the proof . Define
\begin{equation}\label{eq:cone-defs}
  D_i := \Desc(f_i,\vct x_i^\nat) \qtq{for} i=1,\dotsc,n,\quad D_{n+1} := \nullity(\mtx A),
  \qtq{and} \mtx U_{n+1} := \Id.
\end{equation}
The \emph{exact recovery condition} is the event
\begin{equation}\label{eq:erc}
  -\mtx U_i D_i \cap \left(\sumnl_{j\ne i}\mtx U_j D_j\right)  = \{\zerovct\}\qtq{for all} i=1,\dotsc,n,n+1. \tag{ERC}
\end{equation}
In words, the exact recovery condition requires that no descent cone  shares a ray with the {sum}  of the other cones.  The \emph{stable recovery condition}  strengthens~\eqref{eq:erc} by requiring that the cones are separated by some positive angle:
\begin{equation}\label{eq:src}
  \left\llangle -\mtx U_i D_i, \sumnl_{j\ne i}\mtx U_j D_j\right\rrangle < 1 \qtq{for all} i = 1,\dotsc, n, n+1,\tag{SRC}
\end{equation}
where we recall the definition~\eqref{eq:conic-dist} of the inner product between cones.  These two conditions precisely characterize exact and stable recovery for constrained demixing~\eqref{eq:const-dmx}.

\begin{lemma}\label{lem:achievable-stab}
  Success is achievable in the noiseless case if and only if the exact recovery condition~\eqref{eq:erc} holds.  If the stable recovery condition condition~\eqref{eq:src} holds, then stability is achievable. 
\end{lemma}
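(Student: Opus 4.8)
The plan is to reparametrize~\eqref{eq:const-dmx} by the perturbations $\vct h_i := \vct x_i - \vct x_i^\nat$, rewrite its objective using the identity $\mtx A^\psinv\mtx A = \Proj_{\nullity(\mtx A)^\perp}$ (the orthogonal projector onto the row space of $\mtx A$), and use the conventions $D_{n+1} = \nullity(\mtx A)$ and $\mtx U_{n+1} = \Id$ to make everything symmetric in the $n+1$ cones $\mtx U_i D_i$. By the definition~\eqref{eq:desc-cone-def} of the descent cone (take $\tau = 1$), feasibility of $(\vct x_i)_i$ in~\eqref{eq:const-dmx} is exactly $\vct h_i\in D_i$ for $i = 1,\dots,n$, and since $\vct z_0 = \mtx A\bigl(\sumnl_{i=1}^n\mtx U_i\vct x_i^\nat + \vct w\bigr)$,
\begin{equation*}
  \enormsq{\mtx A^\psinv\bigl(\mtx A\sumnl_{i=1}^n\mtx U_i\vct x_i - \vct z_0\bigr)} = \enormsq{\Proj_{\nullity(\mtx A)^\perp}\bigl(\sumnl_{i=1}^n\mtx U_i\vct h_i - \vct w\bigr)}.
\end{equation*}
Writing the squared distance of $\sumnl_{i=1}^n\mtx U_i\vct h_i$ to $\nullity(\mtx A)$ as $\min_{\vct h_{n+1}\in D_{n+1}}\enormsq{\sumnl_{i=1}^{n+1}\mtx U_i\vct h_i}$ is what exposes the symmetry of~\eqref{eq:erc} and~\eqref{eq:src}.

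For \emph{exact recovery}, set $\vct w = \zerovct$: the objective is then $\enormsq{\Proj_{\nullity(\mtx A)^\perp}\sumnl_{i=1}^n\mtx U_i\vct h_i}\ge 0$ and vanishes at $\vct h_i = \zerovct$, so $(\vct x_i^\nat)_i$ is always optimal, and it is the unique optimum precisely when no other feasible tuple has objective value $0$ --- equivalently, when there is no choice of $\vct h_i\in D_i$ ($i\le n$), not all zero, with $\sumnl_{i=1}^n\mtx U_i\vct h_i\in\nullity(\mtx A)$. Appending $\vct h_{n+1} := -\sumnl_{i=1}^n\mtx U_i\vct h_i\in D_{n+1}$, this says there are $\vct h_i\in D_i$ ($i\le n+1$), not all zero, with $\sumnl_{i=1}^{n+1}\mtx U_i\vct h_i = \zerovct$. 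Isolating the index $i_0$ of any nonzero term gives $-\mtx U_{i_0}\vct h_{i_0} = \sumnl_{j\ne i_0}\mtx U_j\vct h_j$ with $\mtx U_{i_0}\vct h_{i_0}\ne\zerovct$ (each $\mtx U_i$ is invertible), which is exactly the failure of~\eqref{eq:erc}; the nonzero index can always be chosen $\le n$, matching ``feasible tuple $\ne(\vct x_i^\nat)_i$''. This proves the first assertion.

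For \emph{stability}, assume~\eqref{eq:src} and let $(\hvct x_i)_i$ be any optimal point, with $\hvct h_i := \hvct x_i - \vct x_i^\nat\in D_i$ for $i\le n$. Comparing the objective at $(\hvct x_i)_i$ with its value $\enormsq{\Proj_{\nullity(\mtx A)^\perp}\vct w}$ at $(\vct x_i^\nat)_i$, expanding the square, and combining Cauchy--Schwarz with $\enorm{\Proj_{\nullity(\mtx A)^\perp}\vct w}\le\enorm{\vct w}$ gives $\enorm{\Proj_{\nullity(\mtx A)^\perp}\bigl(\sumnl_{i=1}^n\mtx U_i\hvct h_i\bigr)}\le 2\enorm{\vct w}$; equivalently, letting $\hvct h_{n+1}$ be the component of $-\sumnl_{i=1}^n\mtx U_i\hvct h_i$ in $\nullity(\mtx A) = D_{n+1}$, we obtain $\enorm{\sumnl_{i=1}^{n+1}\mtx U_i\hvct h_i}\le 2\enorm{\vct w}$. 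It then suffices to prove the geometric fact that~\eqref{eq:src} forces a finite constant $c$, depending only on the cones $D_i$ and the rotations $\mtx U_i$, such that every tuple $(\vct v_i)_{i=1}^{n+1}$ with $\vct v_i\in\overline{\mtx U_i D_i}$ and $\sumnl_{i=1}^{n+1}\vct v_i = \vct u$ satisfies $\enorm{\vct v_i}\le c\,\enorm{\vct u}$ for every $i$: applying this with $\vct v_i = \mtx U_i\hvct h_i$ and using orthogonality of the $\mtx U_i$ gives $\enorm{\hvct h_i}\le 2c\,\enorm{\vct w}$ for $i = 1,\dots,n$, so $C = 2c$ certifies stability, and $c$ --- hence $C$ --- does not depend on $\vct w$, as Definition~\ref{def:exact-stable} permits.

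The geometric fact is the crux of the argument, and I would prove it by scaling and compactness. If it failed, there would be tuples $(\vct v_i^{(k)})_i$ which, after rescaling, satisfy $\max_i\enorm{\vct v_i^{(k)}} = 1$ and $\enorm{\sumnl_i\vct v_i^{(k)}}\to 0$; passing to a convergent subsequence yields vectors $\vct v_i^\star\in\overline{\mtx U_i D_i}$, not all zero, with $\sumnl_i\vct v_i^\star = \zerovct$. For any index $i_0$ with $\vct v_{i_0}^\star\ne\zerovct$, the unit vector $-\vct v_{i_0}^\star/\enorm{\vct v_{i_0}^\star}$ then lies in both $-\overline{\mtx U_{i_0}D_{i_0}}$ and $\sumnl_{j\ne i_0}\overline{\mtx U_j D_j}$, so by the definition~\eqref{eq:conic-dist} of the cone inner product $\llangle -\mtx U_{i_0}D_{i_0},\ \sumnl_{j\ne i_0}\mtx U_j D_j\rrangle = 1$, contradicting~\eqref{eq:src}. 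The one delicate point is that limits of points of $\mtx U_i D_i$ land in the closure $\overline{\mtx U_i D_i}$, so before invoking~\eqref{eq:src} one must observe that passing to closures leaves the cone inner products~\eqref{eq:conic-dist} unchanged and that $\sumnl_{j\ne i}\overline{\mtx U_j D_j}\subseteq\overline{\sumnl_{j\ne i}\mtx U_j D_j}$. I expect this compactness step --- together with the closure bookkeeping --- to be the only real obstacle; the reparametrization and the exact-recovery equivalence are essentially routine.
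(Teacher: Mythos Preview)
Your exact-recovery argument is essentially the paper's, but one step is stated too loosely. You write that ``feasibility \(\dots\) is exactly \(\vct h_i\in D_i\)'' and then pass to the equivalence ``no other feasible tuple with objective \(0\) \(\iff\) no nonzero \(\vct h_i\in D_i\) with \(\sum\mtx U_i\vct h_i\in\nullity(\mtx A)\)''. The direction feasible \(\Rightarrow\) \(\vct h_i\in D_i\) is immediate with \(\tau=1\), but the converse is false in general: \(\vct h_i\in D_i\) only guarantees \(f_i(\vct x_i^\nat+\tau_i\vct h_i)\le f_i(\vct x_i^\nat)\) for \emph{some} \(\tau_i>0\). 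The fix is the paper's: by convexity one may take a single small \(\tau>0\) for all \(i\), and since cones are scale-invariant the tuple \((\tau\vct h_i)\) is feasible with the same zero objective. This is a routine patch, not a structural gap.

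For stability your route genuinely differs from the paper's. The paper quantifies \eqref{eq:src} by an \(\eps\in(0,1)\) and uses an elementary inequality (if \(\langle\vct x,\vct y\rangle\ge -(1-\eps)\enorm{\vct x}\enorm{\vct y}\) then \(\enormsq{\vct x}+\enormsq{\vct y}\le\eps^{-1}\enormsq{\vct x+\vct y}\)) to obtain directly \(\enormsq{\hvct h_i}\le\eps^{-1}\enormsq{\sum_{j=1}^{n+1}\mtx U_j\hvct h_j}\); it then bounds \(\enormsq{\sum_{j}\mtx U_j\hvct h_j}\le\enormsq{\vct w}\) via an orthogonal decomposition rather than Cauchy--Schwarz, yielding the explicit constant \(C=\eps^{-1/2}\). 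Your compactness argument is correct---the closure bookkeeping you flag is exactly what is needed---but it is nonconstructive, and your Cauchy--Schwarz step loses a factor of \(2\) in the residual bound. The paper's approach buys an explicit stability constant tied to the angular gap in \eqref{eq:src}; yours buys a shorter, more conceptual proof that avoids the auxiliary inequality entirely. Both are valid for Lemma~\ref{lem:achievable-stab} as stated, since Definition~\ref{def:exact-stable} does not demand an explicit \(C\).
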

\noindent We prove Lemma~\ref{lem:achievable-stab} in Appendix~\ref{sec:determ-cond}. The proof of exact recovery is based on a perturbative argument that extends the proof~\cite[Lem.~2.3]{McCTro:12} of the recovery conditions for demixing two signals. The stable recovery result follows similar lines.

\subsection{Three simplifications}
\label{sec:two-simplifications-}

Our goal in this work is the analysis of  demixing when the orientations are drawn independently from the Haar measure on the orthogonal group.  In this section, we describe some simplifications that arise from the fact that this  measure is \emph{invariant} and \emph{continuous}.   In the end, we reduce the problem of studying the exact and stable recovery conditions~\eqref{eq:erc} and~\eqref{eq:src} hold to the problem of studying  a single geometric question: \emph{What is the probability that \(n+1\) randomly oriented cones share a ray?} 

In Section~\ref{sec:reduct-rand-nullsp}, we show that we can replace the deterministic nullspace  \(\nullity(\mtx A)\)  with a randomly oriented \(d-m\) dimensional subspace,  which effectively randomizes the nullspace of the measurement operator \(\mtx A\). In Section~\ref{sec:stab-almost-equiv}, we find that~\eqref{eq:erc} and~\eqref{eq:src} are  equivalent under the random orientation model.  Finally,  we simplify the exact recovery condition~\eqref{eq:erc} in Section~\ref{sec:polar-exact-recov-2}.  

\subsubsection{Randomizing the nullspace }
\label{sec:reduct-rand-nullsp}

In definition~\eqref{eq:cone-defs}, we fix the rotation \(\mtx U_i=\Id\) in order to make the statement of the exact and stable recovery  conditions symmetric.  However, this symmetry is broken by the random orientation model because only \((\mtx U_i)_{i=1}^n\) are taken at random.  The next result restores this symmetry.
\begin{lemma}\label{lem:rand-nullspace}
  Suppose that \((\mtx U_i)_{i=1}^{n}\) are drawn from the random orientation model and fix \(\mtx U_{n+1} = \Id\).  Let \((\mtx Q_i)_{i=1}^{n+1}\) be an \((n+1)\)-tuple of i.i.d. random rotations. Then
  \begin{align}\label{eq:9}
    \Prob\left\{\eqref{eq:erc}\text{ holds} 
    \right\}
    &= \Prob\left\{ -\mtx Q_i D_i \cap \sumnl_{j\ne i} \mtx Q_j D_j = \{\zerovct\}\text{ for each } i = 1,\dotsc,n,n+1 \right\}.
\intertext{  Under the same conditions,}
    \Prob\left\{ \eqref{eq:src}\text{ holds}
    \right\} 
    &= 
    \Prob\left\{ \left\llangle-\mtx Q_i
        D_i , \sumnl_{j\ne i} \mtx Q_j D_j\right\rrangle < 1
      \text{ for each } i = 1,\dotsc,n,n+1\right\}.\label{eq:10}
  \end{align}
\end{lemma}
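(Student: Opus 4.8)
The plan is to exploit the left-invariance of the Haar measure by inserting a single global rotation. Let $\mtx Q$ be a Haar-distributed rotation on $\orth d$, drawn independently of $(\mtx U_i)_{i=1}^n$. First I would note that the event in~\eqref{eq:erc} is unchanged when every orientation $\mtx U_j$ (including the fixed $\mtx U_{n+1}=\Id$) is replaced by $\mtx Q\mtx U_j$: since $\mtx Q$ is an invertible linear map it commutes with Minkowski sums, with intersections, and with the reflection through the origin, so $-\mtx Q\mtx U_i D_i \cap \sumnl_{j\ne i}\mtx Q\mtx U_j D_j = \mtx Q\bigl(-\mtx U_i D_i \cap \sumnl_{j\ne i}\mtx U_j D_j\bigr)$, and the right-hand side equals $\{\zerovct\}$ precisely when the un-rotated intersection does. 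Hence $\Prob\{\eqref{eq:erc}\text{ holds}\}$ equals the probability of the corresponding statement for the tuple $(\mtx Q\mtx U_1,\dotsc,\mtx Q\mtx U_n,\mtx Q)$, where I have used $\mtx Q\mtx U_{n+1}=\mtx Q\Id=\mtx Q$.

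Next I would identify the law of the rotated tuple. The map $(\mtx Q,\mtx U_1,\dotsc,\mtx U_n)\mapsto(\mtx Q,\mtx Q\mtx U_1,\dotsc,\mtx Q\mtx U_n)$ pushes the product Haar measure forward to itself: fixing $\mtx Q$ and applying left-invariance in the inner integral over $(\mtx U_1,\dotsc,\mtx U_n)$ leaves the integral of any test function unchanged. Consequently $(\mtx Q\mtx U_1,\dotsc,\mtx Q\mtx U_n,\mtx Q)$ is an $(n+1)$-tuple of i.i.d.\ Haar rotations, which is exactly the law of $(\mtx Q_1,\dotsc,\mtx Q_{n+1})$ under the relabeling $\mtx Q_i:=\mtx Q\mtx U_i$ for $i\le n$ and $\mtx Q_{n+1}:=\mtx Q$. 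Substituting this relabeling into the rotated event produces~\eqref{eq:9}.

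For~\eqref{eq:10} I would run the identical argument, now using that the conic distance~\eqref{eq:conic-dist} is invariant under a common rotation: because $\mtx Q$ preserves Euclidean inner products and carries the unit ball $\ball d$ onto itself, $\llangle\mtx Q C,\mtx Q D\rrangle=\llangle C,D\rrangle$ for all cones $C,D$. Thus the event that $\llangle-\mtx U_i D_i,\sumnl_{j\ne i}\mtx U_j D_j\rrangle<1$ for all $i$ is unchanged when each $\mtx U_j$ is replaced by $\mtx Q\mtx U_j$, and the same pushforward identity together with the same relabeling yields~\eqref{eq:10}.

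There is no real obstacle here; the lemma is a routine consequence of Haar invariance, and the entire content is the bookkeeping that replacing the deterministic block $\mtx U_{n+1}=\Id$ by a random $\mtx Q_{n+1}$ costs nothing. The only point worth stating carefully is the pushforward claim---that left-translating each coordinate by the same independent Haar element produces a fresh i.i.d.\ Haar tuple jointly with that element---which I would justify by the test-function computation sketched above. I would also remark, for definiteness, that ``rotation'' is used throughout in the sense of ``element of $\orth d$,'' so that no distinction between $\orth d$ and $\mathsf{SO}_d$ enters the argument.
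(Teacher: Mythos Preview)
Your proposal is correct and follows essentially the same approach as the paper: the paper also introduces an auxiliary Haar rotation $\tilde{\mtx U}$ independent of $(\mtx U_i)_{i=1}^n$, observes that the event is invariant under simultaneous left-multiplication by $\tilde{\mtx U}$, and then invokes the same pushforward identity (stated there as a separate Fact) to identify the law of $(\tilde{\mtx U}\mtx U_1,\dotsc,\tilde{\mtx U}\mtx U_n,\tilde{\mtx U})$ with that of an i.i.d.\ Haar $(n{+}1)$-tuple. Your write-up is slightly more explicit than the paper's in justifying the rotation-invariance of the two events, but the argument is the same.
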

\noindent
The proof, which appears in Appendix~\ref{sec:rand-nullsp-proof}, requires only an elementary application of the rotation invariance of the Haar measure.

\subsubsection{Exchanging stable for exact recovery}
\label{sec:stab-almost-equiv}

Our second simplification shows that the stability condition~\eqref{eq:src}  holds with the same probability that the recovery condition~\eqref{eq:erc} holds.  
\begin{lemma}\label{lem:stab-exact-equiv}
  The probabilities appearing in~\eqref{eq:9} and~\eqref{eq:10} are equal.
\end{lemma}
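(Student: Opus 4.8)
The plan is to show that the two events in Lemma~\ref{lem:rand-nullspace}—call them $E_{\mathrm{ERC}}$ (the event that $-\mtx Q_i D_i \cap \sum_{j\ne i}\mtx Q_j D_j = \{\zerovct\}$ for all $i$) and $E_{\mathrm{SRC}}$ (the event that $\llangle -\mtx Q_i D_i, \sum_{j\ne i}\mtx Q_j D_j\rrangle < 1$ for all $i$)—actually coincide up to a null set. Since $E_{\mathrm{SRC}} \subseteq E_{\mathrm{ERC}}$ always holds (if the maximal inner product is strictly less than $1$, then the intersection of closures contains no ray, hence the intersection of the cones themselves is trivial—using the equality conditions for Cauchy--Schwarz recalled after~\eqref{eq:conic-dist}), it suffices to prove $\Prob\{E_{\mathrm{ERC}} \setminus E_{\mathrm{SRC}}\} = 0$. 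The key point is that the gap between the two conditions is governed entirely by the \emph{closure} operation: $E_{\mathrm{ERC}}$ involves $\sum_{j\ne i}\mtx Q_j D_j$ while the inner product $\llangle\cdot,\cdot\rrangle$ secretly involves the closures of both cones.

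First I would reduce, for each fixed index $i$, to a statement about two randomly oriented closed convex cones. Fix $i$ and condition on $\mtx Q_i$; write $C := -\mtx Q_i D_i$ and let $K := \sum_{j\ne i}\mtx Q_j D_j$. The obstruction to $E_{\mathrm{SRC}}$ beyond $E_{\mathrm{ERC}}$ is exactly the possibility that $\overline C \cap \overline K$ contains a ray while $C \cap K = \{\zerovct\}$; equivalently, a ray lies in $\overline K \setminus K$ and in $\overline C$ (note $C = \overline C$ here since descent cones rotated are still sets whose closure is what matters—strictly, I should work with $\overline{D_i}$ throughout, as in~\eqref{eq:sdim_i}, so $C$ is already closed). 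So I must bound the probability that the randomly rotated closed cone $\overline C$ meets the ``boundary at infinity'' of the Minkowski sum $K$. The natural tool is a continuity/measure-zero argument: the set of rotations $\mtx Q_i$ for which $\overline C$ touches $\overline K$ without meeting $\relint$ or the sum itself is a ``lower-dimensional'' event, and the Haar measure is absolutely continuous with respect to the surface measure on $\orth d$, so it assigns this event probability zero. Concretely, I would invoke the fact—used throughout this circle of ideas and provable from the invariance and continuity of Haar measure—that for a fixed closed convex cone $D$ and a Haar-random rotation $\mtx Q$, and any fixed closed convex cone $K$, one has $\Prob\{\mtx Q D \cap K \text{ contains a ray but } \mtx Q D \cap K \ne \mtx Q D \cap \overline{\relint K \cup \{\mathbf 0\}}\} = 0$; more cleanly, $\Prob\{(\mtx Q D)\cap K \ne \{\mathbf 0\}\} = \Prob\{(\mtx Q D)\cap \overline K \text{ contains a ray}\}$ because the discrepancy—rays in $\overline K\setminus K$—forms a countable union of boundary pieces, each hit with probability zero by a generic rotation.

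The cleanest route, and the one I would write up, runs through the polarity identity rather than a hands-on boundary analysis. Using the bipolar formula $C^{\polar\polar} = \overline C$ and the standard fact that $\overline{\sum_j \mtx Q_j D_j} = \bigl(\bigcap_j (\mtx Q_j D_j)^\polar\bigr)^\polar = \bigl(\bigcap_j \mtx Q_j D_j^\polar\bigr)^\polar$, the condition ``$\overline{-\mtx Q_i D_i} \cap \overline{\sum_{j\ne i}\mtx Q_j D_j}$ contains a ray'' can be dualized into a statement about an intersection of rotated polar cones having empty relative interior versus not—and for Haar-random rotations, an intersection of closed convex cones either has nonempty interior relative to its span with probability that matches the ``contains a ray'' probability, because the event that the intersection is a nontrivial subspace-less ``knife edge'' is null. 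This is precisely the kind of almost-sure dichotomy that the continuity of Haar measure buys us, and it is the same phenomenon exploited in~\cite[Sec.~5]{AmeLotMcC:13}. I would cite that reference for the underlying kinematic fact and then assemble: $\Prob\{E_{\mathrm{SRC}}\} = \Prob\{E_{\mathrm{ERC}}\}$ follows by applying the per-$i$ null-set bound to each of the $n+1$ indices and taking a union bound over the (at most $n+1$) exceptional events.

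The main obstacle is the second step: rigorously showing that the Minkowski sum $\sum_{j\ne i}\mtx Q_j D_j$ differs from its closure only on a set that a generic rotation of $-D_i$ avoids almost surely. Minkowski sums of closed cones need not be closed, so one genuinely has to control the ``new'' rays appearing in the closure. I expect the honest argument to go: parametrize a potential offending ray by the rotations, observe that for it to lie in $\overline K\setminus K$ the ray must be a limit of sums $\sum_{j\ne i}\mtx Q_j \vct d_j$ with $\|\vct d_j\|\to\infty$ but the sum bounded—an ``escape to infinity'' that pins the configuration to a measure-zero algebraic subset of $\orth d^{\,n}$ once one also demands $-\mtx Q_i D_i$ meets it—and conclude by Fubini that, for a.e.\ choice of $(\mtx Q_j)_{j\ne i}$, the set of $\mtx Q_i$ realizing this has measure zero. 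I would relegate this measure-zero lemma to the appendix alongside the other Haar-invariance computations, and in the body simply state that it follows from the continuity of the Haar measure, mirroring how Lemma~\ref{lem:rand-nullspace} is handled.
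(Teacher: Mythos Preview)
Your overall architecture is right: $E_{\mathrm{SRC}}\subseteq E_{\mathrm{ERC}}$ always, so it suffices to show $\Prob(E_{\mathrm{ERC}}\setminus E_{\mathrm{SRC}})=0$, and a union bound over the $n+1$ indices reduces this to a per-$i$ null-set claim. But the route you propose for that per-$i$ claim is both incomplete and needlessly hard. You try to control the discrepancy between $K=\sum_{j\ne i}\mtx Q_j D_j$ and its closure directly, worrying about rays in $\overline K\setminus K$, ``countable unions of boundary pieces,'' and escape-to-infinity configurations. None of these assertions is made precise, and the polarity detour (``dualize into a statement about relative interiors'') is left as a hand-wave. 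As you yourself flag, the Minkowski-sum closure issue is the crux, and you have not actually resolved it.

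The paper's argument bypasses this entirely with one observation you are missing: if $-\mtx Q_i D_i\cap\sum_{j\ne i}\mtx Q_j D_j=\{\zerovct\}$ (the cones themselves, not closed), then the separating-hyperplane theorem for cones (Fact~\ref{fact:separation}) already yields a weak separating hyperplane for the two cones, hence also for their closures. So on the event $E_{\mathrm{ERC}}\setminus E_{\mathrm{SRC}}$, the closures $-\overline{\mtx Q_i D_i}$ and $\overline{\sum_{j\ne i}\mtx Q_j D_j}$ share a ray \emph{and} are weakly separated---that is, they \emph{touch} in the sense of Definition~\ref{def:touch}. Fact~\ref{fact:no-touching} (from Schneider--Weil) says a randomly rotated nontrivial closed cone touches a fixed nontrivial closed cone with probability zero; conditioning on $(\mtx Q_j)_{j\ne i}$ and applying this to $\mtx Q_i$ finishes the argument. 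There is no need to analyze $\overline K\setminus K$ at all: the separating hyperplane produced from the \emph{unclosed} intersection being trivial is exactly what forces the touching configuration, and touching is the measure-zero event you want.
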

\noindent This result is immediate for closed cones: compactness arguments imply that two closed cones do not intersect if and only if the angle between the cones is strictly less than one.  Hence,~\eqref{eq:erc} is equivalent to~\eqref{eq:src} when all of the descent cones \(D_i\)  are closed.   The proof of Lemma~\ref{lem:stab-exact-equiv} in Appendix~\ref{sec:equiv-betw-stab} shows that this equivalence almost surely holds even when the cones are not closed.

\subsubsection{Polarizing the exact recovery condition}
\label{sec:polar-exact-recov-2}
Our final simplification reduces the  \(n+1\) intersections in~\eqref{eq:9} to a single intersection. 
\begin{lemma}\label{lem:polar-exact}
  Suppose that \(D_i \ne \{\zerovct\}\) for at least two indices \(i\in \{1,\dotsc,n,n+1\}\).  Then
  \begin{multline}
    \label{eq:polar-exact}
    \Prob\left\{ -\mtx Q_i D_i \cap \sumnl_{j\ne i} \mtx Q_j D_j = \{\zerovct\}\text{ for each } i = 1,\dotsc,n,n+1 \right\} \\ =\Prob\bigl\{\mtx Q_1 D_1^\polar \cap\dotsb\cap \mtx Q_{n} D_{n}^\polar \cap \mtx Q_{n+1} D_{n+1}^\polar \ne \{\zerovct\}\bigr\}
  \end{multline}
  where the matrices \((\mtx Q_i)_{i=1}^n\) are drawn i.i.d.\ from the random orientation model.
\end{lemma}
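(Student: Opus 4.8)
The plan is to collapse the $n+1$ simultaneous conditions on the left into a single geometric statement by moving to a product space, and then to recognize the two events as the two sides of a Gordan-type alternative for a closed convex cone and a subspace. Write $K_i := \mtx Q_i D_i$. The first step is the elementary observation that the family of conditions $-K_i\cap\sumnl_{j\ne i}K_j = \{\zerovct\}$, $i=1,\dots,n+1$, holds if and only if the Minkowski sum $K_1+\dots+K_{n+1}$ is \emph{direct}, meaning that $\vct v_j\in K_j$ with $\sumnl_j\vct v_j=\zerovct$ forces every $\vct v_j=\zerovct$. Indeed, a nontrivial relation $\sumnl_j\vct v_j=\zerovct$ with $\vct v_i\ne\zerovct$ exhibits the nonzero vector $-\vct v_i = \sumnl_{j\ne i}\vct v_j$ in $-K_i\cap\sumnl_{j\ne i}K_j$, and conversely any nonzero $\vct u$ in that intersection rearranges to a nontrivial relation. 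This rewrites the first probability as $\Prob\{K_1+\dots+K_{n+1}\text{ is direct}\}$.

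Next I would pass to $\R^{d(n+1)}$ and set $\mathbf D := K_1\times\dots\times K_{n+1}$ together with the subspace $L := \{(\vct y_1,\dots,\vct y_{n+1}) : \sumnl_j\vct y_j = \zerovct\}$, which has dimension $dn$. A nonzero point of $\mathbf D\cap L$ is exactly a nontrivial relation among the $K_j$, so the left event equals $\{\mathbf D\cap L = \{\zerovct\}\}$. Dualizing, the polar of a product cone is the product of polars, and $(\mtx Q_j D_j)^\polar = \mtx Q_j D_j^\polar$ since $\mtx Q_j$ is orthogonal, so $\mathbf D^\polar = \mtx Q_1 D_1^\polar\times\dots\times\mtx Q_{n+1}D_{n+1}^\polar$; also $L^\polar = L^\perp = \{(\vct z,\dots,\vct z):\vct z\in\R^d\}$. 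A point $(\vct z,\dots,\vct z)$ lies in $\mathbf D^\polar\cap L^\perp$ precisely when $\vct z\in\bigcap_j\mtx Q_jD_j^\polar$, so the right event equals $\{\mathbf D^\polar\cap L^\perp\ne\{\zerovct\}\}$. (Equivalently, $\bigcap_jK_j^\polar=(\sumnl_jK_j)^\polar$ and the bipolar theorem identify the right event with $\{\overline{K_1+\dots+K_{n+1}}\ne\R^d\}$.)

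It then remains to show $\Prob\{\mathbf D\cap L=\{\zerovct\}\}=\Prob\{\mathbf D^\polar\cap L^\perp\ne\{\zerovct\}\}$. I would deduce this from the dichotomy that, for a closed convex cone $C$ and a subspace $V$ of a Euclidean space, after a generic rotation exactly one of $C\cap V\ne\{\zerovct\}$ and $C^\polar\cap V^\perp\ne\{\zerovct\}$ holds. Two facts drive this. First, the two alternatives cannot both fail except in a degenerate case: if $C\cap V=\{\zerovct\}$ and $C^\polar\cap V^\perp=\{\zerovct\}$, then $(\overline{C+V})^\polar=C^\polar\cap V^\perp=\{\zerovct\}$ forces $\overline{C+V}=\R^d$; writing $-\vct c_0\in C+V$ for a nonzero $\vct c_0\in C$ and using $C\cap V=\{\zerovct\}$ forces $\vct c_0\in -C$, so $C$ is a subspace, and then a complement of $V$. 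Applied to $C=\mathbf D$ and $V=L$ (so $-V=L$), this requires every $D_i$ to be a linear subspace with $\sumnl_i\dim D_i=d$, which in turn forces each $\vct x_i^\nat$ to minimize $f_i$ — a trivial situation excluded here. Second, with that case removed, the tangential configurations in which both alternatives hold are non-transversal and occur on a set of rotations of measure zero. Hence almost surely $\pInd[\mathbf D\cap L\ne\{\zerovct\}] + \pInd[\mathbf D^\polar\cap L^\perp\ne\{\zerovct\}] = 1$, and taking expectations gives the claimed equality.

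The convex-duality manipulations are routine; the hard part will be the measure-zero bookkeeping in the last step. The cone $\mathbf D$ is rotated only through the block-diagonal subgroup $\orth{d}^{\,n+1}$, not the full rotation group of $\R^{d(n+1)}$, so the generic-position statement cannot simply be quoted for a uniformly random subspace and must be argued directly from the independence of the $\mtx Q_i$ and the continuity of the Haar measure (this is precisely where having at least one non-subspace descent cone is used). A secondary nuisance is that the descent cones $D_i$ are not closed in general, so one must check that replacing each $D_i$ by its closure — as the polarity identities and the bipolar theorem implicitly require — alters none of the events above outside a null set. Both of these are the sort of technical point that I would expect to be dispatched carefully in an appendix rather than in the main line of argument.
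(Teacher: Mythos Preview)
Your product-space reformulation is correct and is a genuinely different route from the paper's. The paper never lifts to $\R^{d(n+1)}$; instead it proves a two-cone proposition in $\R^d$ --- for fixed cones $C,D\ne\{\zerovct\}$ and Haar $\mtx Q$, the events $\{-C\cap\mtx QD=\{\zerovct\}\}$ and $\{C^\polar\cap\mtx QD^\polar\ne\{\zerovct\}\}$ agree off a null set --- and then applies it once for each nontrivial index $j$, taking $C=D_j$ and $\mtx QD=\sumnl_{k\ne j}\mtx Q_kD_k$ (after conditioning on the other rotations), so that each $E_j$ coincides with $E_\polar$ up to a null set. The advantage of staying in $\R^d$ is exactly the obstacle you flag: the ``randomly oriented cones touch with probability zero'' fact is available there for a genuine Haar rotation, so no block-diagonal issue arises. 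Your lift is conceptually cleaner --- one cone, one subspace, one alternative --- but you pay for it precisely where you predicted, and the most natural way to discharge that debt is to unwind the product and argue index by index, which is essentially the paper's argument.

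Two smaller corrections. First, your ``both fail'' case is vacuous: if $\mathbf D\cap L=\{\zerovct\}$, the separating-hyperplane theorem for cones already produces a nonzero $\vct z\in\mathbf D^\polar$ with $-\vct z\in L^\polar=L^\perp$, and since $L^\perp$ is a subspace this gives $\vct z\in\mathbf D^\polar\cap L^\perp$. So the inclusion $\{\mathbf D\cap L=\{\zerovct\}\}\subset\{\mathbf D^\polar\cap L^\perp\ne\{\zerovct\}\}$ is deterministic, and your subspace analysis (which also has a closure gap: $(\overline{C+V})^\polar=\{\zerovct\}$ yields only $\overline{C+V}=\R^{d(n+1)}$, not $C+V=\R^{d(n+1)}$) is unneeded. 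Second, your appeal to the hypothesis is off: ``at least two $D_i\ne\{\zerovct\}$'' does not exclude all $D_i$ being linear subspaces --- indeed $D_{n+1}=\nullity(\mtx A)$ always is one. In the paper the hypothesis serves a different purpose: it guarantees that for each nontrivial $D_j$ the complementary sum $\sumnl_{k\ne j}\mtx Q_kD_k$ is also nontrivial, which is the nondegeneracy the two-cone proposition requires.
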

\noindent  The demonstration appears in Appendix~\ref{sec:polar-exact-recov}, but we describe main difficulty here.   Let \(C,D \subset \R^d\) be two cones such that \(-C\cap D = \{\zerovct\}\).  The separating hyperplane theorem provides  a nonzero \(\vct w \in \R^d\) that weakly separates \(-C\) and \(D\):
\begin{equation*}
  \langle \vct w,-\vct x\rangle \ge 0 \text{ for all } \vct x \in C \qtq{and} \langle \vct w,\vct y\rangle \le 0 \text{ for all } \vct y \in D.
\end{equation*}
By definition of polar cones, we have \(\vct w \in C^\polar\cap D^\polar\), so that polar cones intersect nontrivially. 

On the other hand, reversing the argument above shows that any nonzero  \( \vct w \in C^\polar \cap D^\polar \ne \{\zerovct\}\) weakly separates \(-C\) from \(D\).  Unfortunately, weak separation is not enough to conclude the strong separation \(-C\cap D = \{\zerovct\}\).  Proposition~\ref{prop:polar-prop} in Appendix~\ref{sec:polar-exact-recov} shows that the event \(C^\polar\cap D^\polar \ne \{\zerovct\}\) {almost surely} implies the event \(-C\cap D = \{\zerovct\}\)  when \(C\) and \(D\) are randomly oriented.  The proof of Lemma~\ref{lem:polar-exact} bootstraps this result to the multiple cone case.

\subsection{The approximate kinematic formula}
\label{sec:appr-kinem-form}

The simplifications in Section~\ref{sec:two-simplifications-} reduce the  study of~\eqref{eq:erc} and~\eqref{eq:src} to the question of computing the probability~\eqref{eq:polar-exact} that randomly oriented cones intersect.  Remarkably, formulas for the probability that two randomly oriented cones share a ray appear in literature on stochastic geometry under the name \emph{kinematic formulas}~\cite{San:76,Gla:95}. While exact, these formulas involve  geometric parameters that are typically difficult to compute. 

In recent work, the present authors and collaborators demonstrate that the classical kinematic formulas can be summarized using the statistical dimension~\cite[Thm.~7.1]{AmeLotMcC:13}. The following result extends this formula to the intersection of an arbitrary number of randomly oriented cones.

\begin{theorem}[Approximate kinematic formula]\label{thm:appr-kin}
  Let \(C_1,\dotsc,C_{n-1},C_n\in \cC_d\) be closed, convex cones and \(L\subset \R^d\) an \(m\)-dimensional linear subspace. Define the parameters
  \begin{equation}\label{eq:omega-theta}
    \Omega := \sumnl_{i=1}^n \sdim(C_i) \qtq{and} \theta^2 := \sumnl_{i=1}^n \sdim(C_i)\wedge \sdim(C_i^\polar) .
  \end{equation}
  Suppose that \((\mtx Q_i)_{i=1}^{n+1}\) are i.i.d. random rotations.  Then for any \(\lambda >0\), 
  \begin{align}\label{eq:appr-kin-upper}
   \Omega + m \le nd-\lambda  &\implies \Prob\bigl\{\mtx Q_1 C_1\cap \dotsb\cap \mtx Q_n C_n \cap \mtx Q_{n+1} L \ne \{\zerovct\}\bigr\} \le \phantom{1{}-{}} p_\theta(\lambda); \\
    \label{eq:appr-kin-lower}
\Omega+ m \ge nd+\lambda  &\implies \Prob\bigl\{\mtx Q_1 C_1 \cap \dotsb\cap \mtx Q_n C_n \cap \mtx Q_{n+1} L \ne \{\zerovct\}\bigr\} \ge 1-p_\theta(\lambda).
  \end{align}
  The concentration function \(p_\theta(\lambda)\) is defined for \(\lambda >0\) by
  \begin{equation}\label{eq:pclambda}
    p_\theta(\lambda) :=  \exp\left(\frac{-\lambda^2/4}{\theta^2+ \lambda/3}\right).
  \end{equation}
\end{theorem}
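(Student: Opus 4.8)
The plan is to reduce Theorem~\ref{thm:appr-kin} to the two-cone version of the approximate kinematic formula in~\cite[Thm.~7.1]{AmeLotMcC:13} by collapsing the $(n+1)$-fold intersection into a single intersection inside a product space. The main device is the standard trick of replacing a configuration of cones in $\R^d$ by their Cartesian product inside $\R^{nd}$ (or $\R^{(n+1)d}$) and realizing the ``all share a ray'' event as an intersection of the product cone with a suitable diagonal-type subspace.

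First I would set up the product cone $C := C_1 \times \dotsb \times C_n \times L \subset \R^{(n+1)d}$. By the product rule~\eqref{eq:sdim-prod} for statistical dimension, $\sdim(C) = \sum_{i=1}^n \sdim(C_i) + \sdim(L) = \Omega + m$, since $\sdim(L) = \dim L = m$ (a subspace is its own projection). Next I would encode the rotations: the randomly oriented cones $\mtx Q_1 C_1, \dotsc, \mtx Q_n C_n, \mtx Q_{n+1} L$ share a ray precisely when there is a nonzero $\vct x \in \R^d$ and vectors $\vct y_i \in C_i$, $\vct y_{n+1} \in L$ with $\mtx Q_i \vct y_i = \vct x$ for all $i$. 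Equivalently, writing $\mtx Q := \mtx Q_1 \oplus \dotsb \oplus \mtx Q_{n+1}$, this says $\mtx Q C$ meets the ``diagonal'' subspace $\Delta_{(n+1)d} := \{(\vct x, \dotsc, \vct x) : \vct x \in \R^d\}$ nontrivially. The subspace $\Delta_{(n+1)d}$ has dimension $d$ and codimension $nd$ inside $\R^{(n+1)d}$.

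The slightly delicate point is that $\mtx Q = \mtx Q_1 \oplus \dotsb \oplus \mtx Q_{n+1}$ is a block-diagonal rotation, not a Haar-random rotation on $\R^{(n+1)d}$, so I cannot quote the two-cone kinematic formula directly for the pair $(C, \Delta_{(n+1)d})$. However, the event $\{\mtx Q C \cap \Delta_{(n+1)d} \ne \{\zerovct\}\}$ has the same probability as $\{\mtx Q' C \cap \mtx Q'' \Delta_{(n+1)d} \ne \{\zerovct\}\}$ for \emph{any} fixed rotations $\mtx Q', \mtx Q''$ absorbed into the blocks, and more usefully, since each $\mtx Q_i$ is Haar-random, the distribution of the $n+1$-tuple of subspaces/cones $(\mtx Q_i C_i)$ is invariant under independently rotating each block again; this lets me insert a fully Haar-random $\mtx Q_0$ on $\R^{(n+1)d}$ applied to $\Delta_{(n+1)d}$ without changing the probability, because the diagonal subspace's orbit under block-diagonal rotations, after pre-composing with a uniform rotation, generates the full Grassmannian orbit in the relevant sense. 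Once I have a genuinely Haar-random rotation between $C$ and a subspace of codimension $nd$, I apply~\cite[Thm.~7.1]{AmeLotMcC:13}: the intersection is nontrivial with probability controlled by $p_\theta(\lambda)$ once $\sdim(C) + \dim(\Delta_{(n+1)d}) = (\Omega + m) + d$ is compared against the ambient dimension $(n+1)d$, i.e. once $\Omega + m$ is compared against $nd$, which is exactly the threshold in~\eqref{eq:appr-kin-upper}--\eqref{eq:appr-kin-lower}.

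Finally, I would verify that the concentration parameter coming out of the two-cone formula is $\theta^2 = \sdim(C) \wedge \sdim(C^\polar)$ applied in $\R^{(n+1)d}$ and that this simplifies to the stated $\theta^2 = \sum_i \sdim(C_i) \wedge \sdim(C_i^\polar)$. Here $C^\polar = C_1^\polar \times \dotsb \times C_n^\polar \times L^\perp$, so by~\eqref{eq:sdim-prod} and~\eqref{eq:sdim-polar}, $\sdim(C^\polar) = \sum_i \sdim(C_i^\polar) + (d - m)$. The min $\sdim(C) \wedge \sdim(C^\polar)$ is then bounded by $\sum_i [\sdim(C_i) \wedge \sdim(C_i^\polar)] + \min(m, d-m)$, and a more careful accounting—tracking which of $C_i$ or $C_i^\polar$ is smaller term by term, together with handling the subspace factor whose statistical dimension splits as $m + (d-m) = d$—yields exactly the formula~\eqref{eq:omega-theta}; this bookkeeping, rather than any geometric subtlety, is where I expect to spend the most care. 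The main obstacle is the reduction from the block-diagonal rotation to a bona fide Haar-random rotation so that~\cite[Thm.~7.1]{AmeLotMcC:13} applies verbatim; once that is in place the rest is the product-rule dimension count and the $\theta^2$ simplification.
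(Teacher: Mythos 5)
Your reduction to the product space is a natural idea, and the diagonal encoding of the event is correct: the randomly rotated cones share a ray iff $(\mtx Q_1\oplus\dotsb\oplus\mtx Q_{n+1})(C_1\times\dotsb\times C_n\times L)$ meets the diagonal subspace of $\R^{(n+1)d}$, and the threshold count $\Omega+m$ versus $nd$ does come out right. But there are two genuine gaps. First, the step where you ``insert a fully Haar-random $\mtx Q_0$ applied to the diagonal without changing the probability'' is not justified by invariance. The tuple $(\mtx Q_i C_i)$ is invariant only under block-diagonal rotations, and the orbit of the diagonal subspace under the block group $(\orth{d})^{n+1}$ is a measure-zero subvariety of the Grassmannian of $d$-planes in $\R^{(n+1)d}$ (compare dimensions: roughly $(n+1)d(d-1)/2$ versus $nd^2$), so block-averaging does not produce a uniformly random subspace. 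The statement you need --- that the intersection probability for the block-rotated configuration equals the one computed from a genuine Haar rotation in the product space --- is essentially equivalent to the iterated kinematic (Crofton) formula itself, which is exactly what the paper proves by induction from the two-cone exact kinematic formula; it cannot be obtained as a corollary of the two-cone statement by an orbit argument.

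Second, even granting that reduction, quoting the two-cone approximate kinematic formula of Amelunxen et al.\ as a black box cannot produce the stated transition width. Applied to the pair (product cone, $d$-dimensional subspace) in $\R^{(n+1)d}$, that theorem's concentration scale is governed by the product cone treated as a single cone --- at best $\sdim(C)\wedge\sdim(C^\polar)$ with $C=C_1\times\dotsb\times C_n\times L$, and in the cited form by the ambient dimension $(n+1)d$. This can be vastly larger than $\theta^2=\sumnl_i \sdim(C_i)\wedge\sdim(C_i^\polar)$: for instance, with $n=2$, $\sdim(C_1)=1$ and $\sdim(C_2)=d-1$ one has $\theta^2=2$, while $\sdim(C)\wedge\sdim(C^\polar)$ is of order $d$. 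So the ``bookkeeping'' you defer is not bookkeeping; the refined parameter $\theta$ requires opening up the proof and using the independence structure of the intrinsic-volume random variables of the factors (the product rule makes $V_{C_1\times\dotsb\times C_n}$ a sum of independent variables, each with Bernstein variance proxy $\sdim(C_i)\wedge\sdim(C_i^\polar)$), which is the content of the paper's concentration result for tail functionals, following McCoy--Tropp. The paper then finishes with the iterated Crofton formula, the interlacing inequalities between half-tail and tail functionals, an index shift for the subspace factor, and the duality identity $t_k(C)=1-t_{d-k+1}(C^\polar)$ for the lower branch; some version of those ingredients (or an inductive use of the two-cone exact formula) is unavoidable in your route as well.
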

\noindent The proof of this result forms the topic of Appendix~\ref{sec:proof-appr-kinem}.  The argument requires some background from conic integral geometry that we provide in Appendix~\ref{sec:conic-integr-geom}. The proof of Theorem~\ref{thm:appr-kin} appears in Appendix~\ref{sec:proof-appr-kinem-1}. %

\subsection{Completing the proof}
\label{sec:completing-proof}

At this point, we have presented all of the  components needed to complete the proof of Theorem~\ref{thm:mult-bd}.  Let us summarize the progress. Lemma~\ref{lem:achievable-stab} shows that \eqref{eq:erc} and~\eqref{eq:src} characterize  exact and stable recovery.   Under the random orientation model, the probability that the stable recovery condition~\eqref{eq:src} holds is equal to the probability that the exact recovery condition~\eqref{eq:erc}  holds (Lemma~\ref{lem:stab-exact-equiv}).  We have also seen that
\begin{equation}\label{eq:erc-polar}
  \Prob\{\eqref{eq:erc}\text{ holds}\} = \Prob\bigl\{\mtx Q_1 D_1^\polar \cap\dotsb\cap \mtx Q_{n} D_{n}^\polar \cap \mtx Q_{n+1} D_{n+1}^\polar \ne \{\zerovct\}\bigr\}
\end{equation}
so long as \(D_i\ne \{\zerovct\}\) for at least two indices \(i\in \{1,\dotsc,n,n+1\}\) (Lemmas~\ref{lem:rand-nullspace} and~\ref{lem:polar-exact}).  

To complete the proof of Theorem~\ref{thm:mult-bd},  we use the approximate kinematic formula of Theorem~\ref{thm:appr-kin} to develop  lower and upper bounds on the probability~\eqref{eq:erc-polar}. This establishes the implications~\eqref{eq:success-mult-demix} and~\eqref{eq:fail-mult-demix}  when \(D_i\ne \{\zerovct\}\) for at least two indices \(i\).  We defer the  degenerate case where  \(D_i=\{\zerovct\}\) for all except (possibly) one index \(i\) to Appendix~\ref{sec:degenerate-case}.

\begin{proof}[Proof of Theorem~\ref{thm:mult-bd}] 

  We assume that \(D_i\ne \{\zerovct\}\) for at least two indices \(i\).   The polarity formula for the statistical dimension~\eqref{eq:sdim-polar} implies
  \begin{equation*}%
    \sumnl_{i=1}^n \sdim(D_i^\polar)+ m = \sumnl_{i=1}^n \left(d- \sdim(\overline{D}_i)\right) +m = nd - \Delta+m,
  \end{equation*}
  where we use the fact that \(\delta_i=\sdim(\overline{D}_i)\) by definitions~\eqref{eq:sdim_i} and~\eqref{eq:cone-defs} of \(\delta_i\) and \(D_i\).  For the same reason, we have
  \begin{equation*}
    \sigma^2  = \sumnl_{i=1}^n \sdim(D_i^\polar)\wedge \sdim(\overline{D}_i).
  \end{equation*}
  where  the width parameter \(\sigma\) is defined in~\eqref{eq:Delta-sigma}. Moreover,  definition~\eqref{eq:cone-defs} shows that the cone \(D_{n+1}^\polar\) is a linear subspace with
  \begin{equation*}
    \dim(D_{n+1}^\polar)=\dim(\nullity(\mtx A)^\perp)=  m
  \end{equation*}
  because \(\mtx A\in \R^{m\times d}\) has full row rank by assumption.  

  Therefore, when \(m \ge \Delta+ \lambda_*\) the lower bound~\eqref{eq:appr-kin-lower} of the approximate kinematic formula implies %
  \begin{equation}\label{eq:lower-pf}
    \Prob\bigl\{\mtx Q_1 D_1^\polar \cap\dotsb\cap \mtx Q_{n} D_{n}^\polar \cap \mtx Q_{n+1} D_{n+1}^\polar \ne \{\zerovct\}\bigr\}  \ge 1-p_\sigma(\lambda_*). %
  \end{equation}  
  Similarly, when \(m \le \Delta - \lambda_*\), the upper bound~\eqref{eq:appr-kin-upper} of the approximate kinematic formula provides
  \begin{equation}\label{eq:upper-pf}
    \Prob\bigl\{\mtx Q_1 D_1^\polar \cap\dotsb\cap \mtx Q_{n} D_{n}^\polar \cap \mtx Q_{n+1} D_{n+1}^\polar \ne \{\zerovct\}\bigr\}  \le p_\sigma(\lambda_*)%
  \end{equation}
  
In light of Lemmas~\ref{lem:achievable-stab}, \ref{lem:rand-nullspace}, \ref{lem:stab-exact-equiv}, and~\ref{lem:polar-exact},  the inequalities~\eqref{eq:lower-pf} and~\eqref{eq:upper-pf} imply claims~\eqref{eq:success-mult-demix} and~\eqref{eq:fail-mult-demix}  once we verify that
  \begin{equation}\label{eq:psig-bd}
    p_\sigma(\lambda_*) \le \eta.
  \end{equation}
  To verify~\eqref{eq:psig-bd}, we  invert the definition~\eqref{eq:pclambda} of \(p_\sigma\) and solve a quadratic equation to find
\begin{equation*}
  p_\sigma(\lambda) \le \eta \iff \lambda \ge \frac{2}{3}\left(L + \sqrt{L^2 + 9 L \sigma^2}\right),
\end{equation*}
where \(L:=\log(1/\eta)\).  Since \(\sqrt{a^2+b^2} \le a+b\) for positive \(a\) and \(b\), we see
\begin{equation*}
  \frac{2}{3}\left(L + \sqrt{L^2 + 9 L \sigma^2}\right) \le \frac{4}{3} L + 2 \sigma\sqrt{L} =: \lambda_*.
\end{equation*}
Thus~\eqref{eq:psig-bd} holds for our choice \(\lambda_*\), as claimed.  This completes the proof in the case where  \(D_i \ne  \{\zerovct\}\) for at least two indices \(i\).  We complete the proof for the remaining case in Appendix~\ref{sec:degenerate-case}.
\end{proof}

\section{Numerical examples}
\label{sec:applications}

In this section, we describe two simple numerical experiments that demonstrate the accuracy of Theorem~\ref{thm:mult-bd}. In our first example, we consider demixing three components, two of them sparse, the third a sign vector.  Our second example considers demixing two sparse vectors with undersampling. Technical details about the experiments are collected in Appendix~\ref{sec:numerical-details}.
\begin{figure}[t!]
  \centering
  \includegraphics[width=0.5\columnwidth]{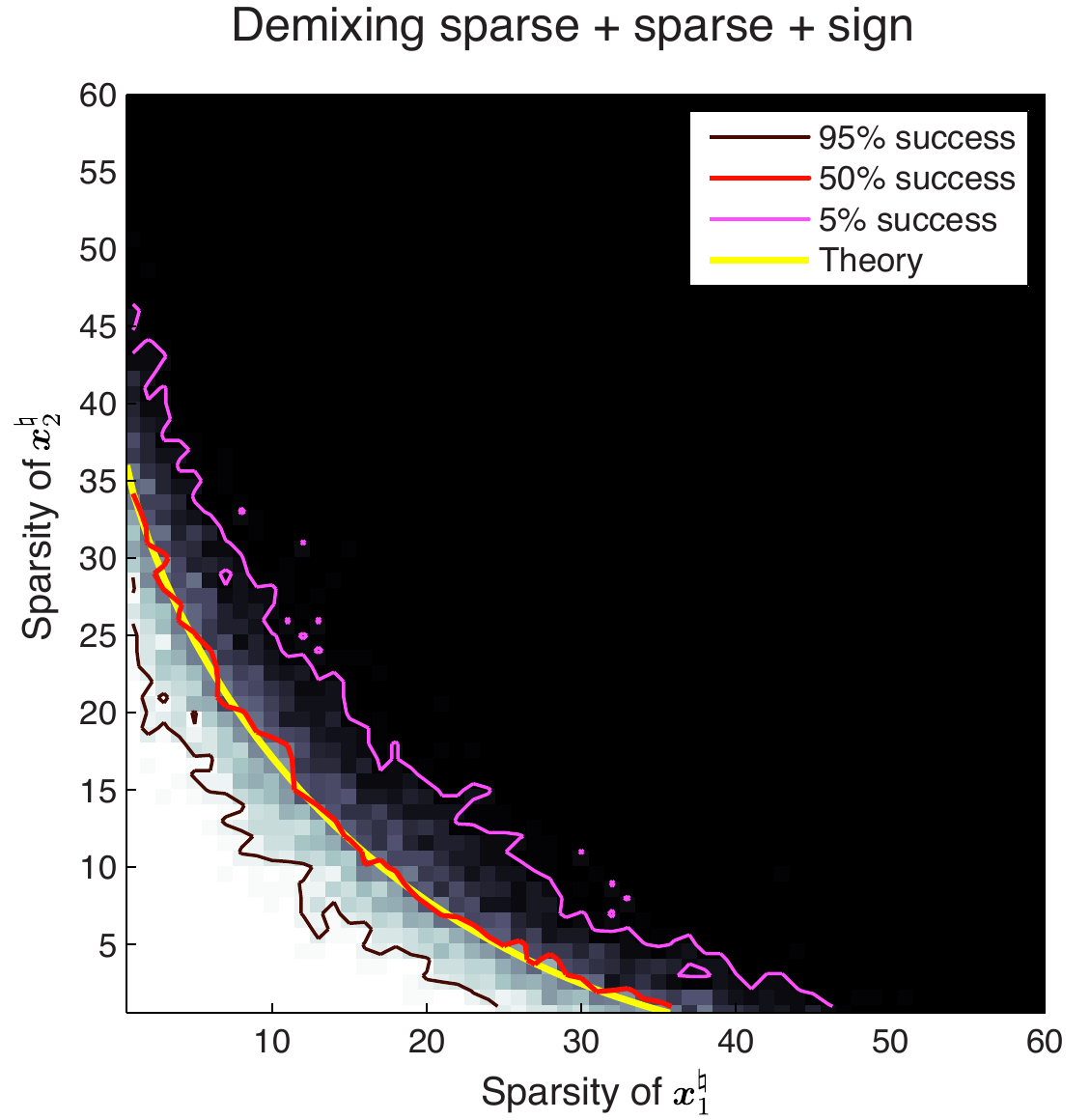}      
  \includegraphics[width=0.48\columnwidth]{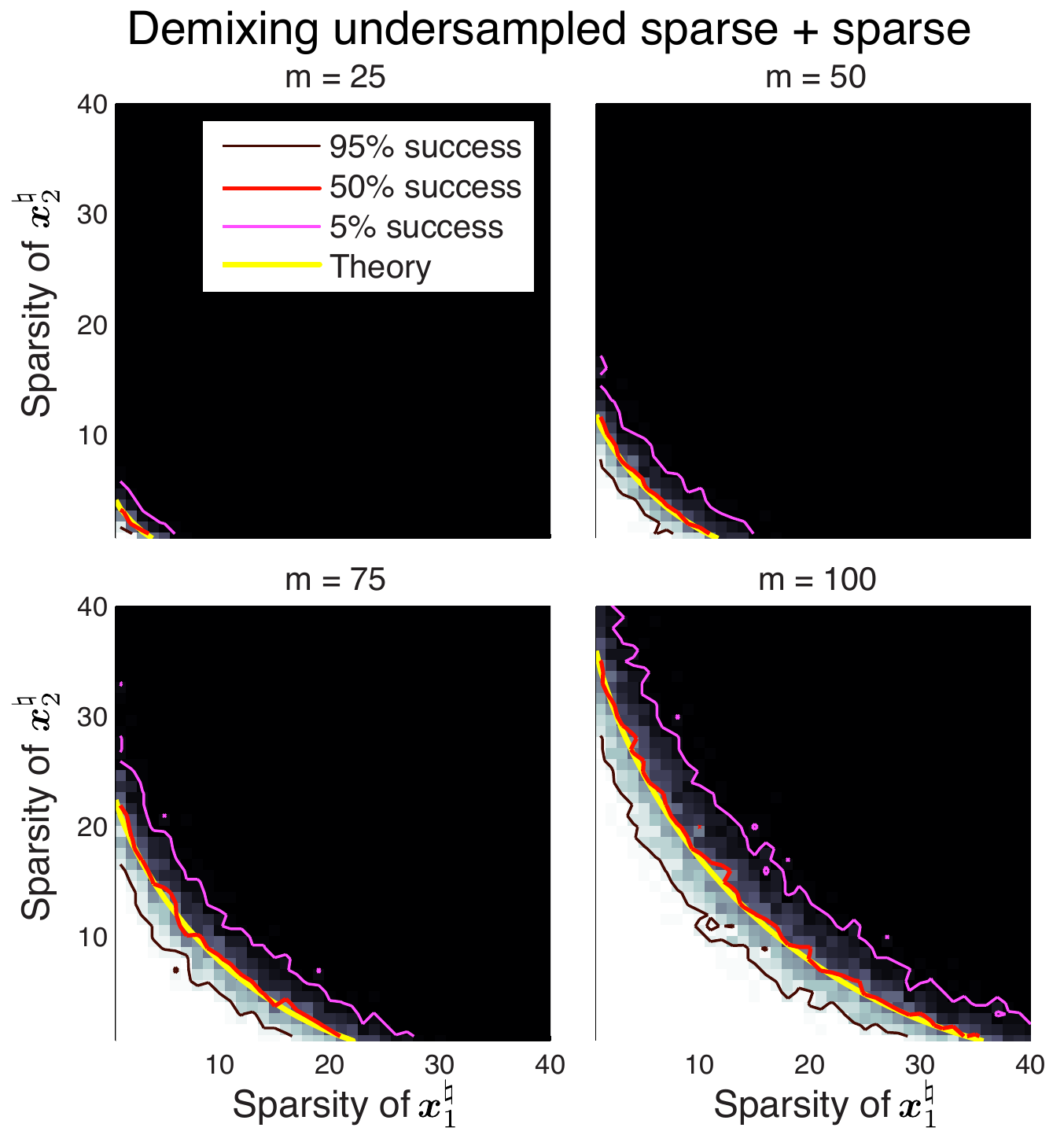}
  \caption{\textsf{Demixing experiments from Section~\ref{sec:applications}.}  The colormaps display the empirical probability of successful demixing, from complete success \textsl{(white)} to total failure \textsl{(black)}.  The transition region \textsl{(gray)} contains a mixture of successes and failures. Three contour lines indicate \(95\%\) \textsl{(brown)}, \(50\%\) \textsl{(red)}, and \(5\%\) \textsl{(pink)} empirical success lines.  The yellow curve indicates where an  approximation to the total statistical dimension \(\Delta\) equals to the number of measurements \(m\).  \textbf{[Left]} Demixing two sparse vectors from a sign vector in dimension \(d=200\) with complete measurements.  \textbf{[Right]} Demixing two sparse vectors in dimension \(d=200\) from \(m=25,50,75,100\) measurements.   }
  \label{fig:sparse-sparse-sign}
\end{figure}

\paragraph*{Sparse, sparse,  and sign}
\label{sec:sparse-sparse-sign}

In  our first  experiment, we fix the ambient dimension \(d=200\) and generate a mixed observation of the form
\begin{equation*}
  \vct z_0 = \mtx U_1 \vct x_1^\nat + \mtx U_2 \vct x_2^\nat + \mtx U_3 \vct x_3^\nat \in \R^d,
\end{equation*}
where \(\vct x_1^\nat\) and \(\vct x_2^\nat\) are sparse vectors, \(\vct x_3^\nat\in \{\pm 1\}^d\) is a sign vector, and the tuple \((\mtx U_i)_{i=1}^3\) consists of i.i.d. random rotations. In order to demix this observation, we solve the constrained demixing program
\begin{equation}\label{eq:sss-demix}
  \begin{aligned}
    \minimizeOp_{\vct x_i \in \R^d}\quad &\enormsq{ \mtx U_1\vct x_1+\mtx U_2\vct x_2+\mtx U_3\vct x_3 - \vct z_0 }  
    \\ 
    \subjectto &  \lone{\vct x_1}\le \lone{\vct x_1^\nat}, \quad \lone{\vct x_2}\le \lone{\vct x_2^\nat}, \qtq{and} \linf{\vct x_3}\le \linf{\vct x_3^\nat},
  \end{aligned}
\end{equation}
where \(\linf{\vct x}:= \max_{i=1,\dotsc,d} |x_i|\) is the \(\ell_\infty\) norm that is a convex penalty function associated to the binary sign vectors \(\{\pm 1\}^d\).   

Figure~\ref{fig:sparse-sparse-sign} [left] shows the results of this experiment as the sparsity of \(\vct x_1^\nat\) and \(\vct x_2^\nat\) vary.  The colormap indicates the empirical probability of success over \(35\) trials.  The yellow curve uses provably accurate formulas from~\cite[Sec.~4]{AmeLotMcC:13} to approximate the location where
\begin{equation*}
  \sdim(\lone{\cdot},\vct x_1^\nat) +  \sdim(\lone{\cdot},\vct x_2^\nat) +   \sdim(\linf{\cdot},\vct x_3^\nat) = d.
\end{equation*}
The agreement between the \(50\%\) empirical success curve and the theoretical yellow curve is remarkable.  See Appendix~\ref{sec:numerical-details} for further details.

\paragraph*{Undersampled sparse and sparse}
\label{sec:unders-sparse-sparse}

In our second experiment, we  fix the ambient dimension \(d=200\) and consider demixing the observation
\begin{equation*}
  \vct z_0 = \mtx A\bigl(\mtx U_1\vct x_1^\nat + \mtx U_2\vct x_2^\nat\bigr) ,
\end{equation*}
where \(\mtx A \in \R^{m\times d}\) has full row rank, the constituents \(\vct x_1^\nat\) and \(\vct x_2^\nat\) are sparse, and \(\mtx U_1\) and \(\mtx U_2\) are drawn from the random orientation model.  We demix the observation by solving
\begin{equation}\label{eq:ssu}
  \begin{aligned}
    \minimizeOp_{\vct x_i \in \R^d}\quad &\enormsq{ \mtx A^\psinv\bigl( \mtx A(\mtx U_1\vct x_1 + \mtx U_2 \vct x_2)- \vct z_0 \bigr)}  
    \\ 
    \subjectto &  \lone{\vct x_1}\le \lone{\vct x_1^\nat} \qtq{and} \lone{\vct x_2}\le \lone{\vct x_2^\nat}
  \end{aligned}
  \end{equation}
  The results of this experiment with \(m=25,50,75\) and \(100\) appear in Figure~\ref{fig:sparse-sparse-sign} [right].  The colormaps indicate the empirical probability of success over \(35\) trials as the sparsity of \(\vct x_1^\nat\) and \(\vct x_2^\nat\) varies. The yellow curve approximates the location where
  \begin{equation*}
    \sdim(\lone{\cdot},\vct x_1^\nat)+    \sdim(\lone{\cdot},\vct x_2^\nat) = m.
  \end{equation*}
  Once again, this yellow curve agrees very well with the red \(50\%\) empirical success contour. 

\section{Conclusions and open problems}
\label{sec:open-problems}

This work unifies and resolves a number of theoretical questions regarding when it is possible to demix a superposition of incoherent signals.  Under our random incoherence model, we find  that demixing is possible if and only if the total number of measurements is greater than the total statistical dimension.  While this result provides an intuitive and unifying theory for a large class of demixing problems, there are several important open problems that must be addressed before a complete ``theory of demixing'' emerges.
\begin{description}
\item[Lagrange parameters] Most of the prior work on demixing provides guarantees under explicit choices of the Lagrange parameter for~\eqref{eq:noisy-demix-lag}, yet to the best of our knowledge, the only work that demonstrates sharp recovery bounds with specified Lagrange parameters occur for the LASSO problem, where \(n=1\) and \(f_1=\lone{\cdot}\)~\cite{BayMon:12}.  Very recent work of Stojnic~\cite{Sto:13} achieves comparable guarantees using a different approach.

Explicit choices of Lagrange parameters appear in~\cite{FoyMac:13}. These choices provide near-optimal empirical performance, but currently there is no proof that their choice of parameters reaches the phase transition that we identify.  It would be very interesting to provide provably optimal choices of Lagrange parameters for demixing.

\item[Other random models] Our numerical experience indicates that the incoherence model considered in this work is predictive for highly incoherent situations.  However, these results appear overly optimistic in more coherent situations.  The difference between these situations  appears, for example, in an application to calcium imaging~\cite[Fig.~3]{PnePan:13}.  Extending our results to other incoherence models will clarify where the phase transition in Theorem~\ref{thm:mult-bd}  predicts empirical performance, and where it does not.   

\item[Statistical dimension calculations] For practical applications of this work, we require  accurate statistical dimension calculations. A recipe for these computations put forward in~\cite{ChaRecPar:12} has provable guarantees under some technical conditions (cf.~\cite[Sec.~4.4]{AmeLotMcC:13} and~\cite[Prop.~1]{FoyMac:13}), but expressions for the statistical dimension of a number of important convex regularizers remains unknown. New statistical dimension computations immediately extend the reach of the methods used in this paper. 
\end{description}

\appendix

\section{Deterministic conditions}
\label{sec:determ-cond}

This section provides the deterministic demixing claims of Lemma~\ref{lem:achievable-stab}. The exact recovery conditions for the noiseless setting appear in Section~\ref{sec:exact-recovery}, and the stable recovery guarantees appear in Section~\ref{sec:stable-recovery}.

\subsection{Exact recovery}
\label{sec:exact-recovery}
 In this section, we show that, in the noiseless setting \(\vct w = \zerovct\) , the tuple \((\vct x_i^\nat)_{i=1}^n\) is the unique optimal point of~\eqref{eq:const-dmx} if and only if~\eqref{eq:erc} holds.  

 Suppose first that~\eqref{eq:erc} holds,  and let \((\hvct x_i)_{i=1}^n\) be any optimal point of~\eqref{eq:const-dmx}. Define the vectors
  \begin{equation}\label{eq:yi-def}
    \vct y_i := \hvct x_i - \vct x_i^\nat \qtq{for}i=1,\dotsc,n-1,n \qtq{and} 
    \vct y_{n+1}:= -\sumnl_{i=1}^n \mtx U_i (\hvct x_i -\vct x_i^\nat). 
  \end{equation}
  We  will show that \(\vct y_{i}=\zerovct\) for each \(i=1,\dotsc,n,n+1\).    The tuple \((\vct x_i^\nat)_{i=1}^n\) is trivially feasible for~\eqref{eq:const-dmx}, and the objective at this  point is given by
  \begin{equation}\label{eq:6}
    \enorm{\mtx A^\psinv \left(\mtx A \sumnl_{i=1}^n \mtx U_i\vct x_i^\nat - \vct  z_0 \right)} = \enorm{\mtx A^\psinv \mtx A\zerovct } =    0,
  \end{equation}
 by definition~\eqref{eq:obs-model} of \(\vct z_0\) and the assumption \(\vct w=\zerovct\).  Since \((\hvct x_i)_{i=1}^n\) is optimal for~\eqref{eq:cone-defs} by assumption, its objective value must be less than the value given in~\eqref{eq:6}, which implies
  \begin{equation*}
0 \ge \enorm{\mtx A^\psinv\left( \mtx A \sumnl_{i=1}^n \mtx U_i\hvct x_i -\vct z_0\right)} =   \enorm{\mtx A^\psinv\mtx A (-\vct y_{n+1})} \ge 0
  \end{equation*}
  where the equality follows by the definition~\eqref{eq:yi-def}  of \(\vct y_{n+1}\).  These sandwiched inequalities imply that 
  \begin{equation}\label{eq:1}
    \vct y_{n+1} \in \nullity(\mtx A)= D_{n+1}.
  \end{equation}
  by the definition~\eqref{eq:cone-defs}  of  \(D_{n+1}\).  Since an optimal point of~\eqref{eq:const-dmx} is also feasible for~\eqref{eq:const-dmx}, we also have \( f_i(\hvct x_i)\le  f_i(\vct x_i^\nat)\).  The definition~\eqref{eq:desc-cone-def} of the descent cone implies that
  \begin{equation}\label{eq:2}
    \vct y_i =  (\hvct x_i -\vct x_i^\nat)\in \Desc(f_i,\vct x_i^\nat)= D_i.
  \end{equation}
  By expanding the definition of \(\vct y_i\), we find the trivial relation
  \begin{equation*}
    \sumnl_{i=1}^{n+1}\vct U_{i} \vct y_i = \sumnl_{i=1}^n \mtx U_i(\hvct x_i -\vct x_i^\nat)-\mtx U_{n+1}\left( \sumnl_{i=1}^n \mtx U_i(\hvct x_i -\vct x_i^\nat)\right) = \zerovct,
  \end{equation*}
where we recall that \(\mtx U_{n+1}=\Id\) by definition~\eqref{eq:cone-defs}. Upon rearrangement, this equation is equivalent to
  \begin{equation*}
    \mtx U_i \vct y_i = - \sumnl_{j\ne i}\mtx U_j \vct y_j \qtq{for each} i =1,\dotsc,n,n+1.
  \end{equation*}
 Combined with the containments~\eqref{eq:1} and~\eqref{eq:2}, the relations above imply
 \begin{equation*}
   \mtx U_i \vct y_i \in \mtx U_i D_i \cap -\left(\sumnl_{j\ne i} \mtx U_j D_j\right) =\{\zerovct\}\qtq{for each} i = 1,\dotsc,n,n+1.
 \end{equation*}
 where the trivial intersection follows because the exact recovery condition~\eqref{eq:erc} is in force.  Since each \(\mtx U_i\) is invertible, we must have \(\vct y_i = \vct x_i^\nat-\hvct x_i  = \zerovct\) for each \(i=1,\dotsc,n\).  But \((\hvct x_i)_{i=1}^n\) was an arbitrary optimal point of~\eqref{eq:const-dmx},  so condition~\eqref{eq:erc} indeed implies that the tuple \((\vct x_i^\nat)_{i=1}^n\) is the unique optimum of~\eqref{eq:const-dmx}.

 Now suppose that~\eqref{eq:erc} does not hold.  Then there exists an index \(i_*\in \{1,\dotsc,n,n+1\}\) and a vector \(\vct y_{i_*}\ne \zerovct \) such that
 \begin{equation*}
   \mtx U_{i_*} \vct y_{i_*}  \in \mtx U_{i_*} D_{i_*} \cap -\left(\sumnl_{j\ne i_*}\mtx U_j D_j\right).
 \end{equation*}
  Equivalently,  there are vectors \(\vct y_j \in D_j\) such that
 \begin{equation}\label{eq:3}
   \mtx U_{i_*}\vct y_{i_*} = - \sumnl_{j\ne i_*} \mtx U_j \vct y_j. 
 \end{equation}
It follows from the definition of the descent cone and a basic convexity argument (cf.~\cite[Prop.~2.4]{McCTro:12})  that for some sufficiently small \(\tau >0\),
 \begin{equation}\label{eq:4}
    f_i(\tau \vct y_i + \vct x_i^\nat) \le f_i(\vct x_i^\nat) \qtq{for all} i = 1,\dotsc,n-1,n.
 \end{equation}
 Define \(\hvct x_i := \tau \vct y_i + \vct x_i^\nat\).    The definition \(\vct z_0 = \mtx A\left(\sum_{i=1}^n \mtx U_i \vct x_i^\nat\right)\) and~\eqref{eq:3} implies
 \begin{equation}\label{eq:5}
   \enorm{\mtx A^\psinv \left(\mtx A \sumnl_{i=1}^n \mtx U_i\hvct x_i - \vct  z_0 \right)} =     \enorm{\mtx A^\psinv \mtx A (-\tau\mtx U_{n+1}\vct y_{n+1})} = 0.
 \end{equation}
 The final equality follows because \(\mtx U_{n+1}\vct y_{n+1} \in \nullity(\mtx A)\) by definition~\eqref{eq:cone-defs}.

To summarize, Equation~\eqref{eq:4} shows that the tuple \((\hvct x_i)_{i=1}^n\) is feasible for the constrained demixing program~\eqref{eq:const-dmx}, while Equation~\eqref{eq:5} indicates that the objective value at \((\hvct x_i)_{i=1}^n\) is the minimum possible.  Therefore, \((\hvct x_i)_{i=1}^n\) is an optimal point of~\eqref{eq:const-dmx}. Since \(\vct y_{i_*}\ne \zerovct\)  and \(\tau>0\), we see  that \((\hvct x_i)_{i=1}^n \ne (\vct x_i^\nat)_{i=1}^n \).  We conclude that  \((\vct x_i^\nat)_{i=1}^n\) is \emph{not} the unique optimal point of~\eqref{eq:const-dmx} when~\eqref{eq:erc} fails to hold, which completes the exact recovery portion of Lemma~\ref{lem:achievable-stab}.

\subsection{Stable recovery}
\label{sec:stable-recovery}
The stable recovery claims of Lemma~\ref{lem:achievable-stab}  immediately follow from the next result.
\begin{lemma}[Stability of constrained demixing]\label{lem:const-noise}
  With the notation from definition~\eqref{eq:cone-defs}, suppose that there exists an \(\eps \in (0,1)\) such that
  \begin{equation}\label{eq:cone-separation}
    \left\llangle -\mtx U_i D_i, \sumnl_{j\ne i}\mtx U_j D_j\right\rrangle \le 1-\eps \qtq{for all} i = 1,\dotsc, n, n+1.
  \end{equation}
  Then the error bound
  \begin{equation}\label{eq:const-err-bd}
    \enormsq{\hvct x_i - \vct x_i^\nat} 
    \le \frac{1}{\eps} \enormsq{\vct w} \qtq{for all} i = 1,\dotsc,n-1,n
  \end{equation}
holds for any optimal point \((\hvct x_i)_{i=1}^n\) of~\eqref{eq:const-dmx}.
\end{lemma}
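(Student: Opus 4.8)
The plan is to run the noiseless argument of Section~\ref{sec:exact-recovery} in quantitative form: the exact identity $\sumnl_{i=1}^{n+1}\mtx U_i\vct y_i=\zerovct$ used there becomes an approximate one driven by the noise, and the angle hypothesis~\eqref{eq:cone-separation} converts this into the error bound. First I would fix an optimal point $(\hvct x_i)_{i=1}^n$ of~\eqref{eq:const-dmx} and set $\vct y_i:=\hvct x_i-\vct x_i^\nat$ for $i=1,\dotsc,n$. Feasibility gives $f_i(\hvct x_i)\le f_i(\vct x_i^\nat)$, so $\vct y_i\in\Desc(f_i,\vct x_i^\nat)=D_i$ by~\eqref{eq:desc-cone-def}. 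Putting $\vct y_{n+1}:=-\Proj_{\nullity(\mtx A)}\bigl(\sumnl_{i=1}^n\mtx U_i\vct y_i\bigr)\in\nullity(\mtx A)=D_{n+1}$ and recalling $\mtx U_{n+1}=\Id$, the aggregate $\vct e:=\sumnl_{i=1}^{n+1}\mtx U_i\vct y_i$ is exactly $\Proj_{\nullity(\mtx A)^\perp}\bigl(\sumnl_{i=1}^n\mtx U_i\vct y_i\bigr)$, and $\vct y_i\in D_i$ for every $i=1,\dotsc,n,n+1$.

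Next I would control $\enorm{\vct e}$ by the noise. Using $\mtx A^\psinv\mtx A=\Proj_{\nullity(\mtx A)^\perp}$ together with the observation model~\eqref{eq:obs-model}, the objective of~\eqref{eq:const-dmx} at $(\hvct x_i)_{i=1}^n$ equals $\enormsq{\Proj_{\nullity(\mtx A)^\perp}\bigl(\sumnl_{i=1}^n\mtx U_i\vct y_i-\vct w\bigr)}=\enormsq{\vct e-\Proj_{\nullity(\mtx A)^\perp}\vct w}$, while at the feasible tuple $(\vct x_i^\nat)_{i=1}^n$ it equals $\enormsq{\Proj_{\nullity(\mtx A)^\perp}\vct w}$. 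Optimality of $(\hvct x_i)_{i=1}^n$ therefore forces $\enormsq{\vct e-\Proj_{\nullity(\mtx A)^\perp}\vct w}\le\enormsq{\Proj_{\nullity(\mtx A)^\perp}\vct w}$; expanding the square and invoking Cauchy--Schwarz bounds $\enorm{\vct e}$ by a fixed multiple of $\enorm{\Proj_{\nullity(\mtx A)^\perp}\vct w}\le\enorm{\vct w}$.

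The final step turns the angle hypothesis into a bound on each $\enorm{\vct y_i}$ for $i\le n$. For such an $i$ I would write $\vct e=\mtx U_i\vct y_i+\sumnl_{j\ne i}\mtx U_j\vct y_j$, where $\mtx U_i\vct y_i\in\mtx U_iD_i$ and $\sumnl_{j\ne i}\mtx U_j\vct y_j\in\sumnl_{j\ne i}\mtx U_jD_j$ by the cone memberships above. Unpacking the definition~\eqref{eq:conic-dist} of the conic inner product, hypothesis~\eqref{eq:cone-separation} gives $\ip{-\mtx U_i\vct y_i}{\sumnl_{j\ne i}\mtx U_j\vct y_j}\le(1-\eps)\enorm{\mtx U_i\vct y_i}\enorm{\sumnl_{j\ne i}\mtx U_j\vct y_j}$. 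Expanding $\enormsq{\vct e}$ with this inequality and the elementary estimate $s^2+t^2-2(1-\eps)st\ge\eps\,s^2$ for $s,t\ge0$ yields $\enormsq{\mtx U_i\vct y_i}\le\eps^{-1}\enormsq{\vct e}$, hence $\enormsq{\hvct x_i-\vct x_i^\nat}=\enormsq{\vct y_i}=\enormsq{\mtx U_i\vct y_i}\le\eps^{-1}\enormsq{\vct e}$ since $\mtx U_i$ is orthogonal. Chaining this with the bound on $\enorm{\vct e}$ from the previous paragraph delivers~\eqref{eq:const-err-bd} (the case $\vct e=\zerovct$, equivalently $\hvct x_i=\vct x_i^\nat$ for all $i$, being trivial). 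The stable recovery half of Lemma~\ref{lem:achievable-stab} then follows at once: when~\eqref{eq:src} holds, the finitely many conic inner products there are each strictly below $1$, so setting $\eps$ to one minus their maximum makes~\eqref{eq:cone-separation} valid, and~\eqref{eq:const-err-bd} is precisely~\eqref{eq:stab-recov-defn} with $C=\eps^{-1/2}$.

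I expect the main obstacle to be the bookkeeping in the second paragraph: one must thread the noise $\vct w$ through the pseudoinverse and the orthogonal splitting along $\nullity(\mtx A)$ so that the aggregate $\vct e=\sumnl_{i=1}^{n+1}\mtx U_i\vct y_i$ is simultaneously expressible with $\vct y_{n+1}\in D_{n+1}$ and controlled by $\enorm{\vct w}$; once that identity is in place, the geometric step is a short and routine computation.
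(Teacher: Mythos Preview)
Your proposal is essentially the paper's proof. The paper defines the same auxiliary vectors (writing \(\tvct y_{n+1}=(\mtx A^\psinv\mtx A-\Id)\sum_{i=1}^n\mtx U_i\tvct y_i\), which is exactly your \(-\Proj_{\nullity(\mtx A)}(\cdot)\)), isolates your elementary inequality \(s^2+t^2-2(1-\eps)st\ge\eps\,s^2\) as a separate proposition to obtain \(\enormsq{\vct y_i}\le\eps^{-1}\enormsq{\vct e}\), and then bounds \(\enormsq{\vct e}\) by comparing the objective of~\eqref{eq:const-dmx} at \((\hvct x_i)\) with its value at the feasible point \((\vct x_i^\nat)\). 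The only cosmetic difference is that the paper manipulates an algebraic identity for \(\vct e\) directly rather than passing through Cauchy--Schwarz.

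One caveat on constants: your Cauchy--Schwarz step gives \(\enorm{\vct e}\le 2\,\enorm{\Proj_{\nullity(\mtx A)^\perp}\vct w}\le 2\,\enorm{\vct w}\) (and this factor~\(2\) is sharp, since \(\enormsq{\vct e-\vct p}\le\enormsq{\vct p}\) places \(\vct e\) in a ball through the origin of radius \(\enorm{\vct p}\)), so the chain delivers \(\enormsq{\hvct x_i-\vct x_i^\nat}\le(4/\eps)\enormsq{\vct w}\) rather than the \((1/\eps)\enormsq{\vct w}\) stated in~\eqref{eq:const-err-bd}. This is immaterial for the use you make of it---Definition~\ref{def:exact-stable} and the deduction of Lemma~\ref{lem:achievable-stab} only need \emph{some} constant \(C\)---but your claim that the chain ``delivers~\eqref{eq:const-err-bd}'' as written overstates the constant by a factor of four.
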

\begin{proof}[Proof of Lemma~\ref{lem:achievable-stab} from Lemma~\ref{lem:const-noise}]
  Whenever the stable recovery condition~\eqref{eq:src} holds, there is
  an \(\eps>0\) such that the condition~\eqref{eq:cone-separation} of
  Lemma~\ref{lem:const-noise} holds.  But~\eqref{eq:const-err-bd} is
  equivalent to the definition~\eqref{eq:stab-recov-defn} of stable
  recovery with constant \(c= \eps^{-1/2}\).  
\end{proof}
\noindent
The proof of Lemma~\ref{lem:const-noise} rests on the following elementary observation. 
\begin{proposition}\label{prop:basic-obs}
  Suppose \(\langle \vct x,\vct y\rangle \ge -(1-\eps) \enorm{\vct x}\enorm{\vct y}\) for some \(\eps \in (0,1]\).   Then
  \begin{equation*}
    \enormsq{\vct x} + \enormsq{\vct y}\le \frac{1}{\eps} \enormsq{\vct x+\vct y}.
  \end{equation*}
\end{proposition}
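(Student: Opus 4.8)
The plan is to reduce the claimed inequality to an elementary nonnegativity statement by expanding the squared norm. First I would write
\begin{equation*}
  \enormsq{\vct x + \vct y} = \enormsq{\vct x} + 2\langle \vct x, \vct y\rangle + \enormsq{\vct y},
\end{equation*}
so that the desired bound $\enormsq{\vct x} + \enormsq{\vct y} \le \eps^{-1}\enormsq{\vct x+\vct y}$ becomes, after multiplying through by $\eps > 0$ and rearranging terms, equivalent to
\begin{equation*}
  0 \le (1-\eps)\bigl(\enormsq{\vct x} + \enormsq{\vct y}\bigr) + 2\langle \vct x, \vct y\rangle.
\end{equation*}

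Next I would invoke the hypothesis $\langle \vct x, \vct y\rangle \ge -(1-\eps)\enorm{\vct x}\enorm{\vct y}$ to lower-bound the right-hand side by
\begin{equation*}
  (1-\eps)\bigl(\enormsq{\vct x} + \enormsq{\vct y} - 2\enorm{\vct x}\enorm{\vct y}\bigr) = (1-\eps)\bigl(\enorm{\vct x} - \enorm{\vct y}\bigr)^2.
\end{equation*}
Since $\eps \in (0,1]$ forces $1 - \eps \ge 0$, this quantity is nonnegative, which is exactly the inequality we needed. Tracing back through the equivalences then yields the conclusion.

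I do not expect any genuine obstacle here: the only ingredients are bilinearity of the inner product, the stated hypothesis, and the trivial bound $(\enorm{\vct x} - \enorm{\vct y})^2 \ge 0$. The one place to be careful is the bookkeeping in the rearrangement — one should multiply by $\eps$ (legitimate since $\eps > 0$) rather than divide, and note that the hypothesis controls $\langle \vct x, \vct y\rangle$ from \emph{below}, which is the useful direction precisely because the coefficient $1-\eps$ multiplying $\enormsq{\vct x}+\enormsq{\vct y}$ is nonnegative.
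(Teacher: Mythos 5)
Your argument is correct and is essentially the paper's own proof, just run in the opposite direction: the paper adds the nonnegative term $(1-\eps)\left(\enorm{\vct x}-\enorm{\vct y}\right)^2$ to $\eps\left(\enormsq{\vct x}+\enormsq{\vct y}\right)$ and chains inequalities up to $\enormsq{\vct x+\vct y}$, whereas you rearrange the target inequality into the equivalent nonnegativity statement and bound it below by the same square. No gap; the bookkeeping you flag (multiplying by $\eps>0$, using $1-\eps\ge 0$) is handled correctly.
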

\begin{proof}
  We have the following string of inequalities:
  \begin{align*}
    \eps\left(\enormsq{\vct x} + \enormsq{\vct y}\right) 
    &\le 
    \eps\left(\enormsq{\vct x} + \enormsq{\vct y}\right)  + (1-\eps)\left(\enorm{\vct x}-\enorm{\vct y}\right)^2 
    \\ & =
    \enormsq{\vct x} + \enormsq{\vct y} - 2(1-\eps) \enorm{\vct x}\enorm{\vct y}
    \\ & \le
    \enormsq{\vct x}+\enormsq{\vct y} + 2\langle \vct x,\vct y\rangle  
  \end{align*}
  The first line follows because squares are nonnegative and \(\eps \le 1\), the second line is algebra, and the final expression relies on our assumption on \(\langle\vct x,\vct y\rangle\).  The last expression is \(\enormsq{\vct x+ \vct y}\).
\end{proof}

\begin{proof}[Proof of Lemma~\ref{lem:const-noise}]
  Define the vectors \(\tvct y_i := (\hvct x_i - \vct x_i^\nat)\) for \(i=1,\dotsc,n-1,n\), and
  \begin{equation}\label{eq:7}
    \tvct y_{n+1}:= ( \mtx A^\psinv \mtx A-\Id)\left(\sumnl_{i=1}^n \mtx U_i(\hvct x_i-\vct x_i^\nat)\right).
  \end{equation}
  Since \(f_i(\hvct x_i) \le  f_i(\vct x_i^\nat)\) for all \(i=1,\dotsc, n\), we have \(\tvct y_i \in \mtx \Desc(f_i,\vct x_i^\nat)\).  Moreover, the operator \((\Id - \mtx A^\psinv\mtx A)\) is the projection onto the nullspace of \(\mtx A\), so that \(\tvct y_{n+1} \in\nullity(\mtx A)\).  Applying definition~\eqref{eq:cone-defs} of the cones \(D_i\), we see
  \begin{equation*}
    \tvct y_i \in  D_i \qtq{for every}  i =1,\dotsc,n,n+1.
  \end{equation*}

  By assumption,  \(\llangle-\mtx U_i D_i,\sumnl_{j\ne i}\mtx U_j D_j\rrangle\le 1-\eps\) for each \(i=1,\dotsc,n,n+1\), so that
  \begin{equation*}
 \left   \langle -\mtx U_i\tvct y_i ,\sumnl_{j\ne i}\mtx U_j\tvct y_i\right \rangle \ge -(1-\eps) \enorm{ \mtx U_i\tvct y_i}\enorm{\sumnl_{j\ne i}\mtx U_j\tvct y_j}
  \end{equation*}
  by definition~\eqref{eq:conic-dist} of the angle between cones. Proposition~\ref{prop:basic-obs} provides the inequality
  \begin{equation}\label{eq:8}
    \enormsq{\tvct y_i}\le     \enormsq{\mtx U_i\tvct y_i}+ \enormsq{\sumnl_{j\ne i}\mtx U_j\tvct y_j} \le \frac{1}{\eps} \enormsq{\sumnl_{j=1}^{n+1} \mtx U_j \tvct y_j},
  \end{equation}
  where we use the fact that \(\enorm{\tvct y_i}=\enorm{\mtx U_i\tvct y_i}\) because \(\mtx U_i\) is orthogonal.  Expanding the definitions of \(\tvct y_i\) and \(\vct z_0\), we calculate
  \begin{align*}
    \enormsq{\sumnl_{i=1}^{n+1} \mtx U_i \tvct y_i} &= \enormsq{\mtx A^\psinv\left( \mtx A\sumnl_{i=1}^n \hvct x_i - \vct z_0\right) +(\Id -\mtx A^\psinv \mtx A) \vct w   }    
    \\ 
    &= 
    \enormsq{\mtx A^\psinv\left( \mtx A\sumnl_{i=1}^n \hvct x_i - \vct z_0\right)}
    +
    \enormsq{(\Id -\mtx A^\psinv \mtx A) \vct w   }    
    \\
    &\le 
    \enormsq{\mtx A^\psinv\left( \mtx A\sumnl_{i=1}^n \vct x_i^\nat - \vct z_0\right)  } +    \enormsq{(\Id -\mtx A^\psinv \mtx A) \vct w   }     
    \\ 
        &=    \enormsq{\mtx A^\psinv \mtx A \vct w} +\enormsq{(\Id-\mtx A^\psinv \mtx A) \vct w} .
  \end{align*}
  The second equality holds because \((\Id - \mtx A^\psinv \mtx A)\mtx A^\psinv= \zerovct\).  For the inequality,  note that  \((\hvct x_i)_{i=1}^n\) minimizes~\eqref{eq:const-dmx} and the tuple \((\vct x_i^\nat)_{i=1}^n\) is feasible for~\eqref{eq:const-dmx}.  The final equality is the definition~\eqref{eq:obs-model} of \(\vct z_0\). Combining the bound above with~\eqref{eq:8}, we see
  \begin{equation*}
    \enormsq{\hvct x_i-\vct x_i^\nat} = \enormsq{\hvct y_i}  \le \frac{1}{\eps} \left(  \enormsq{\mtx A^\psinv \mtx A \vct w}+\enormsq{(\Id - \mtx A^\psinv \mtx A )\vct w} \right) = \frac{1}{\eps} \enormsq{\mtx w},
  \end{equation*}
  where the final relation follows by orthogonality. 
\end{proof}

\section{Simplifying results}

This section presents the proofs of the lemmas appearing in Section~\ref{sec:two-simplifications-}. 

\subsection{Randomizing the nullspace}\label{sec:rand-nullsp-proof}

 Lemma~\ref{lem:rand-nullspace} is an easy consequence of a basic fact about invariant measures.
\begin{fact}
  Let \((\mtx Q_1,\dotsc, \mtx Q_{n-1}, \mtx Q_n)\) be i.i.d.\ random rotations in \(\orth{d}\).  Suppose that \(f\colon (\orth{d})^n\to \R\) is a measurable function that satisfies
  \begin{equation}\label{eq:forget-condition}
    \Expect\Bigl[ \Expect\left[\abs{ f(\mtx Q_1,\dotsc,\mtx Q_{n-1},\mtx Q_n)} \;\big\vert\; \mtx Q_1 \right]\Bigr]  < \infty,
  \end{equation}
  where the outer expectation is over \(\mtx Q_1\), and the inner expectation is over \(\mtx Q_i\) for \(i\ge 2\).  In particular, condition~\eqref{eq:forget-condition} holds when \(\abs{f}\) is bounded. Then
  \begin{equation}\label{eq:forgetting}
    \Expect[f(\mtx Q_1,\mtx Q_2,\dotsc,\mtx Q_{n-1},\mtx Q_n)] 
    = 
    \Expect[f(\mtx Q_1,\mtx Q_1 \mtx Q_2,\dotsc,\mtx Q_{1} \mtx Q_{n-1},\mtx Q_1\mtx Q_n)].
  \end{equation}
 \end{fact}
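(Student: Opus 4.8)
The plan is to condition on $\mtx Q_1$ and then exploit the left-invariance of the Haar measure on $\orth d$. Write $\mu$ for the Haar probability measure on $\orth d$. First I would use the integrability hypothesis~\eqref{eq:forget-condition} together with Tonelli's theorem to disintegrate the joint law: the expectation on the left-hand side of~\eqref{eq:forgetting} equals the iterated integral $\int_{\orth d} h(\mtx q)\,\diff\mu(\mtx q)$, where
\[
  h(\mtx q) := \int_{(\orth d)^{n-1}} f(\mtx q,\mtx u_2,\dotsc,\mtx u_n)\,\diff\mu(\mtx u_2)\dotsm\diff\mu(\mtx u_n)
\]
is a well-defined measurable function of $\mtx q$ (measurability follows from Fubini, and~\eqref{eq:forget-condition} guarantees the inner integral is finite for $\mu$-a.e.\ $\mtx q$).

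Next, I would fix such a $\mtx q$ and show that
\[
  h(\mtx q) = \int_{(\orth d)^{n-1}} f(\mtx q,\mtx q\mtx u_2,\dotsc,\mtx q\mtx u_n)\,\diff\mu(\mtx u_2)\dotsm\diff\mu(\mtx u_n).
\]
This is exactly the assertion that, in each coordinate $i = 2,\dotsc,n$, the pushforward of $\mu$ under the left translation $\mtx u \mapsto \mtx q\mtx u$ is again $\mu$ — the defining property of the Haar measure. Since the coordinates $\mtx u_2,\dotsc,\mtx u_n$ are integrated against the product measure, I would apply this change of variables one coordinate at a time; it is legitimate because the integrand is absolutely integrable against the product Haar measure for $\mu$-a.e.\ $\mtx q$, again by~\eqref{eq:forget-condition}.

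Finally, I would integrate the displayed identity over $\mtx q\sim\mu$. The left-hand side integrates to $\Expect[f(\mtx Q_1,\mtx Q_2,\dotsc,\mtx Q_n)]$, and the right-hand side integrates to $\Expect[f(\mtx Q_1,\mtx Q_1\mtx Q_2,\dotsc,\mtx Q_1\mtx Q_n)]$, which is precisely~\eqref{eq:forgetting}. The final remark in the statement — that the hypothesis holds whenever $\abs f$ is bounded — is immediate, since $\mu$ is a probability measure and hence a bounded $\abs f$ is automatically integrable over every product of copies of $\orth d$.

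I do not expect a genuine obstacle: the conceptual content is just that independence of the $\mtx Q_i$ plus left-invariance of Haar measure lets one replace $(\mtx Q_2,\dotsc,\mtx Q_n)$ by $(\mtx Q_1\mtx Q_2,\dotsc,\mtx Q_1\mtx Q_n)$ after conditioning on $\mtx Q_1$. The only points demanding care are the measure-theoretic bookkeeping — measurability of $h$ and almost-everywhere finiteness of the inner integrals so that the coordinatewise change of variables is valid — and both of these follow directly from~\eqref{eq:forget-condition} and Tonelli/Fubini.
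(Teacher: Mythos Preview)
Your proposal is correct and matches the paper's approach exactly: the paper states that ``the elementary proof is a simple application of Fubini's theorem and the definition of an invariant measure,'' which is precisely your argument of conditioning on $\mtx Q_1$, applying left-invariance of Haar measure in each remaining coordinate, and then integrating out $\mtx Q_1$.
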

 \noindent  The elementary proof is a simple application of Fubini's theorem and the definition of an invariant measure.  See~\cite[Fact~3.1]{McC:13} for a detailed proof.
 \begin{proof}[Proof of Lemma~\ref{lem:rand-nullspace}]
   Let \(\pInd_E \colon (\orth{d})^{n+1} \to \R\)  be the indicator function on the event
   \begin{equation*}
     E:=\left\{(\mtx U_i)_{i=1}^{n+1}\mid -\mtx U_i D_i \cap \sumnl_{j\ne i} \mtx U_j D_j = \{\zerovct\} \text{ for each } i = 1,\dotsc,n,n+1\right\},
   \end{equation*}
   where we recall that  the indicator function \(\pInd_S\) on a set \(S\)  is given by
   \begin{equation*}
     \pInd_S(x) :=
     \begin{cases}
       1, & x\in S, \\
       0, & \text{otherwise}.
     \end{cases}
   \end{equation*}
   Let \(\tilde{\mtx U}\in \orth{d}\) be a Haar distributed rotation independent of  \((\mtx U_i)_{i=1}^n\). Since \(\tilde{\mtx U}\{\zerovct\} = \{\zerovct\}\) for every rotation, we have the equality
   \begin{equation*}
     \pInd_E(\mtx U_1,\dotsc,\mtx U_n,\mtx U_{n+1}) = \pInd_{E}(\tilde{\mtx U}\mtx U_{1},\dotsc, \tilde{\mtx U}\mtx U_{n},\tilde{\mtx U}),
   \end{equation*}
   where we used the fact that \(\mtx U_{n+1} =\Id\). Taking expectations, we find
   \begin{align*}
     \Expect[\pInd_E(\mtx U_1,\dotsc,\mtx U_n,\mtx U_{n+1})] = 
     \Expect[\pInd_E(\tilde{\mtx U}\mtx U_{1},\dotsc, \tilde{\mtx U}\mtx U_{n},\tilde{\mtx U})]
     =
     \Expect[\pInd_E(\mtx Q_1,\dotsc,\mtx Q_{n},\mtx Q_{n+1})],
   \end{align*}
   where we arrive at the last line using~\eqref{eq:forgetting}.   The first claim~\eqref{eq:9} follows because the average value of the indicator function on an event is equal to the probability of that event.  The second claim~\eqref{eq:10} follows in a completely analogous manner, so we omit the details. 
 \end{proof}

\subsection{Equivalence between stability and exact recovery}
\label{sec:equiv-betw-stab}

The results below are  corollaries of an intuitive fact regarding the configuration of random cones.  We first need a definition.
\begin{definition}\label{def:touch}
Two cones \(C,D \in \cC_d\) are said to \term{touch} if they share a ray but are weakly separable by a hyperplane.
\end{definition}
\noindent When the cones are randomly oriented, touching is almost impossible. 
\begin{fact}[\protect{\cite[pp.~258--260]{SchWei:08}}]
  \label{fact:no-touching} Let \(C,D\in \cC_d\) be closed, convex
  cones such that both \(C, D\ne \{\zerovct\}\). Then
  \begin{equation*}
    \Prob\{\mtx Q C \text{ touches }   D \} = 0,
  \end{equation*}
  where \(\mtx Q\) is a random rotation in \(\orth{d}\).
\end{fact}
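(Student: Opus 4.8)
The goal is to prove Fact~\ref{fact:no-touching}: for closed convex cones $C,D\in\cC_d$ with $C,D\ne\{\zerovct\}$, the probability that a randomly rotated copy $\mtx Q C$ touches $D$ is zero. The plan is to reduce this to a statement about spherical volume, since "touching" is a measure-zero phenomenon arising from a transversality failure.

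First I would recall the relevant measure-theoretic setup from conic integral geometry. Intersecting the cones with the unit sphere $\sphere{d-1}$, touching means that $(\mtx Q C \cap \sphere{d-1})$ and $(D\cap \sphere{d-1})$ meet in at least one point but can be weakly separated by a hyperplane through the origin. If the two spherically convex sets meet \emph{transversally} (i.e.\ their intersection has nonempty relative interior, or more precisely the origin is not weakly separating), this does not happen; touching is precisely the degenerate, non-transversal contact. The key integral-geometric input is the kinematic formula on the sphere, which expresses the expected "contact" of $\mtx Q C$ and $D$ as a finite sum of curvature-type invariants, and in particular shows that the expected number of touching configurations — equivalently, the probability of touching — can be bounded by the spherical volume of a lower-dimensional set. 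More concretely, one parametrizes the potential touching hyperplanes: a touching occurs only if there is a unit vector $\vct u$ lying in the polar $(\mtx Q C)^\polar \cap D^\polar$ together with a shared ray, and the set of rotations $\mtx Q$ for which $\mtx Q C$ and $D$ are in such degenerate position is a lower-dimensional subvariety of $\orth{d}$, hence has Haar measure zero.

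The cleanest route, and the one cited (Schneider \& Weil, pp.~258--260), is to invoke the principal kinematic formula for convex cones together with the fact that touching is characterized by the vanishing of a certain "angle" functional with a codimension drop. Concretely: define, for a rotation $\mtx Q$, the quantity measuring how $\mtx Q C$ and $D$ overlap; the set of $\mtx Q$ at which the cones touch is contained in the boundary (in $\orth{d}$) of the set $\{\mtx Q : \mtx Q C \cap D \ne \{\zerovct\}\}$, and one shows this boundary is a Lebesgue-null (Haar-null) subset of $\orth{d}$ because it is the zero set of a nontrivial real-analytic (or at least Lipschitz with controlled level sets) function built from the support functions of $C$ and $D$. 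Integrating the indicator of touching against Haar measure then gives zero.

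I expect the main obstacle to be making precise the claim that the touching locus is genuinely lower-dimensional — handling the generality of arbitrary closed convex cones (which may be non-smooth, non-polyhedral, and of arbitrary dimension) without extra regularity. The subtlety is that support functions of general convex cones are only Lipschitz, not smooth, so one cannot naively apply Sard's theorem; instead one needs the structural result that the set of "directions of touching" is contained in a countable union of lower-dimensional spherically convex pieces (coming from the face structure), or one appeals directly to the measure-theoretic machinery of integral geometry on homogeneous spaces as developed in Schneider \& Weil. Since this is precisely the content of the cited pages, the proof of Fact~\ref{fact:no-touching} reduces to that reference, and the role here is simply to package it into the form we need; the downstream lemmas (Lemma~\ref{lem:stab-exact-equiv} and Lemma~\ref{lem:polar-exact}) then follow by applying this fact to the relevant descent cones and their Minkowski sums.
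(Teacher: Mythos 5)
The paper does not actually prove this statement: it imports it as a Fact with a citation to \cite[pp.~258--260]{SchWei:08}, and your proposal ultimately rests on exactly the same reference, so you are taking essentially the same route as the paper. (Your intermediate heuristics---that the touching locus is a boundary or the zero set of a Lipschitz function and hence Haar-null---would not stand on their own, since such sets can have positive measure, but as you note the real content is deferred to the cited pages, just as in the paper.)
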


 We will also make use of the separating hyperplane theorem for convex cones due, in a much more general form,  to Klee~\cite[Thm.~2.5]{Kle:55}. 
\begin{fact}[Separating hyperplane theorem for convex cones]\label{fact:separation}
  Suppose \(C,C'\) are two convex cones in  \(\R^d\). If \(C\cap C' = \{\zerovct\}\), then there exists a nonzero \(\vct z \in\R^d \) such that \(\vct z \in C^\polar\) and \(-\vct z \in (C')^\polar\).  
\end{fact}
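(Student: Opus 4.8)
The plan is to reduce this to the classical proper separation theorem for convex sets in $\R^d$ (see, e.g., \cite[Thm.~11.3]{Roc:70}): two nonempty convex sets can be separated by a hyperplane as soon as their relative interiors are disjoint. On top of that I only need two translations. First, that for convex cones containing the origin, any weakly separating hyperplane automatically passes through $\zerovct$ and therefore encodes exactly the claimed polarity relations. Second, that the hypothesis $C\cap C'=\{\zerovct\}$ is---apart from a single degenerate configuration---precisely the relative-interior disjointness required by the separation theorem.

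For the first translation, suppose $\vct z\neq\zerovct$ and $\alpha\in\R$ weakly separate $C$ and $C'$, say $\ip{\vct x}{\vct z}\le\alpha$ for all $\vct x\in C$ and $\ip{\vct x'}{\vct z}\ge\alpha$ for all $\vct x'\in C'$. Since $\zerovct\in C$ we have $0\le\alpha$, and since $\zerovct\in C'$ we have $\alpha\le 0$; hence $\alpha=0$. Then $\ip{\vct x}{\vct z}\le 0$ for all $\vct x\in C$ says exactly $\vct z\in C^\polar$, and $\ip{\vct x'}{\vct z}\ge 0$ for all $\vct x'\in C'$ says exactly $-\vct z\in (C')^\polar$. (Equivalently: because $C$ is a cone, $\sup_{\vct x\in C}\ip{\vct x}{\vct z}$ equals $0$ when $\vct z\in C^\polar$ and equals $+\infty$ otherwise, so a finite separating value must be $0$.) Thus it suffices to produce \emph{any} nonzero separating functional.

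For the second translation, I would check that $\relint(C)\cap\relint(C')=\emptyset$ and then invoke the separation theorem. A point in this intersection lies in particular in $C\cap C'=\{\zerovct\}$, so it can only be $\zerovct$; but $\zerovct\in\relint(C)$ forces $C$ to be a linear subspace, since a convex cone that contains a relative neighborhood of its apex must coincide with its own linear span, and likewise $\zerovct\in\relint(C')$ forces $C'$ to be a subspace. Hence, provided at least one of $C, C'$ is \emph{not} a linear subspace, the relative interiors are disjoint, the separation theorem yields a nonzero $\vct z$, and the first translation completes the proof. In the one remaining case---$C$ and $C'$ both linear subspaces with $C\cap C'=\{\zerovct\}$---one argues directly: if $C+C'\neq\R^d$, then any nonzero $\vct z\in(C+C')^\perp$ satisfies $\vct z\perp C$, hence $\vct z\in C^\polar$, and $\vct z\perp C'$, hence $-\vct z\in(C')^\polar$.

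The main obstacle is exactly this last configuration when, additionally, $C+C'=\R^d$ (say, two subspaces of complementary dimension): there $C^\polar\cap\bigl(-(C')^\polar\bigr)=(C+C')^\perp=\{\zerovct\}$, so no nonzero $\vct z$ exists, and the displayed conclusion genuinely needs a mild nondegeneracy hypothesis (so the fully general version of the fact, due to Klee, carries one). This does not affect the use made of the fact in this paper: it is applied only to (rotated) descent cones $\Desc(f_i,\vct x_i^\nat)$, which fail to be linear subspaces whenever $\vct x_i^\nat$ lies on the relative boundary of its own sublevel set---the situation for the penalties used in practice. Beyond tracking that caveat, the argument is routine; the only real geometric input is the classical convex separation theorem, and everything else is bookkeeping around the common apex $\zerovct$.
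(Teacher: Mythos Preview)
The paper does not give its own proof of this fact; it simply cites Klee~\cite[Thm.~2.5]{Kle:55}. Your argument via Rockafellar's proper-separation theorem is the standard route and is correct where it applies. More to the point, you have caught something real: the statement as printed in the paper is not literally true. Your counterexample is valid---if \(C\) and \(C'\) are complementary linear subspaces (e.g., coordinate axes in \(\R^2\)), then \(C\cap C'=\{\zerovct\}\) yet \(C^\polar\cap\bigl(-(C')^\polar\bigr)=(C+C')^\perp=\{\zerovct\}\). Klee's actual theorem carries a nondegeneracy hypothesis that the paper's paraphrase elides.

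Your discussion of why this does not damage the paper's applications is on target but could be sharpened. In the proof of Proposition~\ref{prop:polar-prop}, the fact is invoked for \(-C\) and \(\mtx Q D\) with \(\mtx Q\) Haar-distributed; the complementary-subspace configuration is a measure-zero event and is absorbed into the ``except on a null set'' conclusion. In the proof of Lemma~\ref{lem:stab-exact-equiv}, the cones already share a ray by~\eqref{eq:11}, so they cannot be complementary subspaces and the issue does not arise. You need not appeal to properties of descent cones specifically.
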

\begin{proof}[Proof of Lemma~\ref{lem:stab-exact-equiv}]
  The lemma claims that the events
  \begin{align}
   & \left\{ (\mtx Q_i)_{i=1}^{n+1}\mid -\mtx Q_i D_i \cap \sumnl_{j\ne i} \mtx Q_j D_j = \{\zerovct\}\text{ for each } i = 1,\dotsc,n,n+1 \right\} \qtq{and} \label{eq:9a}\\
   &\left\{(\mtx Q_i)_{i=1}^{n+1}\mid \left\llangle-\mtx Q_i
        D_i , \sumnl_{j\ne i} \mtx Q_j D_j\right\rrangle < 1
      \text{ for each } i = 1,\dotsc,n,n+1\right\}\label{eq:10a}
  \end{align}
  have equal probability when the matrices \(\mtx Q_i\) are drawn i.i.d.\ from the Haar measure on the orthogonal group~\(\orth{d}\).  The event appearing in~\eqref{eq:9a}  is  implied by the event appearing in~\eqref{eq:10a}, so the probability~\eqref{eq:9a} is larger than the probability~\eqref{eq:10a}.  We now show the reverse inequality.
  
  Fix any tuple \((\mtx Q_i)_{i=1}^{n+1}\) such that the  event~\eqref{eq:9a} holds, but that  the event~\eqref{eq:10a} \emph{does not} hold.  We claim that such a tuple must bring  two cones, out of a finite set, into touching position (Definition~\ref{def:touch}). The set of all such tuples must have probability zero by Fact~\ref{fact:no-touching} and the countable subadditivity of probability measures.  We conclude that the probability of~\eqref{eq:9a} is not larger than the probability of~\eqref{eq:10a}.
  
  We now establish the touching claim.  When the event in~\eqref{eq:10} does not hold,  there is an index \(i\) such that \(\bigl\llangle-\mtx Q_{i} D_{i},\sumnl_{j\ne i}\mtx Q_j D_j\bigr\rrangle = 1\).  By definition of the angle between cones, we  have 
\begin{equation}\label{eq:11}
  -\overline{\mtx Q_i D_i}\cap \overline{\sumnl_{j\ne i}\mtx Q_j
    D_j}\ne \{\zerovct\}.
\end{equation}
But because the event  in~\eqref{eq:9a} also holds, we have
\begin{equation*}
  -\mtx Q_i D_i \cap \sumnl_{j\ne i} \mtx Q_j D_j = \{\zerovct\}.
\end{equation*}
Hence the separating hyperplane theorem (Fact~\ref{fact:separation}) shows that \(-\mtx Q_iD_i\) and \(\sumnl_{j\ne i} \mtx Q_jD_j\) are weakly separable. By Definition~\ref{def:touch}, we see that the cones \(-\overline{\mtx Q_i  D_i} \) and \(\overline{\sumnl_{j\ne i} \mtx Q_j D_j}\) touch, as claimed.
\end{proof}

\subsection{Polarizing exact recovery}
\label{sec:polar-exact-recov}

We bootstrap the proof of Lemma~\ref{lem:polar-exact} from the analogous result for two cones.
\begin{proposition}\label{prop:polar-prop}
  Let \(C,D\subset \R^d\) be convex cones that contain zero.  If both \(C,D \ne \{\zerovct\}\), then the sets
  \begin{equation*}
    \Bigl\{\mtx Q \in \orth{d}\mid -C \cap \mtx Q D = \{\zerovct\}\Bigr\}
    \qtq{and}
    \Bigl\{\mtx Q \in \orth{d}\mid C^\polar \cap \mtx Q D^\polar\ne \{\zerovct\}\Bigr\}
  \end{equation*}
  coincide except on a set of Haar measure zero on \(\orth{d}\).
\end{proposition}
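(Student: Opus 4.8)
The plan is to prove the two inclusions separately: the inclusion $\bigl\{\mtx Q\mid -C\cap \mtx Q D=\{\zerovct\}\bigr\}\subseteq\bigl\{\mtx Q\mid C^\polar\cap \mtx Q D^\polar\ne\{\zerovct\}\bigr\}$ holds for \emph{every} rotation $\mtx Q$ and is a deterministic consequence of the separating hyperplane theorem for cones (Fact~\ref{fact:separation}), while the reverse inclusion holds only up to a Haar-null set, which is where the randomness enters through Fact~\ref{fact:no-touching}.

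First I would record the two elementary identities $(-C)^\polar=-C^\polar$ and $(\mtx Q D)^\polar=\mtx Q D^\polar$; both are immediate from the definition~\eqref{eq:polar-cone} of the polar cone and the orthogonality of $\mtx Q$. For the forward inclusion, suppose $-C\cap \mtx Q D=\{\zerovct\}$. Since $-C$ and $\mtx Q D$ are convex cones containing the origin, Fact~\ref{fact:separation} provides a nonzero $\vct z$ with $\vct z\in(-C)^\polar$ and $-\vct z\in(\mtx Q D)^\polar$; the two identities then give $-\vct z\in C^\polar\cap \mtx Q D^\polar$, so that intersection is nontrivial. This step uses no exceptional set.

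For the reverse inclusion, fix a rotation $\mtx Q$ with $C^\polar\cap \mtx Q D^\polar\ne\{\zerovct\}$ but $-C\cap \mtx Q D\ne\{\zerovct\}$; the goal is to show that this forces $-\overline C$ and $\mtx Q\overline D$ into touching position (Definition~\ref{def:touch}). Choose a nonzero $\vct z\in C^\polar\cap \mtx Q D^\polar$ and a nonzero $\vct v\in -C\cap \mtx Q D$. Because $-\vct v\in C$ we get $\langle\vct v,\vct z\rangle\ge 0$, while $\vct v\in \mtx Q D$ and $\vct z\in(\mtx Q D)^\polar$ give $\langle\vct v,\vct z\rangle\le 0$; hence $\langle\vct v,\vct z\rangle=0$. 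The same two inclusions show that the hyperplane $\vct z^\perp$ weakly separates $-C$ from $\mtx Q D$, and since each closed half-space it bounds is closed, $\vct z^\perp$ weakly separates the closures $-\overline C$ and $\overline{\mtx Q D}=\mtx Q\overline D$; meanwhile $\vct v$ exhibits a ray common to both closed cones. Thus $-\overline C$ and $\mtx Q\overline D$ touch. Because $C,D\ne\{\zerovct\}$ forces $\overline D\ne\{\zerovct\}$ and $-\overline C\ne\{\zerovct\}$, Fact~\ref{fact:no-touching}, applied to the random rotation $\mtx Q$ with the cones $\overline D$ and $-\overline C$, shows that the set of such $\mtx Q$ has Haar measure zero. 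Combining this with the forward inclusion, the symmetric difference of the two sets in the statement is Haar-null, which is the claim.

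The only mildly delicate point is the bookkeeping around closures: the cones $C,D$ need not be closed, so before invoking the notion of touching one must pass to $-\overline C$ and $\mtx Q\overline D$, and one must check that the separating hyperplane produced for the (possibly non-closed) cones $-C$ and $\mtx Q D$ still separates their closures. This is routine since closed half-spaces are closed and $\overline{\mtx Q D}=\mtx Q\overline D$ as rotations are homeomorphisms. Everything else is a direct application of the two cited facts.
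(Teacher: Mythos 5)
Your proposal is correct and takes essentially the same route as the paper's proof: the forward inclusion is a direct application of the conic separating hyperplane theorem (Fact~\ref{fact:separation}), and the reverse inclusion holds up to the event that the closed cones $-\overline{C}$ and $\mtx Q\overline{D}$ touch, which is Haar-null by Fact~\ref{fact:no-touching}. Your extra computation that the shared ray lies in the separating hyperplane ($\langle\vct v,\vct z\rangle=0$) is not needed under Definition~\ref{def:touch}, which only asks that the cones share a ray and be weakly separable, but it is harmless.
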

\begin{proof}
Suppose that both \(C,D\) are convex cones such that \(C,D\ne \{\zerovct\}\).   Whenever \(\mtx Q \in \orth{d}\) is such that \(-C\cap \mtx Q D = \{\zerovct\} \), the separating hyperplane theorem for convex cones (Fact~\ref{fact:separation}) ensures there exists a nonzero vector \(\vct w\in \R^d\) such that
  \begin{equation}\label{eq:sep1}
    \langle \vct w,\vct x \rangle \le 0 \; \text{for all } \vct x \in C
    \qtq{and} 
    \langle \vct w, \mtx Q\vct y \rangle \le 0 \; \text{for all } \vct y \in D.
  \end{equation}
  This is equivalent to the statement \(\vct w \in C^\polar \cap \mtx Q D^\polar\) by definition~\eqref{eq:polar-cone} of polar cones. Since \(\vct w\) is nonzero, we have the inclusion
  \begin{equation*}
    \Bigl\{\mtx Q \mid -C\cap \mtx Q D = \{\zerovct \}\Bigr\} \subset \Bigl\{ \mtx Q \mid C^\polar \cap \mtx Q D^\polar \ne \{\zerovct\}\Bigr\}.
  \end{equation*}

  For the other direction, suppose that  \(C^\circ \cap \mtx Q D^\circ \ne \{\zerovct\}\) for some rotation \(\mtx Q\in \orth{d}\).  By definition of polar cones, this implies the existence of a vector \(\vct w \ne \zerovct\) satisfying~\eqref{eq:sep1}---in other words, some nonzero vector weakly separates the cone \(-C\) from \(\mtx QD\).   We therefore find two alternatives: either \(-C\cap \mtx Q D = \{\zerovct\}\), or the closures \(-\overline{C}\) and \(\mtx Q \overline{D}\)  touch (cf.\ Definition~\ref{def:touch}).  In event notation, we have the inclusion
  \begin{equation*}
    \Bigl\{\mtx Q \mid C^\circ \cap \mtx Q D^\circ \ne \{\zerovct\} \Bigr\} \subset
    \Bigl\{\mtx Q\mid -C\cap \mtx Q D = \{\zerovct\}\Bigr\} \cup \Bigl\{\mtx Q \mid -\overline{ C} \text{ touches } \mtx Q \overline{D} \Bigr\}.
  \end{equation*}
But randomly oriented, nontrivial, closed cones touch with probability zero by Fact~\ref{fact:no-touching}, so the third set  above has measure zero. The conclusion follows by combining the two displayed inclusions. 
\end{proof}

\begin{proof}[Proof of Lemma~\ref{lem:polar-exact}]
For each \(i =1,\dotsc,n,n+1\), define   \(E_i\subset (\orth{d})^{n+1}\) and \(E_\polar\subset (\orth{d})^{n+1}\) by 
\begin{equation*}
  E_i:=\left\{-\mtx Q_i D_i  \cap  \sumnl_{j\ne i} \mtx Q_j D_j = \{\zerovct\} \right\} \qtq{and} E_\polar:=\bigl\{ \mtx Q_1 D_1^\polar \cap\dotsb\cap \mtx Q_{n}D_n^\polar\cap \mtx Q_{n+1} D_{n+1}^\polar\ne \{\zerovct\}\bigr\}.
\end{equation*}
With this notation, the statement of Lemma~\ref{lem:polar-exact} is equivalent to the claim
\begin{equation}\label{eq:EiEiO}
  \Prob\left( \bigcap\nolimits_{i=1}^{n+1} E_i\right) = \Prob( E_\polar).
\end{equation}
\noindent
 Let \(J\subset \{1,\dotsc, n,n+1\}\) be the set of indices \(j\) such that \(D_j \ne \{\zerovct\} \).  For any \(k \notin J\), the event \(E_k\) always occurs:
\begin{equation*}
\Prob( E_k)  = \Prob\left\{-\{\zerovct \}\cap \sumnl_{i\ne k } \mtx Q_k D_k = \{\zerovct\} \right\}= 1
\end{equation*}
because \(D_k =\{\zerovct\}\) for \(k\notin J\) by definition.  Therefore, 
\begin{equation}\label{eq:12}
  \Prob\left(\bigcap\nolimits_{j=1}^{n+1} E_j \right) = \Prob\left( \bigcap\nolimits_{j\in J} E_j\right).
\end{equation}
Note that this relation requires  that \(J\) is not empty, which holds true because we assume that \(D_j\ne \{\zerovct\}\) for at least two cones.  For each \(j\in J\),  both relations
\begin{equation*}
  D_j \ne \{\zerovct\} \qtq{and} \sumnl_{k\ne j} \mtx Q_k D_k \ne \{\zerovct\}
\end{equation*}
hold.  Indeed, the left-hand relation is the definition  of \(J\), while the right-hand relation follows because at least one of the remaining cones is nontrivial by assumption. From Proposition~\ref{prop:polar-prop},  for \(j\in J\),  the event \(E_j  \) is equal to \(E_\polar\) except on a set of measure zero.    Since finite unions and intersections of null sets are null, the intersection \(\bigcap_{j\in J} E_j \) is  equal to \(E_\polar\) except on a set of measure zero.  In particular,
\begin{equation*}
  \Prob\left( \bigcap\nolimits_{j\in J} E_j\right)= \Prob(E_\polar).
\end{equation*}
Combining this equality with~\eqref{eq:12} proves that~\eqref{eq:EiEiO} holds, which completes the claim.
\end{proof}

\section{The approximate kinematic formula}
\label{sec:proof-appr-kinem}

 The approximate kinematic formula is the main tool we use to derive the probability bounds in Theorem~\ref{thm:mult-bd}. This new formula extends the result~\cite[Thm.~7.1]{AmeLotMcC:13} to an arbitrary number of cones, and it incorporates  several technical improvements from the recent work~\cite{McCTro:13}. 

At its core, the approximate kinematic formula is based on an \emph{exact} kinematic formula for convex cones.  This kinematic formula is classical~\cite{San:76}, and the form we use here can be found, for example, in~\cite[Sec.~6.5]{SchWei:08}. Our derivation requires some background in conic integral geometry; we collect the relevant definitions and facts in Section~\ref{sec:conic-integr-geom}.  The proof of the approximate kinematic formula appears in Section~\ref{sec:proof-appr-kinem-1}.

\subsection{Background from conic integral geometry}
\label{sec:conic-integr-geom}

We start by defining the core parameters associated with convex cones.

\begin{definition}[Intrinsic volumes~\cite{McM:75}] Let \(C\in \cC_d\) be a polyhedral cone.  For each \(i=0,\dotsc,d-1, d\), the \(i\)th \emph{(conic) intrinsic volume} \(v_i(C)\) is equal to the probability that a Gaussian random vector projects into an \(i\)-dimensional face of \(C\), that is
  \begin{equation}\label{eq:int-vol-def}
    v_i(C) \defeq \Prob\Bigl\{ \Proj_{C}(\vct g) \in \relint(F_i)  \;\mid \;
       F_i \text{ is an \(i\)-dimensional face of } C
      \Bigr\}.
  \end{equation}
  This definition extends to all cones in \(\cC_d\) by approximation with polyhedral cones.%
\end{definition}

\noindent 
The next fact collects some basic facts about the intrinsic volumes.
\begin{fact}[Intrinsic volumes properties]
  For any closed, convex cone \(C \in \cC_d\), the following relations hold.
  \begin{enumerate}
  \item {\bf Probability.} The intrinsic volumes form a probability distribution:
    \begin{equation}
      \label{eq:intv-sum}
      \sumnl_{i=0}^d v_i(C) = 1\qtq{and} v_i(C) \ge 0.
    \end{equation}
  \item {\bf Polarity.} The intrinsic volumes reverse under polarity:
    \begin{equation}
      \label{eq:intv-polar}
      v_k(C) = v_{d-k}(C^\polar)
    \end{equation}
  \item {\bf Product.}  For any \(C' \in \cC_{d'}\), the  intrinsic volumes of the product \(C\times C'\)  satisfy
    \begin{equation}
      \label{eq:intv-prod}
      v_k(C\times C') = \sumnl_{i+j = k} v_i(C)v_j(C').
    \end{equation}
  \item {\bf Subspace.}  For an \(m\)-dimensional subspace \(L\subset \R^d\), we have
    \begin{equation}
      \label{eq:intv-subsp}
      v_k(L) =
      \begin{cases}
        1, &k=m, \\
        0,&\text{otherwise}.
      \end{cases}
    \end{equation}
  \end{enumerate}
\end{fact}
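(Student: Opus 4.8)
The plan is to verify each of the four identities first for a polyhedral cone, working directly from the defining probability~\eqref{eq:int-vol-def}, and then to transfer all four to a general cone $C\in\cC_d$ via the polyhedral approximation used to define $v_i$. For the polyhedral case the structural facts I would invoke are: (i) every $\vct x\in\R^d$ has a unique metric projection $\Proj_C(\vct x)\in C$, which lies in the relative interior of exactly one face of $C$, so that the relative interiors of the (finitely many) faces partition $C$ and hence the events $\{\Proj_C(\vct g)\in\relint(F)\}$ partition the sample space; (ii) Moreau's decomposition $\vct x=\Proj_C(\vct x)+\Proj_{C^\polar}(\vct x)$ with orthogonal summands; and (iii) the faces of $C\times C'$ are exactly the products $F\times F'$ of faces, with $\dim(F\times F')=\dim F+\dim F'$ and $\relint(F\times F')=\relint F\times\relint F'$. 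Fact (i) immediately gives \emph{Probability}: summing~\eqref{eq:int-vol-def} over all $i$ counts the partition, so $\sum_i v_i(C)=1$, and each $v_i(C)\ge 0$ as a probability. It also settles \emph{Subspace}: an $m$-dimensional subspace $L$ has $L$ itself as its only face (if $\vct x+\vct y=\zerovct$ with $\vct x,\vct y\in L$, one cannot force $\vct x=\zerovct$, so $\{\zerovct\}$ is a face only when $L=\{\zerovct\}$), whence $\Proj_L(\vct g)\in\relint(L)=L$ surely, giving $v_m(L)=1$ and $v_k(L)=0$ otherwise.

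For \emph{Polarity} I would combine Moreau's decomposition with the standard inclusion-reversing bijection $F\mapsto\widehat F$ between the faces of a polyhedral cone $C$ and those of $C^\polar$, which satisfies $\dim F+\dim\widehat F=d$; the crucial point is that $\Proj_C(\vct x)\in\relint(F)$ holds if and only if $\Proj_{C^\polar}(\vct x)\in\relint(\widehat F)$. Feeding the same standard Gaussian $\vct g$ into both sides and summing over the $k$-dimensional faces of $C$ (equivalently, the $(d-k)$-dimensional faces of $C^\polar$) yields $v_k(C)=v_{d-k}(C^\polar)$. For \emph{Product} I would write $\vct g=(\vct g_1,\vct g_2)$ with independent standard Gaussians $\vct g_1$ on $\R^d$ and $\vct g_2$ on $\R^{d'}$; then $\Proj_{C\times C'}(\vct g)=(\Proj_C(\vct g_1),\Proj_{C'}(\vct g_2))$, so by (iii) the event $\{\Proj_{C\times C'}(\vct g)\in\relint(F\times F')\}$ is the intersection of the independent events $\{\Proj_C(\vct g_1)\in\relint F\}$ and $\{\Proj_{C'}(\vct g_2)\in\relint F'\}$; its probability factors, and collecting the faces with $\dim F+\dim F'=k$ produces the convolution~\eqref{eq:intv-prod}.

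Finally, for a general cone $C\in\cC_d$ I would take polyhedral cones $C_j\to C$ realizing the approximation built into the definition of $v_i$ and pass each identity to the limit, using that this approximation is compatible with polarity (so $C_j^\polar\to C^\polar$) and with Cartesian products, and that the polyhedral values of $v_i$ converge to the value the definition assigns to the limit. I expect the main obstacle to be exactly this consistency of the polyhedral approximation --- showing that the limiting values do not depend on the approximating sequence and respect $(\cdot)^\polar$ and $\times$ --- together with the precise claim underlying \emph{Polarity}, namely that Moreau's decomposition carries $\relint F$ onto $\relint\widehat F$ for each pair of corresponding faces. Both points, along with the requisite continuity statements, are established in~\cite{McM:75} and~\cite[Sec.~6.5]{SchWei:08}, which I would cite rather than reprove.
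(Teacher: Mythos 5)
The paper offers no proof of this Fact at all: it simply records the four properties and cites \cite[Sec.~5.1]{AmeLotMcC:13} (which in turn rests on \cite{McM:75} and \cite[Sec.~6.5]{SchWei:08}). Your sketch therefore does more than the paper, and what you do is essentially the standard derivation found in those references: the partition of \(\R^d\) by the events \(\{\Proj_C(\vct g)\in\relint(F)\}\), \(F\) a face of the polyhedral cone \(C\), gives \emph{Probability} and \emph{Subspace}; the Moreau decomposition together with the conjugate-face bijection \(F\mapsto\widehat F = C^\polar\cap F^\perp\), with \(\dim F+\dim\widehat F=d\) and \(\Proj_C(\vct x)\in\relint(F)\iff\Proj_{C^\polar}(\vct x)\in\relint(\widehat F)\), gives \emph{Polarity}; and the splitting \(\Proj_{C\times C'}(\vct g_1,\vct g_2)=(\Proj_C(\vct g_1),\Proj_{C'}(\vct g_2))\) plus independence of the Gaussian blocks gives \emph{Product}. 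The two points you flag as the real work --- that the Moreau decomposition carries relative interiors of faces onto relative interiors of conjugate faces with the stated dimension complement, and that the polyhedral approximation used to define \(v_i\) is consistent (limits independent of the approximating sequence and compatible with polarity and Cartesian products) --- are indeed exactly the nontrivial ingredients, and deferring them to \cite{McM:75} and \cite[Sec.~6.5]{SchWei:08} is legitimate. So I see no gap; the only difference from the paper is that the paper outsources the entire statement to the literature, whereas you reconstruct the standard argument and outsource only the face-correspondence and continuity lemmas.
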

\noindent All of these facts appear in~\cite[Sec.~5.1]{AmeLotMcC:13}.  For future reference, we note here that~\eqref{eq:intv-prod} and~\eqref{eq:intv-subsp} together imply that for any \(C\in \cC_d\) and \(m\)-dimensional linear subspace \(L\), we have
\begin{equation}\label{eq:prod-w-subsp}
  v_k(C \times L ) = v_{k-m}(C)
\end{equation}
whenever \(k\ge m\).

Sums and partial sums of intrinsic volumes appear frequently in the theory of conic integral geometry, so we make the following definitions to simplify the later development. For any cone \(C\in \cC_d\) and index \(k= 0,\dotsc,d-1,d\), we define the \(k\)th \emph{tail-functional} \(t_k(C)\) by
\begin{align}
  \label{eq:tail-defn}
  t_k(C) &:= v_k(C) + v_{k+1}(C) + \dotsb = \sumnl_{j=k}^d v_k(C)
  \intertext{and the \(k\)th \emph{half-tail functional}}
  h_k(C) &:= v_k(C) + v_{k+2}(C)+\dotsb=\sum_{\substack{j=k\\j-k\text{ even}}}^d v_j(C).  \label{eq:half-tail-defn}
\end{align}
The tail functionals satisfy the following properties.
\begin{fact}[Properties of the tail functionals]%
  Let  \(C\in \cC_d\) be a closed, convex cone.
  \begin{enumerate}
  \item {\bf Gauss--Bonnet.} \textnormal{\cite[Eq.~(6.55)]{SchWei:08}}
    \begin{equation}\label{eq:guass-bonnet}
      2 h_1(C) =
      \begin{cases}
        0, & C \text{ an even-dimensional subspace} \\
        2, & C \text{ an odd-dimensional subspace} \\
        1, & \text{otherwise}
      \end{cases}
    \end{equation}
  \item {\bf Interlacing.} \textnormal{\cite[Prop.~5.7]{AmeLotMcC:13}} If \(C\) is not a linear subspace, then 
    \begin{equation}\label{eq:interlacing}
      h_k(C) \ge \frac{1}{2} t_k(C)\ge h_{k+1}(C) \qtq{for every} k=0,\dotsc,d-1,d.
    \end{equation}
  \item {\bf Duality.} \textnormal{\cite[Eq.~(6.9)]{AmeLotMcC:13}}
      We have the duality formula
      \begin{equation}\label{eq:tail-dual}
        t_k(C) = 1- t_{d-k+1}(C^\polar).
      \end{equation}
    \end{enumerate}
  \end{fact}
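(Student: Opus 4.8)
The plan is to derive all three parts from the intrinsic-volume relations \eqref{eq:intv-sum}--\eqref{eq:intv-subsp} recorded above, together with two classical facts of conic integral geometry: the conic Gauss--Bonnet theorem and the conic Crofton formula. Everything rests on the elementary bookkeeping identity
\[
  t_k(C) = h_k(C) + h_{k+1}(C),
\]
obtained by splitting the tail in \eqref{eq:tail-defn} according to the parity of the summation index in \eqref{eq:half-tail-defn}, with the convention \(h_{d+1}(C) = 0\). With this in hand, \emph{Duality} is purely formal: using the polarity relation \eqref{eq:intv-polar} to reindex and then the normalization \eqref{eq:intv-sum},
\[
  t_k(C) = \sumnl_{j=k}^{d} v_j(C) = \sumnl_{j=k}^{d} v_{d-j}(C^\polar) = \sumnl_{i=0}^{d-k} v_i(C^\polar) = 1 - \sumnl_{i=d-k+1}^{d} v_i(C^\polar) = 1 - t_{d-k+1}(C^\polar),
\]
which is exactly \eqref{eq:tail-dual}; no further input is needed.

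For \emph{Gauss--Bonnet}, set \(\chi(C) := \sumnl_{j=0}^{d}(-1)^j v_j(C)\) for the conic Euler characteristic. The definitions give \(h_0(C) + h_1(C) = 1\) (this is \eqref{eq:intv-sum}) and \(h_0(C) - h_1(C) = \chi(C)\), so \(2 h_1(C) = 1 - \chi(C)\). The three cases of \eqref{eq:guass-bonnet} then follow from the classical fact that \(\chi(C) = 0\) unless \(C\) is a linear subspace, in which case \(\chi(L) = (-1)^{\dim L}\) (the subspace value is immediate from \eqref{eq:intv-subsp}). I would cite the vanishing of \(\chi\) for non-subspaces rather than reprove it; it comes from the conic Gram--Euler angle-sum relations via polyhedral approximation, or equivalently from the spherical Gauss--Bonnet theorem applied to the link \(C \cap \sphere{d-1}\).

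The substance is \emph{Interlacing}. Since \(t_k = h_k + h_{k+1}\), each of the two inequalities in \eqref{eq:interlacing} is equivalent to the single assertion \(h_k(C) \ge h_{k+1}(C)\), so it suffices to prove that for a non-subspace cone the half-tails \(h_0(C) \ge h_1(C) \ge \dotsb \ge h_d(C)\) form a non-increasing sequence. The route is to read the half-tail functional probabilistically through the conic Crofton formula: for \(C \in \cC_d\) not a subspace and a uniformly random \(\ell\)-dimensional subspace \(L \subset \R^d\),
\[
  \Prob\bigl\{ C \cap L \ne \{\zerovct\} \bigr\} = 2\, h_{d-\ell+1}(C), \qquad \ell = 0, \dotsc, d.
\]
This identity is what one obtains by specializing the exact conic kinematic formula --- the classical result underlying Theorem~\ref{thm:appr-kin} and assembled in Section~\ref{sec:conic-integr-geom} --- to the case where one of the two cones is the subspace \(L\): the subspace intrinsic volumes \eqref{eq:intv-subsp} collapse the kinematic double sum to a single half-tail term, and the parity weight in that formula supplies the factor \(2\) (the restriction to non-subspace \(C\) is precisely the hypothesis of \eqref{eq:interlacing}). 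Granting it, monotonicity is immediate from a coupling: realize the random subspaces as a nested flag \(L_0 \subset L_1 \subset \dotsb \subset L_d = \R^d\) with each \(L_\ell\) marginally uniform, e.g.\ \(L_\ell = \mtx Q(\R^\ell \times \{\zerovct\})\) for one Haar-random \(\mtx Q \in \orth{d}\). Then \(C \cap L_\ell \subseteq C \cap L_{\ell+1}\), so \(\{C \cap L_\ell \ne \{\zerovct\}\}\) is an increasing event and its probability is non-decreasing in \(\ell\); hence \(h_{d-\ell+1}(C)\) is non-decreasing in \(\ell\), i.e.\ \(h_m(C)\) is non-increasing in \(m\) for \(m = 1, \dotsc, d+1\). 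Finally, the Gauss--Bonnet case just proved gives \(h_0(C) = h_1(C) = \tfrac12\) for non-subspace \(C\), so the chain runs all the way down to \(m = 0\), which is the claim.

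I expect the main obstacle to be the conic Crofton identity used in the Interlacing step. It is classical, but making it precise requires the conic integral-geometry machinery --- spherical and conic intrinsic volumes, the exact kinematic formula, and the attendant parity bookkeeping --- that is developed in Section~\ref{sec:conic-integr-geom} and in \cite{SchWei:08,AmeLotMcC:13}; in a self-contained write-up one would either derive it carefully there or simply cite \cite[Prop.~5.7]{AmeLotMcC:13} for the Interlacing inequality outright. Duality and Gauss--Bonnet, by contrast, are formal consequences of \eqref{eq:intv-sum}--\eqref{eq:intv-subsp} and the conic Gauss--Bonnet theorem.
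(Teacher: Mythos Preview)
Your argument is correct. Note, however, that the paper itself does not prove this Fact: it is stated with citations to \cite[Eq.~(6.55)]{SchWei:08}, \cite[Prop.~5.7]{AmeLotMcC:13}, and \cite[Eq.~(6.9)]{AmeLotMcC:13} and used as a black box. Your derivations for Duality and Gauss--Bonnet are the standard ones, and your Interlacing argument---reducing \eqref{eq:interlacing} to the monotonicity \(h_k(C)\ge h_{k+1}(C)\), then reading \(2h_{d-\ell+1}(C)=\Prob\{C\cap L_\ell\ne\{\zerovct\}\}\) from the Crofton formula and coupling through a nested Haar-random flag---is essentially the proof given in \cite[Prop.~5.7]{AmeLotMcC:13}. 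The only point to flag is the one you already anticipate: the Crofton identity you invoke is logically downstream of the kinematic formula \eqref{eq:kin-form} together with Gauss--Bonnet, so in a self-contained treatment you must ensure the dependency is acyclic (it is, since you establish Gauss--Bonnet first and the two-cone Crofton formula does not rely on Interlacing).
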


\subsubsection{Kinematic formulas}
\label{sec:kinematic-formulas}

For any two cones \(C,C' \in \cC_d \), the classical conic kinematic formula states~\cite[Eq.~(6.61)]{SchWei:08}
\begin{equation}\label{eq:kin-form}
  \Expect[ v_k( C \cap \mtx Q D )] = v_{d+k}(C \times D) \qtq{for} k = 1,\dotsc,d-1,d,
\end{equation}
where the expectation is over the random rotation \(\mtx Q\).  Note that our indices are shifted compared to the reference, and we have simplified the expression using the product rule~\eqref{eq:intv-prod}.   Using an inductive argument, we can extend this formula to the product of a finite number of cones.
\begin{fact}[Iterated kinematic formula]
  Let \(C_1,\dotsc,C_{n-1},C_n\in \cC_d\) be closed, convex cones and suppose that \(\mtx Q_1,\dotsc,\mtx Q_{n-1},\mtx Q_{n}\) are i.i.d.\ random rotations.  Then for all \(k=1,\dotsc,d-1,d\), we have
  \begin{equation}
    \label{eq:iter-conic-kin}
    \Expect[v_k(\mtx Q_1 C_1 \cap \dotsb \cap\mtx Q_{n-1}C_{n-1}\cap \mtx Q_n C_n)] = 
    v_{(n-1)d +k}(C_1\times\dotsb\times C_{n-1} \times C_n).
  \end{equation}
\end{fact}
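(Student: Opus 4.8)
The plan is to prove the iterated kinematic formula~\eqref{eq:iter-conic-kin} by induction on the number $n$ of cones, using the two-cone kinematic formula~\eqref{eq:kin-form} as the base case and the product rule~\eqref{eq:intv-prod} for intrinsic volumes to collapse the bookkeeping at each step. The base case $n=1$ is trivial (it reads $\Expect[v_k(\mtx Q_1 C_1)] = v_k(C_1)$, which holds because intrinsic volumes are rotation invariant), and $n=2$ is exactly~\eqref{eq:kin-form} after relabeling $\mtx Q_1 C_1 \cap \mtx Q_2 C_2$ as $\mtx Q C \cap \mtx Q' D$ and using the rotation invariance of the Haar measure to replace the pair $(\mtx Q_1,\mtx Q_2)$ by $(\Id, \mtx Q_1^\transp \mtx Q_2)$, which is again Haar distributed.

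For the inductive step, suppose~\eqref{eq:iter-conic-kin} holds for $n-1$ cones. First I would condition on $\mtx Q_n$ and apply the two-cone formula~\eqref{eq:kin-form} to the pair of cones $\mtx Q_1 C_1 \cap \dotsb \cap \mtx Q_{n-1} C_{n-1}$ and $C_n$, with the random rotation being $\mtx Q_n$ (after absorbing rotations using invariance of the Haar measure, exactly as in the base case). This gives
\begin{equation*}
  \Expect\bigl[v_k(\mtx Q_1 C_1 \cap \dotsb \cap \mtx Q_n C_n) \mid \mtx Q_1,\dotsc,\mtx Q_{n-1}\bigr]
  = v_{d+k}\bigl((\mtx Q_1 C_1 \cap \dotsb \cap \mtx Q_{n-1} C_{n-1}) \times C_n\bigr).
\end{equation*}
Next I would take the expectation over the remaining rotations and push it inside the intrinsic volume using the product rule~\eqref{eq:intv-prod}: since $v_{d+k}(A \times C_n) = \sum_{i+j=d+k} v_i(A) v_j(C_n)$ and $C_n$ is fixed, linearity of expectation gives
\begin{equation*}
  \Expect\bigl[v_{d+k}(A \times C_n)\bigr] = \sumnl_{i+j = d+k} \Expect[v_i(A)]\, v_j(C_n),
\end{equation*}
where $A = \mtx Q_1 C_1 \cap \dotsb \cap \mtx Q_{n-1} C_{n-1}$. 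Applying the inductive hypothesis to each term $\Expect[v_i(A)] = v_{(n-2)d + i}(C_1 \times \dotsb \times C_{n-1})$ and then reassembling the sum via~\eqref{eq:intv-prod} (shifting indices by $(n-2)d$) recovers $v_{(n-2)d + d + k}(C_1 \times \dotsb \times C_n) = v_{(n-1)d + k}(C_1 \times \dotsb \times C_n)$, which is the claim.

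The main subtlety to handle carefully is the index range: the two-cone formula~\eqref{eq:kin-form} is stated for $k \ge 1$, so in the inductive step I must check that only terms $v_i(A)$ with $i \ge 1$ actually contribute, or invoke the convention $v_i(\cdot)=0$ for $i$ outside $\{0,\dotsc,d\}$ together with the fact that the intersection of $n-1$ cones lives in $\R^d$. Concretely, the term $i=0$ in the product expansion pairs with $j = d+k$, and $v_{d+k}(C_n)=0$ since $C_n \subset \R^d$ and $d+k > d$; so the $i=0$ term drops out and~\eqref{eq:kin-form} applies to every surviving term. This index-tracking is the only place where care is needed; the rest is a routine combination of the product rule and linearity of expectation.
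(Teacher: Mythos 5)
Your proposal is correct, and it follows essentially the same route the paper takes: the paper defers the proof to~\cite[Prop.~5.12]{McC:13}, which is exactly this induction on the number of cones, using the two-cone formula~\eqref{eq:kin-form} conditionally together with the product rule~\eqref{eq:intv-prod}, with the same index bookkeeping (terms with $j>d$ vanishing) that you flag. Your handling of the $k\ge 1$ restriction and the rotation-invariance reduction in the base case is the right way to close the only delicate points.
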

The details are straightforward, so we refer to~\cite[Prop.~5.12]{McC:13} for the proof.    See~\cite[Thm.~5.13]{SchWei:08} for the analogous proof in the Euclidean setting.  A related fact is the following \emph{Crofton formula} for the probability that convex cones intersect nontrivially.
\begin{fact}[Iterated Crofton formula]
  \label{cor:iterated-croft} Let \(C_1,  \dotsc, C_{n-1}, C_n \in \cC_d\) be closed, convex cones, at least one of which is not a subspace.  Suppose \(\mtx Q_1,\dotsc,\mtx Q_{n-1}, \mtx Q_n\in \orth{d}\) are independent random rotations. Then
  \begin{equation}
    \label{eq:iter-crofton}
    \Prob\Bigl\{\mtx Q_1 C_1 \cap \dotsb\cap \mtx Q_{n-1}C_{n-1} \cap \mtx Q_n C_n \ne \{\zerovct\}\Bigr\}= 2 h_{(n-1)d+1}(C_1\times \dotsb \times C_{n-1}\times C_n).
  \end{equation}
\end{fact}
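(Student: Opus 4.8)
\noindent The plan is to rewrite the event $\bigl\{\mtx Q_1 C_1 \cap \dotsb \cap \mtx Q_n C_n \ne \{\zerovct\}\bigr\}$ as an expectation of conic intrinsic volumes and then invoke the iterated kinematic formula~\eqref{eq:iter-conic-kin} term by term. Write $K := \mtx Q_1 C_1 \cap \dotsb \cap \mtx Q_n C_n$, and for a convex cone $C$ let $\lin(C) := C \cap (-C)$ denote its lineality space. The Gauss--Bonnet relation~\eqref{eq:guass-bonnet} gives $2 h_1(K) = 1$ when $K$ is not a subspace and $2 h_1(K) = 0$ when $K = \{\zerovct\}$, so $2 h_1(K) = \pInd\{K \ne \{\zerovct\}\}$ \emph{unless} $K$ happens to be a linear subspace of dimension at least one. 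Thus the crux of the proof is to show that this exceptional event has probability zero; everything else is bookkeeping with intrinsic volumes.

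\smallskip
\noindent\textbf{Step 1: the exceptional event is null.} Since some $C_{i_0}$ is not a subspace, condition on the rotations $\mtx Q_j$ with $j \ne i_0$; writing $A := \bigcap_{j \ne i_0} \mtx Q_j C_j$ for the resulting fixed cone, $B := C_{i_0}$, and $\mtx Q := \mtx Q_{i_0}$, it suffices to show that $\mtx Q B \cap A$ is almost surely not a positive-dimensional subspace, for a Haar rotation $\mtx Q$ and fixed cones $A$ (arbitrary) and $B$ (not a subspace). Suppose $L := \mtx Q B \cap A$ is a subspace with $\dim L \ge 1$. As a subspace contained in both $\mtx Q B$ and $A$, $L$ lies in their lineality spaces $\mtx Q \lin(B)$ and $\lin(A)$; in particular $\mtx Q \lin(B) \cap \lin(A) \ne \{\zerovct\}$, and, since $\mtx Q B \cap A = L \subseteq \mtx Q \lin(B)$, also $\mtx Q\bigl(B \setminus \lin(B)\bigr) \cap A = \emptyset$.

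\smallskip
\noindent\textbf{Step 2: a dimension count.} If $\dim \lin(B) + \dim \lin(A) \le d$, then a random rotation puts $\mtx Q\lin(B)$ in general position with $\lin(A)$, so $\mtx Q\lin(B) \cap \lin(A) = \{\zerovct\}$ almost surely, and the exceptional event is null. If instead $\dim\lin(B) + \dim\lin(A) \ge d + 1$, fix a unit vector $b_0 \in B \cap \lin(B)^\perp$, which is nonzero because $B$ is not a subspace; then $\lin(B) + \R_{>0} b_0 \subseteq B \setminus \lin(B)$ (for $l \in \lin(B)$ and $t > 0$, $l + t b_0 \in B$ yet avoids $\lin(B)$ since $b_0 \notin \lin(B)$). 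General position of the random subspace $\mtx Q\bigl(\lin(B) \oplus \R b_0\bigr)$ relative to $\lin(A)$ shows that, almost surely, $N' := \mtx Q\bigl(\lin(B) \oplus \R b_0\bigr) \cap \lin(A)$ has dimension one larger than $N := \mtx Q\lin(B) \cap \lin(A)$; since $N' \subseteq \lin(A)$, it cannot be contained in $\mtx Q\lin(B)$ (else $N' \subseteq N$), so some $v \in N'$ has $\langle v, \mtx Q b_0 \rangle \ne 0$. Replacing $v$ by $-v$ if necessary---both lie in $N' \subseteq \lin(A) \subseteq A$---we obtain a point of $\mtx Q\bigl(\lin(B) + \R_{>0} b_0\bigr) \cap A \subseteq \mtx Q\bigl(B \setminus \lin(B)\bigr) \cap A$, contradicting Step 1. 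Hence the exceptional event is null in this case as well.

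\smallskip
\noindent\textbf{Step 3: conclude.} Now $\pInd\{K \ne \{\zerovct\}\} = 2 h_1(K)$ almost surely, so taking expectations and applying~\eqref{eq:iter-conic-kin} term by term (the sum over $k$ being finite),
\begin{equation*}
  \Prob\bigl\{ K \ne \{\zerovct\}\bigr\} = 2\,\Expect[h_1(K)] = 2\sumnl_{\substack{k \ge 1 \\ k \text{ odd}}} v_{(n-1)d + k}(C_1 \times \dotsb \times C_n) = 2\, h_{(n-1)d+1}(C_1 \times \dotsb \times C_n),
\end{equation*}
where the last equality is the definition~\eqref{eq:half-tail-defn} of the half-tail functional after the substitution $m = (n-1)d + k$. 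The main obstacle is Steps 1--2: one cannot simply assert that $K$ is generically not a subspace (this fails when the lineality spaces are large), and the non-subspace hypothesis must be used to exhibit a ``half-subspace'' inside $B \setminus \lin(B)$ that meets $A$ almost surely.
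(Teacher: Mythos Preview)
Your proof is correct and follows exactly the strategy the paper indicates: apply Gauss--Bonnet to the random intersection $K$, invoke the iterated kinematic formula~\eqref{eq:iter-conic-kin} termwise, and dispatch the technical obstacle that $K$ could be a positive-dimensional subspace. The paper defers that obstacle to~\cite[Lem.~5.13]{McC:13}, whereas you supply a clean self-contained dimension-count argument via the lineality spaces; the rest is identical.
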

\noindent The proof, which appears in~\cite[Cor.~5.13]{McC:13}, simply combines the Gauss--Bonnet formula~\eqref{eq:guass-bonnet} with the kinematic formula~\eqref{eq:iter-conic-kin}.  The only obstacle involves verifying that the intersection of cones is  almost surely  not an odd-dimensional subspace so long as one of the cones in the intersection is not a subspace.  This technical point is proved in detail in~\cite[Lem.~5.13]{McC:13}.

\subsection{Proof of the approximate kinematic formula}
\label{sec:proof-appr-kinem-1}

The proof of Theorem~\ref{thm:mult-bd} begins with a  concentration inequality  for tail functionals. 
\begin{proposition}[Concentration of tail functionals] \label{prop:intv-conc} Let \(C_1, \dotsc,C_{n-1},C_n\in \cC_d\) and let \(\Omega\) and \(\theta \) be as  in~\eqref{eq:omega-theta}.
  Then for any \(\lambda >0\) and  integer \(k\ge\Omega + \lambda\), we have
  \begin{equation}
    t_{k}(C_1\times \dotsb \times C_{n-1}\times C_n) \leq  p_\theta(\lambda); \label{eq:tk+} 
  \end{equation}
\end{proposition}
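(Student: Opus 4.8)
The plan is to translate the tail functional of the product cone into an upper-tail probability for a sum of \emph{independent} integer-valued random variables, and then run a Bernstein--Chernoff estimate whose variance parameter is precisely $\theta^2$. First I would introduce, for each index $i$, the intrinsic volume random variable $V_i$ of $C_i$, namely the random variable on $\{0,1,\dotsc,d\}$ with $\Prob\{V_i = k\} = v_k(C_i)$; this is a genuine probability distribution by~\eqref{eq:intv-sum}, and $\Expect[V_i] = \sdim(C_i)$ via the identity $\sdim(C) = \sumnl_k k\, v_k(C)$ (see~\cite{AmeLotMcC:13}). Taking the $V_i$ mutually independent, the product rule~\eqref{eq:intv-prod} says exactly that the intrinsic volumes of $C_1 \times \dotsb \times C_n$ form the convolution of those of the factors, so $S := V_1 + \dotsb + V_n$ has $\Prob\{S = k\} = v_k(C_1\times\dotsb\times C_n)$. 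Hence
\[
  t_k(C_1 \times \dotsb \times C_n) = \Prob\{S \ge k\}
  \qtq{and}
  \Expect[S] = \sumnl_{i=1}^n \sdim(C_i) = \Omega .
\]
Since $S$ is integer valued and $k \ge \Omega + \lambda$, we have $\{S \ge k\} \subseteq \{S \ge \Omega + \lambda\}$, so it suffices to bound $\Prob\{S \ge \Omega + \lambda\}$.

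The engine of the proof is a per-cone sub-gamma bound on the moment generating function of the centered variable $V_i - \sdim(C_i)$. This is the single-cone concentration of intrinsic volumes established in~\cite{AmeLotMcC:13}, together with the sharpened constants from~\cite{McCTro:13}: there are absolute constants so that, for $\zeta \ge 0$ below a fixed threshold,
\[
  \log \Expect \econst^{\zeta(V_i - \sdim(C_i))} \le \frac{\zeta^2 \, \sigma_i^2}{1 - 2\zeta/3},
  \qtq{where}
  \sigma_i^2 := \sdim(C_i) \wedge \sdim(C_i^\polar).
\]
The reason the variance proxy is the \emph{minimum} $\sdim(C_i)\wedge\sdim(C_i^\polar)$, rather than simply $\sdim(C_i)$, is the polarity relation $v_k(C) = v_{d-k}(C^\polar)$ of~\eqref{eq:intv-polar}: it lets one trade the upper tail of $V_i$ for the lower tail of the corresponding variable for $C_i^\polar$, keeping whichever of the two statistical dimensions is smaller. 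This per-cone estimate rests on the structural relations among intrinsic volumes --- interlacing~\eqref{eq:interlacing} and Gauss--Bonnet~\eqref{eq:guass-bonnet} --- rather than on soft probabilistic arguments.

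Now I would assemble the pieces. By independence the moment generating function of $S - \Omega$ factors, so
\[
  \log \Expect \econst^{\zeta(S - \Omega)} = \sumnl_{i=1}^n \log \Expect \econst^{\zeta(V_i - \sdim(C_i))} \le \frac{\zeta^2}{1 - 2\zeta/3}\sumnl_{i=1}^n \sigma_i^2 = \frac{\zeta^2 \, \theta^2}{1 - 2\zeta/3},
\]
using $\theta^2 = \sumnl_i \sdim(C_i)\wedge\sdim(C_i^\polar)$ from~\eqref{eq:omega-theta}. Markov's inequality applied to $\econst^{\zeta(S - \Omega)}$ then gives $\Prob\{S \ge \Omega + \lambda\} \le \exp\bigl(-\zeta\lambda + \zeta^2\theta^2/(1 - 2\zeta/3)\bigr)$ for admissible $\zeta$; optimizing over $\zeta$ and applying the standard (mildly lossy) simplification of a sub-gamma exponent produces exactly the concentration function $p_\theta(\lambda)$ of~\eqref{eq:pclambda}. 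Combined with the reduction from the first paragraph, this gives $t_k(C_1 \times \dotsb \times C_n) \le p_\theta(\lambda)$, which is~\eqref{eq:tk+}.

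The hard part is the per-cone moment generating function bound of the second paragraph: extracting the sharp variance proxy $\sdim(C_i) \wedge \sdim(C_i^\polar)$ together with a scale constant compatible with the $\lambda/3$ term of $p_\theta$ is delicate, whereas the convolution structure, the independence argument, and the Chernoff optimization are routine. One should also record the reassuring consistency in degenerate cases: if some $C_i$ is a linear subspace, then $V_i$ is constant by~\eqref{eq:intv-subsp}, its contribution to the moment generating function vanishes, and the bound is unaffected.
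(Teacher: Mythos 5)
Your proposal is correct and follows essentially the same route as the paper's proof: both convert the tail functional into the upper tail of an independent sum of intrinsic-volume random variables via the product rule, invoke the per-cone exponential moment bound with variance proxy \(\sdim(C_i)\wedge\sdim(C_i^\polar)\) from~\cite{McCTro:13}, and finish with the Laplace transform method and monotonicity of the tail functionals in \(k\).
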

\begin{proof}
  We follow the  argument of~\cite[Cor.~5.2]{McCTro:13}.   For any cone \(C \in \cC_d\),  we define the intrinsic volume random variable \(V_{C}\) on \(\{0,1,\dotsc, d\}\) by its  distribution:
  \begin{equation*}
    \Prob \{V_{C} = i\} = v_i(C).
  \end{equation*}
  The mean value of \(V_C\) is  equal to the statistical dimension, that is, 
  \begin{math}
    \Expect[V_C] =\sdim(C)
  \end{math}%
  ~\cite[Sec.~4.2]{McCTro:13}.  The product rule~\eqref{eq:intv-prod} for intrinsic volumes implies \(V_{C_1\times \dotsb \times C_{n-1}\times C_n}=  \sumnl_{i=1}^nV_{C_i}\) because the distribution of a sum of independent random variables is equal to the convolution of the distributions.  In particular,
  \begin{equation*}
    \Expect[ V_{C_1\times\dotsc\times C_{n-1}\times C_n}] = \sdim(C_1\times\dotsb\times C_{n-1}\times C_n) = \sumnl_{i=1}^n \sdim(C_i) =   \Omega.
  \end{equation*}
  
  With these facts in hands, we can complete the proof by tracing  the argument leading to~\cite[Cor.~5.2]{McCTro:13}.  The exponential moment of \(V_{C_1\times\dotsb\times C_{n-1}\times C_n}\)  factors as
  \begin{equation}\label{eq:moment-bd}
    \Expect \econst^{\zeta (V_{C_1\times \dotsc,\times C_n} - \Omega)} = \prod\nolimits_{i=1}^n \Expect \econst^{\zeta(V_{C_i} - \sdim(C_i))} \le \exp\left(\frac{\zeta^2 \theta^2}{1-2 \abs{\zeta}/3}\right) \qtq{for any} \abs{\zeta}<3/2,
  \end{equation}
  where the inequality follows from~\cite[Thm.~4.8]{McCTro:13} and the bound
  \begin{equation*}
    \frac{\econst^{2\zeta} - 2 \zeta - 1}{2} \le \frac{ \zeta^2}{1-2|\zeta|/3} \qtq{for all} \abs{\zeta}\le \frac{3}{2}.
  \end{equation*}
  Combining the moment bound~\eqref{eq:moment-bd} with the Laplace transform method  under the choice \(\zeta = \lambda/(2\theta^2 +2 \lambda/3)\) provides 
  \begin{equation*}
    t_{\lceil  \Omega + \lambda\rceil}(C_1 \times \dotsc \times C_n)  =  \Prob\{V_{C_1\times \cdots\times C_n} \ge \Omega + \lambda\} \le \exp\left(\frac{-\lambda^2/4}{\theta^2 + \lambda /3}\right).
  \end{equation*}
  The first equality above is the definition~\eqref{eq:tail-defn} of the tail functional.  Inequality~\eqref{eq:tk+}  follows because the integer \(k \ge \Omega +\lambda\) and the tail functionals are decreasing in \(k\). 
\end{proof}

\begin{proof}[Proof of Theorem~\ref{thm:appr-kin}]
  A simple dimension-counting argument shows that we incur no loss by assuming that at least one of the cones \(C_1,\dotsc,C_{n-1},C_n\) is not a subspace.  Indeed,  recall  from linear algebra that two generically oriented subspaces intersect nontrivially with probability zero if the sum of their dimensions is less or equal to the ambient dimension, but they intersect with probability one if the sum of their dimensions is greater than the ambient dimension.  When all of the cones are subspaces, the term \(\Omega\) is just the sum of the dimensions of the subspaces \(C_i\).  Evidently, when all of the cones are subspaces, the implications~\eqref{eq:appr-kin-upper} and~\eqref{eq:appr-kin-lower} hold with respective probability bounds zero and one.

 Suppose then that at least one of the cones is not a subspace.  For \(\Omega +m \le n d-\lambda\), the iterated kinematic formula~\eqref{eq:iter-crofton} bounds the probability of interest by
  \begin{align}
    \Prob\bigl\{\mtx Q_1 C_1\cap \dotsb\cap \mtx Q_n C_n \cap \mtx Q_{n+1} L \ne \{\zerovct\}\bigr\} 
    &= 2 h_{nd+1}(C_1 \times \dotsb\times C_{n}\times L)\label{eq:13} \\
    & \le t_{nd}(C_1 \times \dotsb\times C_{n}\times L),\notag 
  \end{align}
  where the inequality follows from the interlacing result~\eqref{eq:interlacing}.  Equation~\eqref{eq:prod-w-subsp}  and the upper tail bound~\eqref{eq:tk+} provides
  \begin{equation*}
    t_{nd}(C_1 \times \dotsb\times C_{n}\times L) = t_{nd-m}(C_1\times\dotsb\times C_{n})\le  p_\sigma(\lambda).
  \end{equation*}
  This completes the first claim~\eqref{eq:appr-kin-upper}.  

  The second claim follows along similar lines.  Suppose that \(\Omega +m \ge nd+\lambda\).  Combining the iterated kinematic formula~\eqref{eq:13} with the lower interlacing inequality~\eqref{eq:interlacing}, we see
  \begin{align}
    \Prob\bigl\{\mtx Q_1 C_1\cap \dotsb\cap \mtx Q_n C_n \cap \mtx Q_{n+1} L \ne \{\zerovct\}\bigr\} 
    &
    \ge t_{nd+1}(C_1\times \dotsb \times C_n\times L) \notag
     \\ 
     &= 1- t_d(C_1^\polar\times \dotsb \times C_{n}^\polar \times L^\perp),\label{eq:14}
  \end{align}
  where the final relation is~\eqref{eq:tail-dual}.  Using~\eqref{eq:prod-w-subsp} to shift the index of the tail functional, we find
  \begin{equation}\label{eq:15}
      t_d(C_1^\polar\times \dotsb \times C_{n}^\polar \times L^\perp) = t_m(C_1^\polar \times \dotsb\times C_n^\polar) \le p_\sigma(\lambda).
  \end{equation}
  The final inequality follows from the approximate kinematic formula~\eqref{eq:tk+}, which applies because 
  \begin{equation*}
    m \ge (nd - \Omega) + \lambda = \sumnl_{i=1}^n \sdim(C_i^\polar) + \lambda
  \end{equation*}
  by assumption and the polarity formula~\eqref{eq:sdim-polar}.  The final claim~\eqref{eq:appr-kin-lower} follows by combining~\eqref{eq:14} and~\eqref{eq:15}.
\end{proof}

\section{Degenerate case of the main theorem}
\label{sec:degenerate-case}

\begin{proof}[Proof of Theorem~\ref{thm:mult-bd} for the degenerate case.]
  We now consider the degenerate situation where all except possibly
  one of the cones \((D_i)_{i=1}^n\) is equal to the trivial cone
  \(\{\zerovct\}\). In this case, the restrictions in
  Lemma~\ref{lem:polar-exact} preclude using the polar optimality
  condition~\eqref{eq:polar-exact}. Instead, we study the success
  probability~\eqref{eq:9} directly.

  By our assumption, there is an index \(i_*\in\{1,\dotsc,n,n-1\}\)
  such that \begin{equation*}
  D_i = \{\zerovct\} \qtq{for all} i \in \{1,\dotsc,n,n+1\}\setminus\{ i_*\}.
\end{equation*}
This implies that 
\begin{equation*}
\text{either}\quad \mtx Q_i  D_i =\{\zerovct \} \quad \text{or} \quad \sumnl_{j\ne i } \mtx Q_j D_j = \{\zerovct\}
\end{equation*}
for every \(i=1,\dotsc,n,n+1\). Therefore, the
probability~\eqref{eq:9} is equal to one, so that \((\vct
x_i^\nat)_{i=1}^n\) is almost surely the unique optimal point of the
constrained demixing method~\eqref{eq:const-dmx} by
Lemmas~\ref{lem:achievable-stab} and~\ref{lem:rand-nullspace}.
Since~\eqref{eq:src} holds with the same probability
that~\eqref{eq:erc} holds under the random orientation model
(Lemma~\ref{lem:stab-exact-equiv}), we only need to verify that the
left-hand side of the implication~\eqref{eq:fail-mult-demix} never
holds.

By definition of \(\Delta\) and \(i_*\), we have \begin{equation}
  \Delta + d-m = \sumnl_{i=1}^{n+1} \sdim(D_i) =  \sdim(D_{i_*})  \le d
\end{equation}
because the statistical dimension is always less than the ambient dimension.  Rearranging, we find
\begin{equation*}
  m \ge \Delta > \Delta - \lambda_*
\end{equation*}
because \(\lambda_*>0\).  Hence, the left-hand side of the implication~\eqref{eq:fail-mult-demix} never holds.  
\end{proof}

\section{Numerical details}
\label{sec:numerical-details}

This section provides some specific numerical details of the experiments described in Section~\ref{sec:applications}.  

\paragraph*{Numerical environment.}
\label{sec:numerical-aspects}

All computations are performed using the \textsc{Matlab} computational platform. We generate i.i.d. rotations from the orthogonal group using the method described in~\cite{Mez:07}.  We solve~\eqref{eq:const-dmx} numerically using the \texttt{CVX} package~\cite{GraBoy:08,GraBoy:10} for  \textsc{Matlab}. All numerical precision settings are set at the default.   The empirical  level sets appearing in Figure~\ref{fig:sparse-sparse-sign} are determined using the \texttt{contour} function.

\paragraph*{Computing the statistical dimension.}\label{sec:comp-cons}  In order to draw the yellow curves in Figure~\ref{fig:sparse-sparse-sign}, we make use known statistical dimension computations.   The statistical dimension \(\sdim(\linf{\cdot},\vct x)= d/2\) whenever \(\vct x \in \{\pm 1\}^d\)  because the descent cone \(\Desc(\linf{\cdot},\vct x_3^\nat)\) is isometric to the positive orthant \(\R^d_+= \{\vct x \mid x_i\ge 0 \;\forall\; i = 1,\dotsc,d-1,d\}\) that has statistical dimension \(\sdim(\R^d_+) = d/2\)~\cite[Sec.~4.2]{AmeLotMcC:13}.

We estimate the statistical dimension of the descent cone of the \(\ell_1\) norm at sparse vectors  by solving the implicit formulas appearing in~\mbox{\cite[Eqs.~(4.12) \& (4.13)]{AmeLotMcC:13}} using \textsc{Matlab}'s \texttt{fzero} function.  These equations define a function  \(\psi\colon (0,1)\to (0,1)\) that satisfies
\begin{equation}\label{eq:psi-bd}
  \psi(k/d) - \frac{2}{\sqrt{kd}} \le \frac{\sdim(\lone{\cdot},\vct x)}{d} \le \psi(k/d)
\end{equation}
for every  vector \(\vct x\in \R^d\) with \(k\) nonzero elements.  The function \(d\cdot \psi\) thus provides and accurate approximation to the statistical dimension \(\sdim(\lone{\cdot},\vct x)\).

\paragraph*{Sparse, sparse, and sign} The experiment seen in Figure~\ref{fig:sparse-sparse-sign} [left] was conducted using the following procedure. Fix the ambient dimension \(d=200\) and for each sparsity \(k_1,k_2\in \{1,\dotsc,59,60\}\), we repeat the following steps \(25\) times:
\begin{enumerate}[itemsep=-2.5pt]%
\item Draw \(\mtx U_1\), \(\mtx U_2\), and \(\mtx U_3\) i.i.d.\ from the orthogonal group \(\orth{d}\).
\item For \(i=1,2\), generate independent sparse vectors \(\vct x_i\) with \(k_i\) nonzero elements by selecting the support  uniformly at random and setting each nonzero element \(\{\pm 1\}\) independently and with equal probability.
\item Draw \(\vct x_3^\nat \) by choosing each elements  from \(\{\pm 1\}\) independently and with equal probability.
\item Compute \(\vct z_0 = \mtx U_1 \vct x_1^\nat +\mtx U_2 \vct x_2^\nat +\mtx U_3 \vct x_3^\nat \).
\item Solve~\eqref{eq:sss-demix} for an optimal point \((\hvct x_i)_{i=1}^3\) using \texttt{CVX}.
\item Declare success if \(\linf{\hvct x_i -\vct x_i^\nat}< 10^{-5}\) for each \(i=1,2,3\).
\end{enumerate}
Figure~\ref{fig:sparse-sparse-sign} [left] shows the results of this experiment. The colormap indicates the empirical probability of success for each value of \(k_1\) and \(k_2\).  To compare this experiment to the guarantees of Theorem~\ref{thm:mult-bd}, we plot the curve \textsl{(yellow)} in \((k_1,k_2)\)-space such that
\begin{equation*}
 d\,\psi\left(\tfrac{k_1}{d}\right) + d\,\psi\left(\tfrac{k_2}{d}\right) = \frac{d}{2}
\end{equation*}
where \(\psi(k/d)\approx  \sdim(\lone{\cdot},\vct x)\) at \(k\)-sparse vectors \(\vct x\) (cf.~\eqref{eq:psi-bd}). Recalling that \(\sdim(\linf{\cdot},\vct x_3^\nat)=d/2\), we see that  the yellow curve in Figure~\ref{fig:sparse-sparse-sign} [left] shows the approximate center of the phase transition between success and failure predicted by Theorem~\ref{thm:mult-bd}.  It matches the empirical \(50\%\) success level set \textsl{(red)} very closely.

\paragraph*{Undersampled sparse and sparse}\label{sec:unders-sparse-sparse-1}
We fix the ambient dimension \(d=200\) and for each pair of sparsity levels \(k_1,k_2\in \{1,\dotsc,39,40\}\) and measurement number \(m \in \{25,50,75,100\}\), we repeat the following procedure \(35\) times:
\begin{enumerate}[itemsep=-2.5pt]
\item Draw the matrix \(\mtx A\in \R^{m\times d}\) with i.i.d.\ standard Gaussian entries and the rotations~\(\mtx U_1,\mtx U_2 \in \orth{d}\) i.i.d.\ from the orthogonal group~\(\orth{d}\).
\item Generate independent sparse vectors \(\vct x_1^\nat\) and \(\vct x_2^\nat\)  with \(k_1\) and \(k_2\) nonzero elements using the same method as above.
\item Compute \(\vct z_0 = \mtx A(\mtx U_1 \vct x_1^\nat + \mtx U_2 \vct x_2^\nat)\).
\item Solve~\eqref{eq:ssu} for an optimal point \((\hvct x_i)_{i=1}^2\) using \texttt{CVX}.
\item Declare success if \(\linf{\hvct x_i -\vct x_i^\nat}< 10^{-5}\) for \(i=1,2\).
\end{enumerate}
We present the results of this experiment in Figure~\ref{fig:sparse-sparse-sign} [right]. The colormap denotes the empirical probability of success for different values of \(k_1\) and \(k_2\), and each subpanel displays the results for a different value of \(m\).  Each subpanel also displays the curve \textsl{(yellow)} where
\begin{equation*}
  d\,\psi\left(\tfrac{k_1}{d}\right) + d\,\psi\left(\tfrac{k_2}{d}\right) = m.
\end{equation*}
The bound~\eqref{eq:psi-bd} guarantees that this curve is close to the theoretical phase transition predicted by Theorem~\ref{thm:mult-bd}. Again, we find very close agreement between the yellow curve and the empirical \(50\%\) success level set.

\section*{Acknowledgments}

MBM thanks Prof.\ Leonard Schulman for helpful conversations about this research. This research was supported by ONR awards N00014-08-1-0883 and N00014-11-1002, AFOSR award FA9550-09-1-0643, and a Sloan Research Fellowship.

\bibliographystyle{alpha} 
\footnotesize
\bibliography{bibliography}

\end{document}